\documentclass[a4paper,12pt]{article}
\usepackage[left=2.8 cm,top=2.0cm,bottom=3.4cm,right=2.8cm]{geometry}
\usepackage{amsfonts,amssymb,amsthm,amsmath}
\usepackage{mathrsfs}
\usepackage{bbm}
\usepackage{fontspec}

\usepackage{xcolor}

\definecolor{green(munsell)}{rgb}{0.0, 0.66, 0.47}
\definecolor{BlueGreenn}{rgb}{0.3,0.5,0.8}
\definecolor{DB}{rgb}{0.3,0.3,0.3}
\definecolor{DOr}{rgb}{0.7,0.3,0.3}
\definecolor{DGr}{rgb}{0.3,0.7,0.3}
\definecolor{DBl}{rgb}{0.1,0.3,0.5}
\definecolor{arylideyellow}{rgb}{0.91, 0.84, 0.42}
\definecolor{burntorange}{rgb}{0.8, 0.33, 0.0}
\definecolor{chromeyellow}{rgb}{1.0, 0.65, 0.0}
\definecolor{niceyellow}{RGB}{255,204,0}
\definecolor{lightgray}{rgb}{0.96,0.96,0.96}
\definecolor{darkgreen}{rgb}{0.1,0.45,0.06}
\definecolor{darkred}{rgb}{0.65,0.1,0.1}
\definecolor{lightred}{rgb}{1,0.9,0.9}
\definecolor{darkblue}{rgb}{0.1,0.1,0.5}
\definecolor{lightblue}{rgb}{0.88,0.92,0.99}
\definecolor{lightorange}{rgb}{1,0.95,0.89}
\definecolor{lightyellow}{rgb}{1,1,0.87}
\definecolor{burntlightorange}{RGB}{245,235,211}
\definecolor{skde}{RGB}{0,80,158}
\definecolor{skdelight}{RGB}{198,215,231}
\definecolor{skdeverylight}{RGB}{228,245,251}
\definecolor{skdelys}{RGB}{122,178,220}
\definecolor{pink}{RGB}{232,88,208}

\usepackage[hypertexnames=true, citecolor = burntorange, colorlinks = true, linkcolor = BlueGreenn, pdffitwindow = false, urlbordercolor = white,pdfborder={0 0 0}]{hyperref}%

\usepackage[T1]{fontenc}
\usepackage{microtype}

\usepackage{color}
\usepackage[utf8]{inputenc}

\usepackage{framed}
\usepackage{centernot}

\usepackage{tikz-cd}
\usepackage{tkz-euclide}

\usepackage[shortlabels]{enumitem}
\usepackage{dirtytalk}
\usepackage{subcaption}
\usepackage{float}
\usepackage{physics}
\usepackage{tcolorbox}
\usepackage{tikz}
\usetikzlibrary{
  math,
  bbox,
  knots,
  hobby,
  decorations.pathreplacing,
  shapes.geometric,
  calc
}

\usepackage{comment}

\usepackage{pgfplots}
\pgfplotsset{compat=1.18} 

\usepackage{mathtools}
\mathtoolsset{showonlyrefs=true} 

\usepackage{ytableau}
\ytableausetup{boxsize=0.5em,centertableaux}

\usepackage[normalem]{ulem}

\setlist{nolistsep}
\usepackage{titlesec}
\titleformat{\section}{\large\bfseries\color{DBl}}{\thesection}{1em}{}
\titleformat{\subsection}[runin]
{\normalfont\normalsize\bfseries\color{DBl}}{\thesubsection}{0.5em}{}[.]

\titlespacing\subsection{0pt}{12pt plus 4pt minus 2pt}{4pt plus 2pt minus 2pt}

\makeatletter
\patchcmd{\@maketitle}
  {\@title}
  {\color{\@titlecolor}\@title}
  {}{}
\newcommand{\@titlecolor}{DBl}
\newcommand{\titlecolor}[1]{\renewcommand{\@titlecolor}{#1}}
\makeatother

\numberwithin{equation}{section}

\newtheorem{theorem}{Theorem}[section]
\newtheorem{proposition}[theorem]{Proposition}
\newtheorem{lemma}[theorem]{Lemma}

\newtheorem{corollary}[theorem]{Corollary}

\newtheorem{example}[theorem]{Example}

\theoremstyle{definition}
\newtheorem{definition}[theorem]{Definition}

\usepackage{bm}       
\usepackage{fontspec}

\usepackage{euscript} 

\def\cc{\operatorname{c}}
\newcommand{\Std}{\operatorname{Std}}

\newcommand{\col}{\,:\,}
\newcommand{\perpr}{{\perp_{\Rl}}}

\usepackage{amsmath}

\DeclareMathOperator*{\slim}{s-lim}

\DeclareMathOperator*{\newspan}{span}

\DeclareMathOperator{\imag}{Im}
\let\real\relax
\DeclareMathOperator{\real}{Re}

\DeclareMathOperator{\Aut}{Aut}

\DeclareMathOperator{\supp}{supp}

\newcommand{\Cl}{\mathbb{C}}
\newcommand{\Rl}{\mathbb{R}}
\newcommand{\Nl}{\mathbb{N}}

\newcommand{\B}{\mathcal{B}}

\newcommand{\Hil}{\mathcal{H}}

\newcommand{\CA}[0]{\mathcal{A}} \newcommand{\CB}[0]{\mathcal{B}}

\newcommand{\CE}[0]{\mathcal{E}} \newcommand{\CF}[0]{\mathcal{F}}

\newcommand{\CK}[0]{\mathcal{K}} \newcommand{\CL}[0]{\mathcal{L}}
\newcommand{\CM}[0]{\mathcal{M}} \newcommand{\CN}[0]{\mathcal{N}}
\newcommand{\CO}[0]{\mathcal{O}} \newcommand{\CP}[0]{\mathcal{P}}
\newcommand{\CQ}[0]{\mathcal{Q}}

\newcommand{\CW}[0]{\mathcal{W}}

\newcommand{\CC}{\mathcal{C}}

\DeclareMathOperator{\Prob}{PM}

\newcommand{\bB}{{\mathbb B}}

\newcommand{\bp}{{\bm{p}}}

\newcommand{\bx}{{\bm{x}}}
\newcommand{\by}{{\bm{y}}}
\newcommand{\bz}{{\bm{z}}}

\newcommand{\Om}{\Omega}
\newcommand{\om}{\omega}
\newcommand{\la}{\lambda}
\newcommand{\eps}{\varepsilon}

\newcommand{\La}{\Lambda}

\newcommand{\SO}{{\mathrm{SO}}}
\newcommand{\AU}{{\mathrm{AU}}}

\newcommand{\sa}{{\mathrm sa}}

\DeclareMathAlphabet\EuScript{U}{eus}{m}{n}

\newcommand{\Bor}{\EuScript{B}}
\newcommand{\Meas}{\EuScript{M}}
\newcommand{\Eu}{\mathbf{E}}
\newcommand{\Poi}{\mathbf{P}}

\setlist[itemize,1]{label=$\bullet$}

\usepackage[style=alphabetic, labelnumber, defernumbers=true, firstinits=true, maxbibnames=99, url=false, doi=true, isbn=false, eprint=false, backend=biber]{biblatex}
\renewbibmacro{in:}{}
\DeclareFieldFormat{pages}{#1}
\newcommand{\leqs}{\leq_{\mathrm s}}

\newcommand{\vees}{\vee_{\mathrm s}}
\newcommand{\wedges}{\wedge_{\mathrm s}}
\DeclareMathOperator{\NW}{NW}

\title{\textbf{Causal quantum-mechanical localization observables in lattices of real projections}}

\author{Gandalf Lechner\footnote{Friedrich-Alexander-Universität Erlangen-Nürnberg, Department Mathematik, Cauerstr. 11, 91058 Erlangen, Germany, {\ttfamily gandalf.lechner@fau.de}} \and Ivan Romualdo de Oliveira\footnote{Departamento de Física, Universidade Federal de Lavras, Caixa Postal 3037, 37203-202, Lavras, MG, Brazil, {\ttfamily ivanromualdo@usp.br}}}

\date{\small February 11, 2026}

\begin{document}

\maketitle

\begin{abstract}
    Quantum-mechanical observables for spatial and spacetime localization are considered from a lattice-theoretic perspective. It is shown that when replacing the lattice of all complex orthogonal projections underlying the Born rule by the lattice of real linear projections with symplectic complementation, the well-known No-Go theorems of Hegerfeldt and Malament no longer apply: Causal and Poincaré covariant localization observables exist. In this setting, several features of quantum field theory, such as Lorentz symmetry and modular localization, emerge automatically. In the case of a particle described by a massive positive energy representation of the Poincaré group, the Brunetti-Guido-Longo map defines a spacetime localization observable that is unique under some natural further assumptions. Regarding possible probabilistic interpretations of such a structure, a Gleason theorem and a cluster theorem for symplectic complements are established. These imply that evaluating such localization observables in states yields a fuzzy probability measure that fails to be a measure because it is not additive. However, for separation scales that are large in comparison to the Compton wavelength, the emerging modular localization picture is essentially additive and approximates the one of Newton-Wigner.
\end{abstract}

\section{Introduction}\label{Section:Introduction}

The Born rule of quantum mechanics associates with any experimentally verifyable proposition about a quantum system (of the form ``measurement of observable $X$ returns a value in the set $A\subset \Rl$'') the probability of this proposition being true, depending on the state of the system. It can naturally be formulated in the language of lattices and logics: The $\sigma$-algebra of all Borel sets $A\in\Bor(\Rl)$ forms a logic (a particular orthocomplemented and orthomodular lattice, see Definition~\ref{def:orthocomplementation}), and the same is true for the lattice $\CP(\Hil)$ of orthogonal projections on the Hilbert space~$\Hil$ of the system under consideration, albeit $\CP(\Hil)$ is non-Boolean. Via the spectral theorem, observables may be identified with projection-valued measures (logic homomorphisms) $P:\Bor(\Rl)\to\CP(\Hil)$, and number-valued probability measures $\mu:\CP(\Hil)\to[0,1]$ may be identified with states on $B(\Hil)$ via Gleason's Theorem. Composition $\mu\circ P$ of two such maps then yields the probability measures underlying the Born rule \cite{BirkhoffNeumann:1936,Varadarajan:2006}.

This perspective gives in particular an accurate description of the most fundamental observable of spatial localization of a particle in terms of detection probabilities. In the interest of a more flexible formalism, one may also replace the Borel $\sigma$-algebra $\Bor(\Rl)$ by a more general logic~$\CL$, like for instance a $\sigma$-algebra on $\Rl^d$ of general spatial dimension $d$, or a different configuration space manifold. This setup can naturally be extended to include covariant transformation behaviour under a group of spatial symmetries (such as the Euclidean group $\Eu(d)$), and includes in particular the Newton-Wigner localization observables as systems of imprimitivity for the Euclidean group \cite{NewtonWigner:1949,Wightman:1962}. With causal complementation (non-timelike separation) as the orthocomplementation, also the family $\CC(M)$ of all causally complete subsets of a globally hyperbolic spacetime~$M$ forms a (non-Boolean) logic \cite{Casini:2002,CeglaJancewiczFlorek:2017}. Hence the lattice-theoretic approach also applies to space-time localization instead of spatial localization (although we do not enter a detailed discussion of their interpretation), and related questions even appear in holographic scenarios \cite{LeutheusserLiu:2025_2}.

However, well-known No-Go theorems like Hegerfeldt's Theorem \cite{Hegerfeldt:1974} or Malament's Theorem \cite{Malament:1996} (going back to earlier work by Schlieder \cite{Schlieder:1971} and Jancewicz \cite{Jancewicz:1977}) express a serious incompatibility between localization observables taking values in $\CP(\Hil)$ and basic properties of relativity: In the presence of a dynamics with positive Hamiltonian, there exist no such localization observables that are causal in the sense of exhibiting finite speed of propagation or being compatible with spacelike (or non-timelike) separation in Minkowski spacetime. This conflict rules out both, spatial and spacetime localization observables. In particular, the projection-valued measures of Newton and Wigner, despite all their desirable properties of localization observables \cite{Wightman:1962,Moretti:2023}, violate causality.

These causality problems disappear when moving on to the richer framework of quantum field theory. In the operator-algebraic approach to QFT, one considers instead of the projection lattice $\CP(\Hil)$ a lattice of local von Neumann algebras $\CA(\CO)\subset B(\Hil)$ indexed by causally complete regions~$\CO$ \cite{Haag:1996}. The localization of a particle in some region of space or spacetime is then tested by a comparison of expectation values of local observables \cite{BuchholzYngvason:1994} in the single particle and vacuum states, and is perfectly consistent with causality. In a vacuum representation with vacuum vector $\Om$, the lattice of local von Neumann algebras defines in particular a lattice of closed real subspaces $\{A\Om\col A=A^*\in\CA(\CO)\}^{\norm{\cdot}}\subset\Hil$. These spaces provide a quantum-field theoretic localization structure known as {\em modular localization} \cite{BrunettiGuidoLongo:2002,Schroer:1997-1}, which is decidedly different from the quantum-mechanical Newton-Wigner localization concept \cite{Halvorson:2001}.

While there are no causality problems in quantum field theory, interest in relativistic localization observables in quantum mechanics persists (see, for example, \cite{CastrigianoMutze:1982,PalmerTakahashiWestman:2012,Terno:2014,CastrigianoLeiseifer:2015,Castrigiano:2017,CeglaJancewiczFlorek:2017}, for a small sample of a large body of literature, containing in particular the extensive work of Castrigiano on the subject). In order to avoid the mentioned No-Go results, often the larger family of effects $\CE(\Hil)=\{X\in B(\Hil)\col0\leq X\leq1\}\supset\CP(\Hil)$ instead of $\CP(\Hil)$ is considered, making use of positive-operator valued measures (POVMs) instead of projection-valued measures. A version of Schlieder's Theorem implies that these POVMs can not be commutative \cite{Busch:1999} if they are causal, i.e. they must correspond to unsharp measurements \cite{BuschLahtiMittelstaedt:1996}. Non-commutative POVMs avoiding the conflict with causality do exist. Starting from an example of Terno \cite{Terno:2014} which is based on the one-particle projection of an energy-momentum tensor, Moretti found a causal spatial localization observable for massive Klein-Gordon particles that recovers the Newton-Wigner operators as its first moment \cite{Moretti:2023}. These observables have recently been generalized to a large family parameterized by conserved currents and causal kernels by Castrigiano, De Rosa and Moretti \cite{DeRosaMoretti:2024,CastrigianoDeRosaMoretti:2025}. However, the logic and lattice properties of $\CP(\Hil)$ get lost when generalizing to $\CE(\Hil)$.

\medskip

In this article, we pursue an alternative approach by considering quantum-mechanical localization observables and their causality properties from a lattice-theoretic perspective. We will not work in the framework of quantum field theory, but formulate localization observables in terms of suitable maps $\CL\to\CQ$ from a logic~$\CL$, to be thought of as a $\sigma$-algebra for spatial localization or the logic of causally complete regions in Minkowski space for spacetime localization, to a lattice~$\CQ$ generalizing $\CP(\Hil)$. It will be important to go beyond the setting of orthomodular lattices and work with the more general involution lattices instead. The main feature of an involution lattice is that it is equipped with an order-reversing involution $H\mapsto H^\perp$ which does not have to be a complement, i.e. $H\wedge H^\perp=0$ and $H\vee H^\perp=1$ may fail (Definition~\ref{def:involutionlattice}).

A prominent example of this is the lattice $\CP_{\Rl}(\Hil)$ of all {\em real-linear} closed subspaces (or, equivalently, real linear projections) $H$ of a {\em complex} Hilbert space $\Hil$. In this case, the involution is the symplectic complementation
\begin{align}
    H^\perp=\{\psi\in\Hil\col\imag\langle\psi,h\rangle=0\,\forall h\in H\},
\end{align}
familiar from the canonical commutation relations. The lattice $\CP_{\Rl}(\Hil)$ contains the logic $\CP(\Hil)$ as a sublogic but is not orthocomplemented itself (hence, in order to avoid confusion, we use the symbol $H'$ instead of $H^\perp$ in the body of the article). As mentioned before, examples of projections in $\CP_{\Rl}(\Hil)$ can be obtained from local von Neumann algebras, but no connection to QFT needs to be assumed when studying $\CP_{\Rl}(\Hil)$.

The basic lattice-theoretic notions underlying our approach are presented in Section~\ref{section:Lattices}, where also several examples of logics and involution lattices (including besides $\CP_{\Rl}(\Hil)$ a remark how the set of effects $\CE(\Hil)$ is also an involution lattice when equipped with spectral order, and the involution lattice of von Neumann algebras on $\Hil$) are discussed.

Localization observables are defined in Section~\ref{section:LocalizationObservables} as maps $\CL\to\CQ$ that obey a normalization condition, are $\sigma$-additive, and preserve lattice-theoretic separation (Definition~\ref{def:Observable}). In this setup, a natural notion of dynamics and causality can be introduced, reminiscent of Castrigiano's causality condition \cite{Castrigiano:2017}. We review how causal localization observables mapping into $\CP(\Hil)$ are ruled out by a version of Malament's Theorem (Theorem~\ref{thm:NoGo}), and show that the constraints imposed on localization observables mapping into $\CP_{\Rl}(\Hil)$ are weaker but non-trivial. In particular, the existence of a causal translationally covariant spatial localization observable is shown to imply Lorentz invariance if a strong version of causality is imposed, at least at the level of the energy-momentum spectrum (Proposition~\ref{proposition:SpectrumMustBeLorentzInvariant}). Another important point is that a causal localization observable in $\CP_{\Rl}(\Hil)$ automatically leads to modular localization in terms of standard subspaces (see \cite{Longo:2008_2} and the upcoming textbook \cite{CorreadaSilvaLechnerLongo:2026}, and Appendix~\ref{appendix:StandardSubspaces} for a minimal review).

In Section~\ref{section:SpacetimeLocObinPRH}, we consider the most constrained setting of spacetime localization observables in $\CP_{\Rl}(\Hil)$ that transform covariantly under a positive energy representation $U$ of the Poincaré group $\Poi(d+1)$. We restrict ourselves to massive representations and consider the Brunetti-Guido-Longo (BGL) map \cite{BrunettiGuidoLongo:2002} which defines a family of real projections in terms of the representation $U$. This map was originally formulated as a quantization-free approach to defining free quantum field theories, has been used as a starting point for the construction of interacting QFT models \cite{CorreaDaSilvaLechner:2023}, and has been generalized to other Lie groups \cite{MorinelliNeeb:2023}. Here we explain how the BGL map is an example of a spacetime localization observable. As spacetime localization observables are automatically causal, and can be restricted to causal spatial localization observables (Proposition~\ref{prop:SpacetimeObservableInduceCausalSpatialObservables}), this shows in particular that causal spatial localization observables exist in $\CP_{\Rl}(\Hil)$. We also discuss to which extent the BGL map is the unique spacetime localization observable for a given representation $U$ (see Section~\ref{section:uniqueness}).

In Section~\ref{section:Probability} we return to the Born rule. Guided by Gleason's Theorem for $\CP(\Hil)$, we define probability measures on involution lattices and prove a Gleason-type theorem (Theorem~\ref{theorem:SymplecticGleasonTheoremHilbert}, proved in the setting of symplectic spaces in Appendix~\ref{section:SymplecticGleason}) for~$\CP_{\Rl}(\Hil)$. In contrast to the original Gleason Theorem, it says that in the case of an infinite-dimensional separable Hilbert space, {\em no} probability measures on $\CP_{\Rl}(\Hil)$ exist. In line with this result we find that evaluation of a spacetime localization observable in $\CP_{\Rl}(\Hil)$ in (the real part of) a state gives only a fuzzy probability measure; the additivity property fails. However, we also adapt Fredenhagen's cluster theorem \cite{Fredenhagen:1985} to standard subspaces (Theorem~\ref{thm:MassGapClustering}) and show that given a causal spatial localization observable $E:\Bor(\Rl^d)\to\CP_{\Rl}(\Hil)$, the maps $\mu^E_\om:A\mapsto\real\om(E(A))$ (with $\om$ a normal state on $B(\Hil)$) are approximately additive in the sense that
\begin{align}
    |\mu^E_\om(A\vee B)-\mu^E_\om(A)-\mu^E_\om(B)|
    \leq
    e^{-m\cdot d(A,B)}
\end{align}
holds for arbitrary spatial Borel regions $A,B$ (with $m$ the mass of the particle and $d$ the Euclidean distance). In a similar spirit, one can also show that a vector $\psi=E(B)\psi$ that is localized in the spatial region $B$ according to the causal spatial localization observable $E:\Bor(\Rl^d)\to\CP_{\Rl}(\Hil)$ is also essentially Newton-Wigner localized in a slightly larger region $B_\delta$. These facts show that the causal localization observables fail to produce genuine measures, but the error in additivity is negligible for macroscopic configurations.

\medskip

We conclude that the two extensions of the projection logic $\CP(\Hil)$ by real projections and (complex) effects,
\begin{align}
    \CP_{\Rl}(\Hil) \supset \CP(\Hil) \subset \CE(\Hil),
\end{align}
lead to quite different results. The main message of this article is that when adopting the point of view that a (spatial or spacetime) localization observable is given by map between a logic of localization regions and the involution lattice $\CP_{\Rl}(\Hil)$, one arrives at a fully relativistic and causal description of localization, and in fact several aspects of quantum field theory (such as Lorentz symmetry and modular localization) emerge automatically. While this approach leads only to an approximate description of localization in terms of probability measures, it has the advantage of providing an essentially unique localization observable and a clear lattice structure. In contrast, the approach based on POVMs and effects yields a clear measure-theoretic picture, but seems to have less similarity with quantum field theory and lattice notions.

A detailed comparison of these two ideas would be very interesting, in particular given the link of Terno's localization observable to the energy momentum tensor of a free quantum field theory, but has to be left for a future work. Another aspect of importance that is not treated in this paper is an incorporation of the measurement process, as recently generalized from quantum mechanics \cite{BuschLahtiMittelstaedt:1996} to quantum field theory \cite{FewsterVerch:2020}.

\medskip

This article is written from a quantum-mechanical perspective using lattices. As we proceed, several notions of (algebraic) quantum field theory will appear naturally (such as modular localization, the Reeh-Schlieder property, and the Bisognano-Wichmann property), but no previous knowledge in these subjects is assumed. Some of our results are also of independent interest for readers primarily interested in standard subspaces, such as our Gleason Theorem (Theorem~\ref{theorem:SymplecticGleasonTheoremHilbert}) or the standard subspace cluster theorem (Theorem~\ref{thm:MassGapClustering}).

\section{Involution lattices and logics}\label{section:Lattices}

In this section, we describe our general framework of logics and involution lattices. A detailed presentation of the theory of abstract lattices and their links to quantum theory and logic can be found in \cite{Gratzer:2011} and \cite{Redei:1998},\cite[App.~D]{Landsman:2017}, respectively. For us it will be sufficient to review the basic definitions and introduce some important generalizations of them to discuss the examples relevant to our subsequent study of localization observables.

\subsection{Basic definitions}

A {\em lattice} is a non-empty set $\CL$ with two binary, idempotent, commutative, and associative operations $\wedge,\vee$ satisfying the absorption laws
\begin{align}
    A\vee (A\wedge B)=A,\qquad A\wedge(A\vee B)=A,\qquad\quad A,B\in\CL.
\end{align}
Equivalently, a lattice is a partially ordered set $(\CL,\leq)$ admitting a supremum (join) $A\vee B$ and infimum (meet) $A\wedge B$ w.r.t. $\leq$ for any two $A,B\in \CL$. The partial order can then be defined as $A\leq B:\Leftrightarrow A\wedge B=A$ in terms of $\wedge$. In case a lattice admits a largest and smallest element w.r.t. $\leq$, these are unique and denoted $1$ and $0$, respectively. We speak of a {\em bounded} lattice in this case. An element $B\in\CL$ is called {\em complement} of $A\in\CL$ if $A\vee B=1$ and $A\wedge B=0$.

\begin{definition}\label{def:orthocomplementation}
    Let $(\CL,\wedge,\vee)$ be a bounded lattice.
    \begin{enumerate}
     \item\label{item:orthocomplement} An {\em orthocomplementation} on $\mathcal{L}$ is a map $\perp:\CL\to\CL$, $A\mapsto A^\perp$ such that
    \begin{enumerate}[{\textrm (i)}]
        \item\label{item:involution} $A^{\perp\perp}=A$,
        \item\label{item:orderreversing} $A\leq B$ implies $B^\perp\leq A^\perp$,
        \item\label{item:complement} $A^\perp$ is a complement of $A$.
    \end{enumerate}

    \item $\CL$ is {\em $\sigma$-complete} if for any countable subset $\CL_0\subset\CL$, the supremum $\bigvee_{A\in\CL_0}A$ and infimum $\bigwedge_{A\in\CL_0}A$ exist in $\CL$, and {\em complete} if $\bigvee_{A\in\CL_0}A$ and $\bigwedge_{A\in\CL_0}A$ exist even for arbitrary subsets $\CL_0\subset\CL$.

    \item $\CL$ is {\em distributive} if
    \begin{align}
        A\wedge (B\vee C)
        =
        (A\wedge B)\vee (A\wedge C),\qquad A,B,C\in\CL,
    \end{align}
    and {\em Boolean} if it is distributive and orthocomplemented.

    \item\label{item:LatticeOrthomodular} An {\em orthomodular lattice} is an orthocomplemented lattice $\CL$ such that
    \begin{align}
        A\leq B \Rightarrow B=A\vee(A^\perp\wedge B),\qquad A,B\in\CL.
    \end{align}

    \item A {\em logic} is a $\sigma$-complete orthomodular lattice.
    \end{enumerate}
\end{definition}

The propositions of classical logic form an example of a Boolean logic, with the partial order given by implication, the orthocomplement by negation, and $1$ the tautology. The lattice operations join/meet are then given by logical or/and. As any Boolean lattice, it is orthomodular. In contrast, {\em quantum} logic, formalized as the lattice of orthogonal projections in a Hilbert space (see Example~\ref{example:PH}), is an example of a non-Boolean logic.

While Definition~\ref{def:orthocomplementation} is standard, we will also use a more general concept that is introduced now.

\begin{definition}\label{def:involutionlattice}
    An {\em involution lattice} is a $\sigma$-complete bounded lattice with an order-reversing involution $\perp$ (satisfying~\ref{item:involution} and \ref{item:orderreversing} but not necessarily \ref{item:complement} in Definition~\ref{def:orthocomplementation}~\ref{item:orthocomplement}).
\end{definition}

Our terminology is set up in such a way that boundedness is always assumed, and logics or involution lattices without further adjectives are $\sigma$-complete. In the case of unrestricted completeness as defined above, we speak of complete logics or involution lattices.

In an involution lattice, de Morgan's laws
\begin{align}\label{eq:DeMorgansLaws}
    (A\vee B)^\perp=A^\perp\wedge B^\perp,\qquad
    (A\wedge B)^\perp=A^\perp\vee B^\perp,
\end{align}
still hold (their standard proof does not use that $\perp$ is an orthocomplementation, see for example \cite[Proposition~3.4]{Redei:1998}), but $A^\perp$ may fail\footnote{In terms of logical statements, the law of excluded middle $A\vee A^\perp=1$ may be violated in an involution lattice, which may be seen as a desirable feature of a quantum logic \cite{Landsman:2017}.} to be a complement of $A$.
Involution lattices are used in the investigation of unsharp quantum logics \cite{ChiaraGiuntiniGreechie:2004}. We will see examples of such lattices below.

Involution lattices in which $\perp$ is not an orthocomplementation are not orthomodular; this follows by setting $B=1$ in the orthomodularity condition Definition~\ref{def:orthocomplementation}~\ref{item:LatticeOrthomodular}.

\medskip

In our subsequent analysis of localization observables in Section~\ref{section:LocalizationObservables}, covariance and causality properties will be at the center of our interest. A discussion of these properties requires two more concepts for involution lattices that we introduce now.

\subsection*{Separation and disjointness in involution lattices}

When discussing causality questions, we will need to distinguish between separated and disjoint elements.

\begin{definition}\label{def:SeparationAndDisjointness}
    Let $\CL$ be an involution lattice.
    \begin{enumerate}
        \item $A,B\in\CL$ are called {\em separated} if $A\leq B^\perp$. For orthocomplemented $\CL$, we also say that $A$ and $B$ are {\em orthogonal} in this case.

        \item $A,B\in\CL$ are called {\em disjoint} if $A\wedge B=0$.
    \end{enumerate}
\end{definition}

Note that in view of the properties of the involution map $\perp$, the symmetric formulation is justified: $A$ is separated from $B$ if and only if $B$ is separated from $A$.

It is easy to see that in a Boolean lattice, two elements are separated if and only if they are disjoint. In general orthocomplemented lattices, only the implication ``separated''$\Rightarrow$``disjoint'' holds\footnote{In the logic of closed subspaces of a Hilbert space (Example~\ref{example:PH}), consider two non-orthogonal subspaces that intersect trivially to get an example of two elements that are disjoint but not orthogonal.}. In general involution lattices, separation and disjointness are logically independent and there may even exist elements $A\neq0$ that are separated from themselves, $A\leq A^\perp$ (see Example~\ref{example:PRH}).

As we will work with (generalizations of) the non-Boolean logic of quantum mechanics, we will use separation of two elements to express their commensurability (e.g., commuting projections as in Example~\ref{example:PH} or causal separation of causally complete regions in Minkowski space as in Example~\ref{example:MinkowskiLogic}), rather than disjointness. We will also say that a subset $\CL_0\subset\CL$ of an involution lattice is {\em separated} if any two distinct elements of $\CL_0$ are separated.

\subsection*{Homomorphisms of involution lattices} Homomorphisms between two involution lattices $\CL_1$, $\CL_2$ are defined as the maps $\varphi:\CL_1\to\CL_2$ preserving the relevant structures, namely $\varphi(0)=0$, $\varphi(1)=1$, $\varphi(A^\perp)=\varphi(A)^\perp$ as well as $\varphi(A\vee B)=\varphi(A)\vee\varphi(B)$ and $\varphi(A\wedge B)=\varphi(A)\wedge\varphi(B)$ for all $A,B\in\CL_1$. A map $\varphi:\CL_1\to\CL_2$ is called {\em monotone} if $A\leq B\Rightarrow\varphi(A)\leq\varphi(B)$, and {\em finitely additive} (respectively {\em $\sigma$-additive} or {\em completely additive}) if
\begin{align}\label{eq:additivity}
    \CL_0\subset\CL\;\text{ separated}
    \Rightarrow
    \varphi\left(\bigvee_{A\in\CL_0}A\right)
    =
    \bigvee_{A\in\CL_0}\varphi\left(A\right)
\end{align}
holds for all finite (respectively countable, or arbitrary) separated subsets $\CL_0\subset\CL$. Homomorphisms are clearly monotone and finitely additive.

The concept of a homomorphism of course restricts to the special cases of logics and Boolean logics. Iso- and automorphisms are defined in the canonical way. In particular, any involution lattice~$\CL$ has an automorphism group $\Aut\CL$.

\begin{definition}\label{def:AutomorphicActionsAndCovariance}
    Let $G$ be a group and $\CL_1$, $\CL_2$ involution lattices.
    \begin{enumerate}
        \item\label{item:AutomorphicAction} An {\em automorphic action} of $G$ on $\CL_1$ is a group homomorphism $\alpha:G\to\Aut\CL_1$.

        \item\label{item:Covariance} Given two automorphic actions $\alpha_j:G\to\Aut\CL_j$, $j=1,2$, a map $f:\CL_1\to\CL_2$ is {\em covariant} if $f\circ\alpha_{1,g}=\alpha_{2,g}\circ f$ for all $g\in G$.
    \end{enumerate}
\end{definition}

\medskip

We now discuss various examples of logics and involution lattices that are relevant for localization observables.

\subsection{Space- and spacetime logics}

The most basic geometric examples of logics come from measure theory: Every $\sigma$-algebra on a set $X$ is a Boolean logic with partial order given by inclusion, orthocomplementation given by set-theoretic complement $A\mapsto A^c$, and $0=\emptyset$, $1=X$. In this case, $A\vee B=A\cup B$ and $A\wedge B=A\cap B$ for $A,B$ in the $\sigma$-algebra.

\begin{example}[\bfseries Spatial localization regions, the Boolean logic of a Cauchy surface]\label{example:SpatialLocalization}
    The $\sigma$-algebra $\Bor(\Rl^d)$ of all Borel subsets of Euclidean space~$\Rl^d$ is a Boolean logic with its set-theoretic operations. The Euclidean group $\Eu(d)$ acts on it by automorphisms according to $\alpha_g(A):=g(A)$.
\end{example}

$\Bor(\Rl^d)$ and its Euclidean action will be used as a model for spatial localization regions. Instead of $\Bor(\Rl^d)$, one could also work with the larger $\sigma$-algebra $\Meas(\Rl^d)$ of all Lebesgue measurable sets\footnote{In certain situations, this change from $\Bor(\Rl^d)$ to $\Meas(\Rl^d)$ is necessary: In our later discussion of dynamics, we will in particular consider ``smeared out'' regions of the form $A+|t|\bB=\bigcup_{x\in A}(x+|t|\bB)$, $t\neq0$, with $\bB$ the  unit ball in $\Rl^d$. If $\bB$ is the open unit ball, this set is open, so in particular Borel. However, if $\bB$ is closed and $d>1$, there exist Borel sets $A$ such that $A+|t|\bB$ is not Borel, whereas it is Lebesgue measurable for any set $A\subset\Rl^d$ (\cite[Exercise~7.20]{Rudin:1970}, see also \cite[Lemma~16]{Castrigiano:2017} for a related statement).}.

\bigskip

We next consider Minkowski space{\em time} $\Rl^{d+1}$ (with points $x=(x_0,\bx)\in\Rl\times\Rl^d$, Lorentz inner product $x\cdot y=x_0y_0-\bx\by$ and $x^2:=x\cdot x$). To define an appropriately complemented lattice of regions in $\Rl^{d+1}$, the physical idea is to consider two points $x,y\in\Rl^{d+1}$ as separated when no signal can be sent from one to the other. Depending on whether signals propagating at the speed of light are considered or not, one considers the relation that $x,y$ are {\em spacelike}, written $x\sim_{\mathrm s}y$, if $(x-y)^2<0$, or {\em non-timelike} \cite[Sect.~24]{Castrigiano:2017} ({\em achronal} \cite{CastrigianoDeRosaMoretti:2025}), written $x\sim y$, defined by
\begin{align}\label{eq:CausalSeparationOfPoints}
    x\sim y &:\Leftrightarrow x\neq y\;\text{and}\,(x-y)^2\leq0.
\end{align}
The relation $\sim$ considers points $x,y\in\Rl^{d+1}$ with $x-y$ lightlike to be separated, as appropriate for massive particles which move at speeds smaller than the speed of light. As we will not discuss massless particles in this article, we use $\sim$ instead of $\sim_{\mathrm s}$, which leads to an improved logic structure (see below).

Based on the separation relation $\sim$, we define the {\em causal complement} of any set $\CO\subset\Rl^{d+1}$ as
\begin{align}\label{eq:causalcomplement}
    \CO':= \{x\in \Rl^{d+1}\,:\, x\sim y\;\forall y\in \CO\}.
\end{align}
With $V^\pm\subset\Rl^{d+1}$ the open forward/backward lightcone, the chronological future/past $I^\pm(\CO)$ of a set $\CO\subset\Rl^{d+1}$ is
\begin{align}\label{eq:defI}
    I^\pm(\CO)
    :=
    \CO+V^\pm,\qquad
    I(\CO)
    :=
    \CO\cup I^+(\CO)\cup I^-(\CO),
\end{align}
and we have $\CO'=\Rl^{d+1}\setminus I(\CO)$. Hence the causal complement \eqref{eq:causalcomplement} is naturally related to the chronological future/past, instead of the causal future/past (based on the {\em closed} light cone). As $I^\pm(\CO)$ is open, it is clear that $I(\CO)$ is Borel if $\CO$ is Borel.

We will call $\CO$ {\em causally complete} if $\CO=\CO''$. To arrive at a lattice in which the causal complement is an orthocomplementation, we clearly need to restrict to the set $\CC(\Rl^{d+1})$ of all causally complete subsets of $\Rl^{d+1}$.

\begin{example}[{\bfseries Spacetime localization regions, the logic of Minkowski spacetime}]\label{example:MinkowskiLogic}
    The collection $\CC(\Rl^{d+1})$ of all causally complete subsets of $\Rl^{d+1}$ forms a non-Boolean complete logic with partial order given by inclusion, orthocomplementation given by the causal complement $\CO\mapsto\CO'$ \eqref{eq:causalcomplement}, and $0=\emptyset$, $1=\Rl^{d+1}$. In this case,
    \begin{align}\label{eq:meetjoinMinkowski}
        \CO_1\vee \CO_2 &:= \left(\CO_1\cup \CO_2\right)'',\qquad
        \CO_1\wedge \CO_2:= \CO_1\cap \CO_2.
    \end{align}
    The Poincaré group $\Poi(d+1)$ of\, $\Rl^{d+1}$ acts automorphically\footnote{A theorem of Borchers and Hegerfeldt implies that the full automorphism group of $\CC(\Rl^{d+1})$ is generated Poincaré transformations and dilations \cite{BorchersHegerfeldt:1972}.} on $\CC(\Rl^{d+1})$ by $\alpha_{(x,\La)}(\CO)=\La\CO+x$.
\end{example}

A proof of these claims can be found in \cite{CeglaJadczyk:1977}, see also \cite{Casini:2002,CeglaJancewiczFlorek:2017} for an extension to curved (globally hyperbolic) spacetimes. Figure~\ref{figure:CausalLogic} illustrates the difference between disjointness and separation in $\CC(\Rl^{d+1})$, and the failure of distributivity. A subtle point concerns the orthomodularity of $\CC(\Rl^{d+1})$, which {\em fails} if the non-timelike separation relation $\sim$ is replaced by the spacelike separation relation $\sim_{\mathrm s}$ \cite{Casini:2002}. Hence $\CC(\Rl^{d+1})$ is a logic only because we use the non-timelike separation relation.

The logic $\CC(\Rl^{d+1})$ and its Poincaré action will be used as a model for space{\em time} localization regions. Typical elements of $\CC(\Rl^{d+1})$ are double cones (up to a set of measure zero, an intersection of a forward and backward lightcone), the closed wedge $W:=\{x\in\Rl^{d+1}\col x_1\geq|x_0|\}$, one-point sets $\{x\}$, and sets constructed from these examples by join, meet, and Poincaré transformations. All sets in $\CC(\Rl^{d+1})$ are Lebesgue measurable \cite[Sect.~24.4]{Castrigiano:2017}.

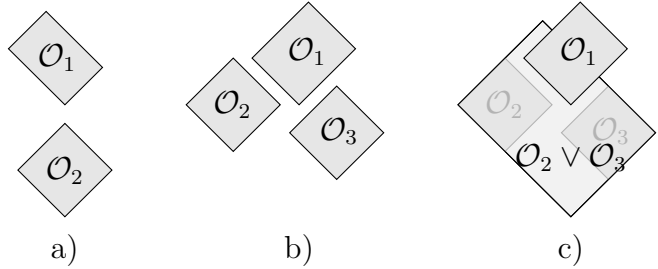
\begin{figure}[h]
    \centering
    \begin{tikzpicture}[scale=0.7]
        \coordinate (center) at (0,0);
        \begin{scope}[rotate around={45:(center)}]
            \draw[draw=black,fill=black!10!white] (-1.5,-1.5) rectangle (-0.25,-0.25);
            \draw[draw=black,fill=black!10!white] (0,0) rectangle (1,1.5);
        \end{scope}
        \node at (-0.15,0.9) {$\CO_1$};
        \node at (0,-1.25) {$\CO_2$};
        \node at (0,-2.75) {a)};
    \end{tikzpicture}
\qquad
    \begin{tikzpicture}[scale=0.7]
    \coordinate (center) at (0,0);
    \begin{scope}[rotate around={-45:(center)}]
        \draw[draw=black,fill=black!10!white] (-1.5,-1.5) rectangle (-0.25,-0.25);
        \draw[draw=black,fill=black!10!white] (0.25,-0.5) rectangle (1.5,0.75);
        \draw[draw=black,fill=black!10!white] (-1.25,0) rectangle (0,1.5);
    \end{scope}
    \node at (0.15,1) {$\CO_1$};
    \node at (-1.25,0) {$\CO_2$};
    \node at (0.75,-0.5) {$\CO_3$};
    \node at (0,-2.75) {b)};
    \end{tikzpicture}
\qquad
    \begin{tikzpicture}[scale=0.7]
    \coordinate (center) at (0,0);
    \begin{scope}[rotate around={-45:(center)}]
        \draw[draw=black,fill=black!5!white] (-1.5,-1.5) rectangle (1.5,0.75);
        \draw[draw=black!30!white,fill=black!10!white] (-1.5,-1.5) rectangle (-0.25,-0.25);
        \draw[draw=black!30!white,fill=black!10!white] (0.25,-0.5) rectangle (1.5,0.75);
        \draw[draw=black,fill=none] (-1.5,-1.5) rectangle (1.5,0.75);
        \draw[draw=black,fill=black!10!white] (-1.25,0) rectangle (0,1.5);
    \end{scope}
    \node at (0.15,1) {$\CO_1$};
    \node at (-1.25,0) {\textcolor{black!30!white}{$\CO_2$}};
    \node at (0.75,-0.5) {\textcolor{black!30!white}{$\CO_3$}};
    \node at (0,-1) {$\CO_2\vee\CO_3$};
    \node at (0,-2.75) {c)};
    \end{tikzpicture}
    \caption{a) $\CO_1$ and $\CO_2$ are disjoint but not separated. b),c) Failure of distributivity: $\CO_1\wedge\CO_2=\emptyset=\CO_1\wedge\CO_3$ (b), but $\CO_1\wedge(\CO_2\vee\CO_3)\neq\emptyset$~(c). In these figures, the time axis is vertical and the spatial axis horizontal as usual.}\label{figure:CausalLogic}
\end{figure}

From the point of view of lattice structures, the essential difference between the logic $\Bor(\Rl^d)$ describing localization regions in Euclidean space and the logic $\CC(\Rl^{d+1})$ describing localization regions in Minkowski spacetime is that the former is Boolean but the latter is not.

We may of course think of $\Rl^d$ as a Cauchy surface $\{0\}\times\Rl^d$ in Minkowski spacetime $\Rl\times\Rl^d$. This leads to a covariant embedding by causal completion.

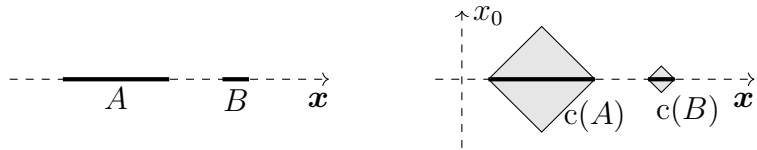
\begin{figure}[h]
    \centering
    \begin{tikzpicture}[scale=0.7]
        \draw[->,dashed] (-3,0) to (3,0);
        \draw[ultra thick] (-2,0) to (0,0);
        \draw[ultra thick] (1,0) to (1.5,0);
        \node at (-1,-0.35) {$A$};
        \node at (1.25,-0.4) {$B$};
        \node at (2.8,-0.4) {$\bx$};
        \begin{scope}[shift={(8,0)}]
            \draw[->,dashed] (-3,0) to (3,0); 
            \draw[->,dashed] (-2.5,-1.3) to (-2.5,1.3); 
            \draw[draw=black,fill=black!10!white] (-2,0) -- (-1,1) -- (0,0) -- (-1,-1) --cycle;
            \draw[draw=black,fill=black!10!white] (1,0) -- (1.25,0.25) -- (1.5,0) -- (1.25,-0.25) --cycle;
            \draw[ultra thick] (-2,0) to (0,0);
            \draw[ultra thick] (1,0) to (1.5,0);
            \node at (0,-.7) {$\cc(A)$};
            \node at (1.75,-.6) {$\cc(B)$};
            \node at (2.8,-0.4) {$\bx$};
            \node at (-2,1.2) {$x_0$};
        \end{scope}
    \end{tikzpicture}
\caption{The causal completion map $\cc(A)=(\{0\}\times A)''$.}
\end{figure}

\begin{lemma}\label{lemma:CausalEmbedding}
    The causal completion map
    \begin{align}\label{def:c}
        \cc:\Bor(\Rl^d) \to\CC(\Rl^{d+1}),\qquad \cc(A) := (\{0\}\times A)''
    \end{align}
    is an injective $\sigma$-additive logic homomorphism, making $\cc(\Bor(\Rl^d))$ a Boolean sublogic of $\CC(\Rl^{d+1})$. It is covariant w.r.t. the actions of the Euclidean group $\Eu(d)$ on $\Bor(\Rl^d)$ and $\CC(\Rl^{d+1})$.
\end{lemma}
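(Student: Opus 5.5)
The plan is to first compute $\cc(A)$ explicitly and then reduce every lattice-theoretic claim to elementary set operations on $\{0\}\times\Rl^d$. Write $\Sigma:=\{0\}\times\Rl^d$ and $\tilde A:=\{0\}\times A$.

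\textbf{Causal complement of $\tilde A$.} A point $x=(x_0,\bx)$ is non-timelike to every point of $\tilde A$ iff $x\notin\tilde A$ and $|x_0|\leq d(\bx,\ba)$ for all $\ba\in A$, i.e.\ $|x_0|\leq\inf_{\ba\in A}|\bx-\ba|$ (with $x\notin\tilde A$). Conversely, a point of $\Sigma$ lies in $\tilde A'$ iff it lies in $\{0\}\times A^c$.

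\textbf{Identification of $\cc(A)$.} From this I will show
\begin{align}
    \cc(A)\cap\Sigma=\tilde A,\qquad \cc(A)'=\cc(A^c).
\end{align}
For $\cc(A)'=\tilde A'''=\tilde A'$, the natural guess is $\tilde A'=\cc(A^c)=(\widetilde{A^c})''$. I would establish the key formula
\begin{align}
    \tilde A''=\Sigma'\cup\dots
\end{align}
only partially. Instead I argue abstractly. Every $x\notin\Sigma$ is timelike to some point of $\Sigma$, so $(\tilde A\cup\widetilde{A^c})'=\Sigma'=\emptyset$. Hence $\cc(A)\vee\cc(A^c)=\Rl^{d+1}$, because the join contains $(\tilde A\cup\widetilde{A^c})''$. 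The harder inclusion is $\tilde A'\subseteq(\widetilde{A^c})''$, i.e.\ $\tilde A'\subseteq\widetilde{A^c}'''$. I would prove it directly. Let $x\in\tilde A'$ and $y\in(\widetilde{A^c})'$, and suppose $x,y$ are timelike related or equal. Then the past or future cone from one of them meets $\Sigma$ in a closed ball around the projected point. That ball lies either in $A^c$ (because of $x$) or in $A$ (because of $y$). A geometric argument with the intersection of two cones, possibly lightlike at the boundary, then yields a contradiction.

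\textbf{Homomorphism properties.} With $\cc(A)'=\cc(A^c)$ in hand, the remaining properties follow as below.

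Complements are preserved. Meets are preserved, with $\cc(A\cap B)=\cc(A)\cap\cc(B)$, which again comes from the explicit domain-of-dependence description. Joins then follow from de Morgan \eqref{eq:DeMorgansLaws}.

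For $\sigma$-additivity, let $A_n$ be disjoint. Then
\begin{align}
    \bigvee_n\cc(A_n)=\Big(\bigcup_n\tilde A_n\Big)''=\cc\Big(\bigcup_n A_n\Big),
\end{align}
using that $(\bigcup S_n)''=(\bigcup S_n'')''$ holds in general.

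Injectivity follows from $\cc(A)\cap\Sigma=\tilde A$. Euclidean covariance holds because $\Eu(d)$ acts as Poincaré transformations fixing $\Sigma$ and commuting with $'$.

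\textbf{Main obstacle.} I expect the main difficulty to be the identity $\cc(A)\cap\cc(B)=\cc(A\cap B)$ together with the boundary cases where the separation is lightlike. I would handle both via the explicit formula $\cc(A)=\{x: \overline{B}(\bx,|x_0|)\subseteq A\}$, and verify this formula first using $\tilde A'=\{x\notin\tilde A:\ \overline B(\bx,|x_0|)\cap A\subseteq\{\bx\}\text{-boundary}\}$, carefully treating points with $x_0=0$.
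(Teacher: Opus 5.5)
Your route is genuinely different from the paper's and can be made to work. As written, however, it fails at the one delicate point it has to handle: the lightlike boundary.

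\textbf{Closed versus open balls.} Your first computation is right: $x\in\tilde A'$ iff $D(x)\cap A=\emptyset$, where
\[
D(x):=\{\bz\in\Rl^d\col \|\bz-\bx\|<|x_0|\}\cup\{\bx\}
\]
is the \emph{open} ball (just $\{\bx\}$ if $x_0=0$). You then switch to closed balls. Both the claim that the closed ball of $x$ lies in $A^c$ (resp.\ that of $y$ in $A$) and your formula $\cc(A)=\{x\col \overline B(\bx,|x_0|)\subseteq A\}$ are false.

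Take $A$ the open unit ball and $x=(1,\mathbf 0)$. Every $y\in\tilde A'$ has $\|\by\|\geq 1+|y_0|$, so $|1-y_0|\leq\|\by\|$ and $y\neq x$. Hence $x\in\cc(A)$, although the closed unit ball is not contained in $A$. Your closed-ball formula is the one for the spacelike relation $\sim_{\mathrm s}$; $\sim$ and $\sim_{\mathrm s}$ differ exactly at this boundary.

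The correct description is $\cc(A)=\widetilde{A^c}'=\{x\col D(x)\subseteq A\}$. With open balls, the harder inclusion follows in one line instead of from a vague two-cone argument. If $D(x)\subseteq A^c$ and $D(y)\subseteq A$, then $D(x)\cap D(y)=\emptyset$ forces $\|\bx-\by\|\geq|x_0|+|y_0|\geq|x_0-y_0|$. Also $x\neq y$ because $D(x)\neq\emptyset$. Hence $x\sim y$.

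Separately, $\tilde A'\subseteq\widetilde{A^c}'''$ is a misprint: $\widetilde{A^c}'''=\widetilde{A^c}'$, and that inclusion is false. You mean $\tilde A'\subseteq(\widetilde{A^c}')'$.

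\textbf{Meets are not an obstacle.} The identity $(\bigcup_i S_i)''=(\bigcup_i S_i'')''$ that you use for $\sigma$-additivity holds for arbitrary families, disjoint or not. So $\cc$ preserves all joins directly. Meets then follow from $\cc(A^c)=\cc(A)'$ and de Morgan:
\[
\cc(A\cap B)=\cc(A^c\cup B^c)'=(\cc(A)'\vee\cc(B)')'=\cc(A)\wedge\cc(B).
\]
They also follow at once from the corrected formula for $\cc(A)$.

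\textbf{Comparison with the paper.} The paper never computes $\tilde A'$. It checks that $\cc$ is an observable: normalized, separation preserving (disjoint $A,B$ give $\tilde A\subseteq\tilde B'$), and $\sigma$-additive by the union identity. It then applies Lemma~\ref{lemma:Observables}. For a Boolean source and an orthomodular target, $\cc(A^c)\subseteq\cc(A)'$ together with $\cc(A)\vee\cc(A^c)=\Rl^{d+1}$ already gives $\cc(A^c)=\cc(A)'$. You derived the second fact and then left it unused.

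That shortcut relies on the cited orthomodularity of $\CC(\Rl^{d+1})$, which holds for $\sim$ but fails for $\sim_{\mathrm s}$. Your direct route is the alternative the paper mentions via Castrigiano. It is self-contained and yields an explicit description of $\cc(A)$, but only once the open/closed distinction is handled correctly.
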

\begin{proof}
    It is clear that $\cc$ is well-defined (maps into $\CC(\Rl^{d+1})$), and we recognize it to be injective because $A$ can be recovered from $\cc(A)$ by $\{0\}\times A=\cc(A)\cap\Sigma$, with $\Sigma:=\{0\}\times\Rl^d$. It is also clear that $\cc$ is normalized in the sense of preserving minimal and maximal elements, i.e. $\cc(\emptyset)=\emptyset$ and $\cc(\Rl^d)=\Rl^{d+1}$.

    Given two disjoint sets $A, B\in\Bor(\Rl^d)$, we have $\{0\}\times A\subset(\{0\}\times B)'$, and hence $\cc(A)\subset\cc(B)'$. Furthermore, for a separated countable family $(A_n)_{n\in\Nl}\subset\Bor(\Rl^d)$, one obtains $\cc(\bigcup_n A_n)=(\bigcap_n(\{0\}\times A_n)')'=\bigvee_n\cc(A_n)$, where we have used that $\CC(\Rl^{d+1})$ is a complete lattice. Thus $\cc$ is $\sigma$-additive.

    The properties of $\cc$ established so far already imply that it is a logic homomorphism (see Lemma~\ref{lemma:Observables}). Alternatively, the proof can be carried out by arguing more directly with the form of the achronal complementation \cite[Lemma~119]{Castrigiano:2017}.
\end{proof}

\subsection{Logics and involution lattices on a Hilbert space}

Quantum logics formalized in terms of closed subspaces of a Hilbert or the corresponding orthogonal projections go back to early work by Birkhoff and von Neumann \cite{BirkhoffNeumann:1936}. As is well known, the set of all closed subspaces of a Hilbert space $\Hil$ with partial order given by inclusion, orthocomplementation given by orthogonal complements, and $0=\{0\}$,  $1=\Hil$, is a complete logic that is non-Boolean for $\dim\Hil>1$. The resulting lattice operations are
\begin{align}\label{eq:meetjoinspaces}
    \CK_1\vee\CK_2=\overline{\text{span}\{\CK_1,\CK_2\}},\qquad \CK_1\wedge\CK_2=\CK_1\cap\CK_2.
\end{align}
We phrase this structure in terms of the orthogonal projections onto closed subspaces, and write $\AU(\Hil)$ for the group of all unitary and antiunitary operators on $\Hil$.

\begin{example}[\bfseries Orthogonal projections, the logic of quantum mechanics]\label{example:PH}
    The set $\CP(\Hil)$ of all orthogonal projections $P\in B(\Hil)$ with partial order given by operator order, orthocomplementation given by $P^\perp=1-P$, and the minimal and maximal projections $0$ and $1$, is a complete logic. The resulting lattice operations are
    \begin{align}
        P\wedge Q = \slim_{n\to\infty}(PQ)^n,
        \qquad
        P\vee Q = 1-\slim_{n\to\infty}(P^\perp Q^\perp)^n.
    \end{align}
    Two projections are separated, i.e. $P\leq Q^\perp$, if and only if $PQ=0$. The group $\AU(\Hil)$ acts automorphically on $\CP(\Hil)$ via $\alpha_U(P)=UPU^*$.\hfill$\square$
\end{example}

Orthogonal projections are commonly used to model observables and sharp measurements in quantum mechanics (see also Example~\ref{example:Effects} for unsharp measurements).

\bigskip

We now define an involution lattice that strictly contains $\CP(\Hil)$ and is of central importance in this paper. Whereas in the previous example, $\Hil$ can be complex or real, it is now important that $\Hil$ is complex. In this case, we may also think of $\Hil$ as a real Hilbert space, with scalar product $\langle\,\cdot\,,\,\cdot\,\rangle_{\Rl}:=\real\langle\,\cdot\,,\,\cdot\,\rangle$ (which induces the same norm as $\langle\,\cdot\,,\,\cdot\,\rangle$), and consider the lattice of all closed real linear subspaces $H\subset\Hil$, with the same lattice operations as before (partial order by inclusion, maximal/minimal element $\Hil$ and $\{0\}$, with resulting meet $H_1\wedge H_2=H_1\cap H_2$ and join $H_1\vee H_2=\overline{\Rl\text{-span}\{H_1,H_2\}}$).

Instead of the orthogonal complement $\perp_{\Rl}$ w.r.t. $\langle\,\cdot\,,\,\cdot\,\rangle_{\Rl}$ we however consider the map
\begin{align}\label{eq:SymplecticComplement}
    H \mapsto H^\prime
    :=
    iH^{\perp_{\Rl}}
    =
    \left\{\xi\in\mathcal{H}\col\imag\langle\xi,h\rangle=0\,\forall h\in H\right\},
\end{align}
mapping $H$ onto its {\em symplectic complement} $H'$ w.r.t. the symplectic form $\imag\langle\,\cdot\,,\,\cdot\,\rangle$. It is clear that this map is order-reversing, and also involutive because
\begin{align}
    H''=i(iH^{\perp_{\Rl}})^{\perp_{\Rl}}=H^{\perp_{\Rl}\perp_{\Rl}}=\overline H=H.
\end{align}
Hence the closed real linear subspaces with symplectic complementation form a complete involution lattice. However, despite its name, the symplectic complement $H'$ of $H$ is often {\em not} a complement: As a simple example, take $H$ to be the closed real span of an orthonormal basis of $\Hil$. Then one easily sees $H=H'\neq\{0\}$, so $H'$ is not a complement of $H$. We conclude that this lattice is not orthocomplemented and in particular not orthomodular, i.e. there exist proper inclusions with trivial relative symplectic complement \cite{CorreadaSilvaLechner:2025}. The same example also shows that spaces that are separated w.r.t. the symplectic complement do not have to be disjoint (cf.~Definition~\ref{def:SeparationAndDisjointness}) in this lattice.

We now formulate this lattice in terms of the real orthogonal (i.e. real linear and selfadjoint w.r.t. the real inner product $\langle\,\cdot\,,\,\cdot\,\rangle_{\Rl}$) projections onto closed real subspaces.

\begin{example}[\bfseries Real orthogonal projections]\label{example:PRH}
    The set $\CP_{\Rl}(\Hil)$ of all real orthogonal projections $E$ on a complex Hilbert space $\Hil$, with partial order given by $E\leq F\Leftrightarrow E\Hil\subset F\Hil$, involution
    \begin{align}
        E':=1+iEi,
    \end{align}
    and the minimal and maximal projections $0$ and $1$, is a non-orthocomplemented complete involution lattice. The resulting lattice operations are
    \begin{align}\label{eq:JoinInPRH}
        E\wedge F = \slim_{n\to\infty}(EF)^n,
        \qquad
        E\vee F = 1+i\slim_{n\to\infty}(E'F')^n\,i.
    \end{align}
    Two real orthogonal projections are separated, i.e. $E\leq F'$, if and only if $EiF=0$. The group $\AU(\Hil)$ acts automorphically on $\CP_{\Rl}(\Hil)$ via $\alpha_U(E)=UEU^*$.
\end{example}
\begin{proof}
    We have already shown above that the lattice of all closed real subspaces with symplectic complement \eqref{eq:SymplecticComplement} is a non-orthocomplemented complete involution lattice, and now only need to transfer that information to the real orthogonal projections onto these spaces. The minimal/maximal element, order, and join for $\CP_{\Rl}(\Hil)$ are simply the real linear versions of the lattice in Example~\ref{example:PH} and hence need no further explanation.

    Taking into account that the real adjoint of a unitary or antiunitary operator $U$ coincides with its complex adjoint, we see that the automorphic $\AU(\Hil)$-action $H\mapsto UH$ on the subspace lattice transfers to $\alpha_U(E_H):=E_{UH}=UE_HU^*$, where $E_H$ denotes the real orthogonal projection onto $H$. By the same argument, the symplectic complement $H'=iH^{\perp_{\Rl}}$ transfers to
    \begin{align*}
        E_{H'}=E_{iH^{\perp_{\Rl}}}=-iE_{H^{\perp_{\Rl}}}i=-i(1-E_H)i=1+iE_H i,
    \end{align*}
    as claimed. By de Morgan's law \eqref{eq:DeMorgansLaws}, this also implies the formula \eqref{eq:JoinInPRH} for the join in $\CP_{\Rl}(\Hil)$.

    It remains to show the characterization of $E\leq F'$. This separation is equivalent to $E=F'E=EF'=E(1+iFi)$, and hence equivalent to $EiF=0$, as claimed.
\end{proof}

\begin{lemma}\label{lemma:CPintoCPR}
    The embedding $\CP(\Hil)\hookrightarrow\CP_{\Rl}(\Hil)$, $P\mapsto P$, is an injective completely additive $\AU(\Hil)$-covariant homomorphism of involution lattices.
\end{lemma}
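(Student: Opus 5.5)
Work at the level of closed subspaces: a complex orthogonal projection $P$ is identified with the complex-linear closed subspace $\CK=P\Hil$. This subspace is in particular a closed real subspace, and $P$ is also its real orthogonal projection. The reason is that $\real\langle\cdot,\cdot\rangle$ induces the same orthogonality for complex subspaces: $\CK^{\perp_\Rl}=\CK^\perp$, since $\real\langle\xi,k\rangle=0$ for all $k\in\CK$ implies $\real\langle\xi,ik\rangle=0$, hence also $\imag\langle\xi,k\rangle=0$. So the map is well defined and obviously injective. It preserves $0,1$ and the order, which is inclusion in both lattices. Covariance is immediate because both actions are $P\mapsto UPU^*$.

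Next, the involution. For complex $P$ we have $iPi=-P$, since $P$ commutes with $i$. Thus $P'=1+iPi=1-P=P^\perp$. Equivalently, $\CK'=i\CK^{\perp_\Rl}=i\CK^\perp=\CK^\perp$. So the embedding intertwines the orthocomplement of $\CP(\Hil)$ with the symplectic complement.

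For meets and joins, use the subspace descriptions. The meet of arbitrary families is the intersection in both lattices, and an intersection of complex subspaces is complex, so meets agree, including infinite ones. For joins, the closed real span of complex subspaces $\CK_j$ is complex-linear: the real span of complex subspaces is closed under multiplication by $i$, and so is its closure. Hence it coincides with the closed complex span, which is the join in $\CP(\Hil)$. Alternatively, derive joins from meets via de Morgan's laws \eqref{eq:DeMorgansLaws} together with the preservation of the involution. This gives a homomorphism of involution lattices that preserves arbitrary joins, and in particular joins of arbitrary separated families. That is complete additivity. Separation is also preserved: $PQ=0$ gives $PiQ=iPQ=0$, consistent with the criterion in Example~\ref{example:PRH}.

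No step is a real obstacle. The only point requiring care is the observation that real orthogonality coincides with complex orthogonality for complex subspaces. That observation makes the real projection onto $\CK$ equal to $P$ and makes $\CK'=\CK^\perp$.
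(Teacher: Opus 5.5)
Your proof is correct and follows the same route as the paper. Complex projections are real orthogonal projections, $0$, $1$ and joins are preserved, and $P'=1+iPi=1-P=P^\perp$ by complex linearity, so the involutions are intertwined. The only difference is cosmetic: the paper gets preservation of meets from joins and the involution via de Morgan's laws, whereas you check meets directly as intersections. You also make explicit two points the paper leaves implicit: $\CK^{\perp_\Rl}=\CK^\perp$ for complex $\CK$, and the closed real span of complex subspaces is complex.
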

\begin{proof}
    As any complex orthogonal projection is in particular a real orthogonal projection, we have an (injective) completely additive $\AU(\Hil)$-covariant embedding which is immediately seen to preserve $0$, $1$, and joins. For $P\in\CP(\Hil)$, we have $P'=1+iPi=1-P=P^\perp$ by linearity of $P$, so the embedding also respects the involutions and hence meets.
\end{proof}

For both the lattices $\CP(\Hil)$ and $\CP_{\Rl}(\Hil)$, we will freely switch between the subspace picture and the projection picture depending on what is more convenient. The projection onto a complex subspace $\CK$ and real subspace $H$ will be denoted $P_{\CK}$ and $E_H$, respectively. We will reserve the prime symbol $H'$ (and $E_H':=E_{H'}$) to denote symplectic complements.

\bigskip

To further illustrate the concept of an involution lattice, we give two additional examples that are of relevance in quantum physics, although we will not use them in this paper.

\subsection{Effects and spectral order}\label{section:SpectralOrder} As models for unsharp measurements, often a different extension of~$\CP(\Hil)$ is considered, namely the set of {\em effects} \cite{BuschLahtiMittelstaedt:1996}
\begin{align}
    \CE(\Hil)
    =
    \{X\in\B(\Hil) \col 0\leq X\leq 1\}
    .
\end{align}
It is well known that $\CE(\Hil)$, and more generally the set $B(\Hil)_{\sa}$ of all bounded selfadjoint operators, is not a lattice w.r.t. the usual operator order (see, for example, \cite{Kadison:1951,MorelandGudder:1999}). However, effects can be given the structure of an involution lattice w.r.t. a different order. To that end, recall that the {\em spectral order} $\leqs$ on $\CE(\Hil)$ is defined as \cite{Olson:1971}
\begin{align}\label{eq:SpectralOrder}
    X\leqs Y\,:\Leftrightarrow\, E^X_\la \geq E^Y_\la,\quad \la\in\Rl,
\end{align}
where $E^X_\la:=\chi_{(-\infty,\la]}(X)$ is the right-(strongly) continuous spectral resolution of $X$. In this partial order, $\CE(\Hil)$ is a complete bounded lattice with minimal/maximal elements $0,1$. The spectral order is different from the operator order, but in case $X$ and $Y$ commute, $X\leq Y$ is equivalent to $X\leqs Y$. Moreover, for $X,Y$ effects, $X\leqs Y$ is equivalent to $X^n\leq Y^n$ for all $n\in\Nl$.

The join $\vees$ and meet $\wedges$ of $X,Y\in \CE(\Hil)$ w.r.t. spectral order are the unique selfadjoint effects $X\vees Y$ and $X\wedges Y$ with spectral resolution $E^{X\vees Y}_\la=E^X_\la\wedge E^Y_\la$ and $E^{X\wedges Y}_\la=\bigwedge_{\mu>\la}(E^X_\la\vee E^Y_\la)$, respectively \cite{Olson:1971}.

\begin{example}[\bfseries Effects with spectral order as an involution lattice]\label{example:Effects}
    The set $\CE(\Hil)$ of effects is a complete non-orthocomplemented involution  lattice w.r.t. the spectral order \eqref{eq:SpectralOrder}, the usual minimal/maximal projections $0$ and $1$, and the order-reversing involution
    \begin{align}
        X^\perp = 1-X.
    \end{align}
    Two effects are separated, i.e. $X\leqs Y^\perp$, if and only if $X^n\leq(1-Y)^n$ for all $n\in\Nl$. The group $\AU(\Hil)$ acts automorphically on $\CE(\Hil)$ via $\alpha_U(X)=UXU^*$.
\end{example}

We refer to the literature \cite{Olson:1971,deGroote:2004,Hamhalter:2007} for the proof of these claims. For example, it is proven in \cite[Proposition~4.2]{deGroote:2004} that $X\wedge X^\perp=0$ is equivalent to $X$ being a projection.

Since spectral order $X\leqs Y$ is equivalent to $X^n\leq Y^n$ for all $n$, one can also easily show that the embedding $\CP(\Hil)\hookrightarrow\CE(\Hil)$, $P\mapsto P$, is an injective $\AU(\Hil)$-covariant completely additive homomorphism of involution lattices.

\subsection{The involution lattice of von Neumann algebras}\label{section:vonNeumannAlgebras}
Another important example of an involution lattice is the set $\CN(\Hil)$ of all von Neumann subalgebras $\CM\subset B(\Hil)$ on a given Hilbert space $\Hil$. This is a bounded lattice with partial order given by inclusion and minimal/maximal elements $\Cl1$ and $B(\Hil)$, respectively. By the Bicommutant Theorem, the commutant map $\CM\mapsto\CM':=\{A\in B(\Hil)\col [A,M]=0\,\forall M\in\CM\}$ is an order-reversing involution, making $\CN(\Hil)$ an involution lattice. The resulting lattice operations are $\CM_1\wedge\CM_2=\CM_1\cap\CM_2$ and $\CM_1\vee\CM_2=(\CM_1\cup\CM_2)''$.

The commutant is not a complement because $\CM\cap\CM'=\Cl1$ only holds for factors. In particular, $\CN(\Hil)$ is not orthomodular (proper inclusions with trivial relative complements exist). $\CN(\Hil)$ carries a natural automorphic action of $\AU(\Hil)$. This lattice plays an important role in the operator-algebraic formulation of quantum field theory \cite{Haag:1996,Araki:1963}.

\section{Localization Observables}\label{section:LocalizationObservables}

\subsection{General observables on involution lattices}

In standard quantum mechanics, an observable is a selfadjoint operator on a Hilbert space $\Hil$, or, equivalently, a projection-valued measure $P:\Bor(\Rl^d)\to\CP(\Hil)$ (for $d>1$, this includes the case of several commuting selfadjoint operators and their joint spectral measure $P$, like in the case of the three-component position operator on $L^2(\Rl^3)$). Projection-valued measures are therefore usually regarded as the right formulation of (sharp) observables \cite{BuschLahtiMittelstaedt:1996}.

The defining properties of a projection-valued measure $P$, namely its normalization $P(\emptyset)=0$ and its $\sigma$-additivity, imply various other properties that are often seen as indispensible properties of a physical observable \cite{Wightman:1962}, and in fact they imply that $P:\Bor(\Rl^d)\to\CP(\Hil)$ is a logic homomorphism. For normalized $\sigma$-additive maps between more general logics, one has instead to demand more properties in the definition of an observable \cite[Ch.~III.2]{Varadarajan:2006}. We here give a further generalization of this concept that also covers the case of involution lattices. Following the usual terminology, we call such maps ``observables'' although it is clear from the generality of the definition below that an interpretation as a physical observable will require further assumptions.

\begin{definition}\label{def:Observable}
    Let $\CL$ and $\CQ$ be involution lattices.
    \begin{enumerate}
        \item An {\em observable} is a map $f:\CL\to\CQ$ that is
        \begin{enumerate}[{\textrm (i)}]
            \item\label{item:ObservableNormalization} {\em normalized}: $f(0)=0$ and $f(1)=1$,
            \item\label{item:ObservableAdditivity} {\em $\sigma$-additive} in the sense of \eqref{eq:additivity},
            \item\label{item:ObservableSeparation} \emph{separation preserving}: $A\leq B^{\perp}\Rightarrow f(A)\leq f(B)^\perp$ for all $A,B\in\CL$.
        \end{enumerate}

        \item A {\em finitely (completely) additive observable} is defined in the same manner, but with \ref{item:ObservableAdditivity} required only for finite (even for arbitrary) separated subsets of $\CL$.

        \item An observable $f:\CL\to\CQ$ is called {\em $G$-covariant} (with $G$ a group) if both $\CL$ and~$\CQ$ carry automorphic $G$-actions and $f$ is covariant in the sense of Definition~\ref{def:AutomorphicActionsAndCovariance}~\ref{item:Covariance}.
    \end{enumerate}
\end{definition}

For our purposes, $\CL$ will always be a logic (either $\Bor(\Rl^d)$ or $\CC(\Rl^{d+1})$) and $\CQ$ an involution lattice. We next derive some properties of such maps.

\begin{lemma}\label{lemma:Observables}
    Let $\CL$ be a logic, $\CQ$ an involution lattice, and $f:\CL\to\CQ$ an observable. Then, for any $A,B\in\CL$,
    \begin{enumerate}
        \item\label{item:fmonotone} $f$ is monotone: $A\leq B\Rightarrow f(A)\leq f(B)$,

        \item\label{item:fpreservesjoins} If $\CL$ is Boolean, $f$ preserves joins: $f(A\vee B)=f(A)\vee f(B)$.

        \item\label{item:fcomplement} $f(A^\perp)\leq f(A)^\perp$. If $\CQ$ is a logic, then $f(A^\perp)=f(A)^\perp$.

        \item\label{item:fbooleanlogiccase} If $\CL$ is Boolean and $\CQ$ a logic, then $f$ is a logic homomorphism.
    \end{enumerate}
\end{lemma}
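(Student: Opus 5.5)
The plan is to derive everything from orthomodularity of $\CL$ together with the three defining properties of an observable (normalization, $\sigma$-additivity, separation preservation). The key tool is the orthomodular decomposition: for $A\leq B$ in $\CL$, the elements $A$ and $C:=A^\perp\wedge B$ are separated, since $C\leq A^\perp$, and $B=A\vee C$.

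\emph{Monotonicity (a).} For $A\leq B$, set $C:=A^\perp\wedge B$. Finite additivity applied to the separated pair $\{A,C\}$ gives $f(B)=f(A\vee C)=f(A)\vee f(C)\geq f(A)$. The degenerate case $A=C$ forces $A\leq A^\perp$. Since $A^\perp$ is a complement of $A$, this gives $A=A\wedge A^\perp=0$, and then $f(A)=0\leq f(B)$ trivially.

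\emph{Complements (c).} The inequality $f(A^\perp)\leq f(A)^\perp$ is immediate from separation preservation, because $A^\perp\leq A^\perp$, i.e.\ $A^\perp$ is separated from $A$. If $\CQ$ is a logic, apply additivity to $\{A,A^\perp\}$: $1=f(1)=f(A\vee A^\perp)=f(A)\vee f(A^\perp)$. Put $P:=f(A)$ and $Q:=f(A^\perp)$, so $Q\leq P^\perp$ and $P\vee Q=1$. Orthomodularity of $\CQ$ gives $P^\perp=Q\vee(Q^\perp\wedge P^\perp)$. Now $Q^\perp\wedge P^\perp=(Q\vee P)^\perp=1^\perp=0$ by de Morgan, hence $P^\perp=Q$.

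\emph{Joins in the Boolean case (b).} If $\CL$ is Boolean, write $A\vee B=A\vee(A^\perp\wedge B)$, where $A$ and $A^\perp\wedge B$ are separated. Additivity and monotonicity then give
\[
f(A\vee B)=f(A)\vee f(A^\perp\wedge B)\leq f(A)\vee f(B).
\]
The reverse inequality $f(A)\vee f(B)\leq f(A\vee B)$ follows from (a). Distributivity is what guarantees that the decomposition above is valid.

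\emph{Homomorphism (d).} If $\CL$ is Boolean and $\CQ$ a logic, then (b) and (c) give preservation of $0$, $1$, joins and complements. Meets follow by de Morgan's laws \eqref{eq:DeMorgansLaws} in both lattices: $f(A\wedge B)=f((A^\perp\vee B^\perp)^\perp)=(f(A)^\perp\vee f(B)^\perp)^\perp=f(A)\wedge f(B)$.

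The main subtlety I expect is in (c): the equality $f(A^\perp)=f(A)^\perp$ needs orthomodularity of $\CQ$, not merely the involution property. Without it, $P\vee Q=1$ with $Q\leq P^\perp$ does not force $Q=P^\perp$, and this is exactly what fails in $\CP_{\Rl}(\Hil)$.
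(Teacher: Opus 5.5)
Your proof is correct and follows the paper's argument essentially step by step. You use the orthomodular decomposition $B=A\vee(A^\perp\wedge B)$ with additivity for monotonicity, the same decomposition (justified by distributivity) for joins, separation preservation plus orthomodularity of $\CQ$ and $f(A)\vee f(A^\perp)=f(1)=1$ for the complement identity, and de Morgan's laws for meets; your separate treatment of the degenerate case $A=A^\perp\wedge B$ is harmless but unnecessary, since a one-element family is trivially separated and then $B=A$.
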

\begin{proof}
    \ref{item:fmonotone} Let $A\leq B$. We have $B=A\vee (B\wedge A^\perp)$ by orthomodularity of $\CL$. As $A$ and $B\wedge A^\perp$ are separated, additivity of $f$ implies $f(B)=f(A)\vee f(B\wedge A^\perp)\geq f(A)$.

    \ref{item:fpreservesjoins} Using the distributive law in $\CL$, we decompose $A\vee B=A\vee(B\wedge A^\perp)$. As $A$ and $B\wedge A^\perp$ are separated and $f$ is monotone, we find
    \begin{align*}
        f(A)\vee f(B)
        &\leq
        f(A\vee B)=f(A)\vee f(B\wedge A^\perp)\leq f(A)\vee f(B),
    \end{align*}
    and the claim follows.

    \ref{item:fcomplement} The separation preservation of $f$ applied to the trivial inclusion $A^\perp\leq A^\perp$ results in the inequality $f(A^\perp)\leq f(A)^\perp$. If $\CQ$ is also orthomodular, application of the orthomodularity relation to $f(A^\perp)\leq f(A)^\perp$ gives
    \begin{align*}
        f(A)^\perp
        &=
        f(A^\perp)\vee\left[f(A)^\perp\wedge f(A^\perp)^\perp\right]
        =
        f(A^\perp)\vee\left[f(A)\vee f(A^\perp)\right]^\perp
        =
        f(A^\perp),
    \end{align*}
    where we have used that $f(A)\vee f(A^\perp)=1$ (which follows from normalization and separation preservation) in the last step.

    \ref{item:fbooleanlogiccase} This follows directly from the previous items and the de Morgan's laws.
\end{proof}

To illustrate the concept, we give some examples.

\begin{example}\label{example:Observables}
    \leavevmode
    {\em
    \begin{enumerate}
        \item Any logic homomorphism is an observable. In particular, the causal completion map $\cc$ in Lemma~\ref{lemma:CausalEmbedding} and the embeddings $\CP(\Hil)\hookrightarrow\CP_{\Rl}(\Hil)$ (Lemma~\ref{lemma:CPintoCPR}) and $\CP(\Hil)\hookrightarrow\CE(\Hil)$ (Section~\ref{section:SpectralOrder}) are observables that are Euclidean covariant and $\AU(\Hil)$-covariant, respectively.

        \item For $\CL$ a $\sigma$-algebra and $\CQ=\CP(\Hil)$, observables are the same as projection-valued measures. This example motivates the name ``observable''.

        \item\label{item:SystemOfImprimitivity} If $\CL$ is the $\sigma$-algebra of a standard Borel space $X$ on which a locally compact second countable group $G$ acts, and $G$ acts on $\CP(\Hil)$ as
        \begin{align}
            \alpha_g(P)=U(g)PU(g)^*,\qquad g\in G,
        \end{align}
        with a strongly continuous unitary representation $U$ of $G$, then a $G$-covariant observable $\CL\to\CP(\Hil)$ is the same as a \emph{system of imprimitivity} for $U$ based on $X$.

        \item A finitely additive observable between the logic of Minkowski spacetime $\CL=\CC(\Rl^{d+1})$ and the involution lattice $\CN(\Hil)$ of all von Neumann algebras on a Hilbert space $\Hil$ (Section~\ref{section:vonNeumannAlgebras}) is a quantum field theory respecting the usual locality, isotony and additivity properties of the Haag-Kastler framework \cite{Haag:1996}. If $\CC(\Rl^{d+1})$ is given its natural Poincaré action (see Example~\ref{example:MinkowskiLogic}), and the Poincaré group $\Poi(d+1)$ acts on $\CN(\Hil)$ as $\alpha_g(\CN)=U(g)\CN U(g)^*$ via a strongly continuous (anti)unitary representation $U$ of $\Poi(d+1)$, then the $\Poi(d+1)$-covariance of the observable expresses the Poincaré covariance of this quantum field theory.
    \end{enumerate}
    }
\end{example}

A general observable $f:\CL\to\CQ$ is not a lattice homomorphism, and in particular,~$f$ may fail to preserve joins for non-Boolean logics $\CL$, such as the causal logic $\CC(\Rl^{d+1})$ of Minkowski space. This failure of preserving joins is a property that should be expected of a (yet to be defined) spacetime localization observable: Motivated by the idea that the probability to find a quantum particle at an exact point $x\in\Rl^{d+1}$ should be zero, we anticipate that an observable $f:\CC(\Rl^{d+1})\to\CQ$ admitting such an interpretation should vanish on one point sets, $f(\{x\})=0$. But the join $\{x\}\vee\{y\}\in\CC(\Rl^{d+1})$ of two one point sets contains an open double cone in case $x$ lies timelike to~$y$, so it is reasonable to expect $f(\{x\}\vee\{y\})\neq0=f(\{x\})\vee f(\{y\})$.

\subsection{Covariant spatial and spacetime localization observables}

We will now sharpen the general concept of an observable as in Definition~\ref{def:Observable} in the following stricter Definitions~\ref{def:CovariantLocalizationObservables} and \ref{def:CausalLocalizationObservables} which require covariance and causality properties.

\begin{definition}[\bfseries covariant (spatial) localization observables]\label{def:CovariantLocalizationObservables}
    Let $\CQ$ be an involution lattice.
    \begin{enumerate}
        \item Let $G$ be a subgroup of the Euclidean group $\Eu(d)$. A {\em $G$-covariant spatial localization observable in $\CQ$} is a $G$-covariant observable $f:\Bor(\Rl^d)\to\CQ$ (an automorphic action of $G$ on $\CQ$ is assumed).

        \item\label{item:CovariantSpacetimeLocalizationObservable} Let $G$ be a subgroup of the Poincaré group $\Poi(d+1)$. A {\em $G$-covariant spacetime localization observable in $\CQ$} is a $G$-covariant observable $f:\CC(\Rl^{d+1})\to\CQ$ (an automorphic action of $G$ on $\CQ$ is assumed).
    \end{enumerate}
\end{definition}

As remarked in Example~\ref{example:Observables}~\ref{item:SystemOfImprimitivity}, covariant spatial localization observables taking values in $\CP(\Hil)$ with $G$-action given by a unitary representation are systems of imprimitivity. The most important example in this context are the Newton-Wigner operators \cite{NewtonWigner:1949}.

\begin{example}[\bfseries The Newton-Wigner operators]\label{example:NewtonWigner}
    {\em
        Let $U$ be a unitary representation of the Poincaré group $\Poi(d+1)$ on a Hilbert space $\Hil$, and let $P:\Bor(\Rl^d)\to\CP(\Hil)$ be an observable that is Euclidean covariant, where the Euclidean action on $\CP(\Hil)$ is given by $\alpha_g(Q)=U(g)QU(g)^*$ (in this situation, we will call the observable $U$-covariant for brevity), i.e. a system of imprimitivity for $\Eu(d)$. Then the corresponding {\em Newton-Wigner operators for $U$} are defined as
        \begin{align}
            \NW^U_k
            :=
            \int_{\Rl^d} x_k\,dP(\bx),\qquad k=1,\ldots,d.
        \end{align}
        If the representation $U$ is irreducible, Newton-Wigner operators exist if $U$ massive, or massless with helicity zero \cite[Cor.~9.17]{Varadarajan:2006}. By imposing invariance under the time reflection from $U$ and a regularity condition, these are also unique \cite{Wightman:1962}.

        For example, consider the scalar mass $m>0$ representation $U=U_{m,0}$. With $\om(\bp)=(m^2+\|\bp\|^2)^{1/2}$ the one-particle energy, we then have $\Hil=L^2(\Rl^d,\om(\bp)^{-1}d\bp)$ and
        \begin{align}\label{eq:Hilm}
            (U_{m,0}(x,\La)\psi)(\bp)
            &=
            e^{ip\cdot x}\psi([\La^{-1}p]),
        \end{align}
        where $p=(\om(\bp),\bp)$ and $[\La^{-1}p]\in\Rl^d$ denotes the spatial part of $\La^{-1}p\in\Rl^{d+1}$.

        The multiplication operator $\om^{-1/2}$ is a unitary map $\om^{-1/2}:\Hil\to L^2(\Rl^d,d\bp)$, and the Fourier transform is a unitary $\CF:L^2(\Rl^d,d\bx)\to L^2(\Rl^d,d\bp)$. Denoting the canonical projection-valued measure on $L^2(\Rl^d,d\bx)$ by $(P_0(A)f)(\bx)=\chi_A(\bx)f(\bx)$, the projections
        \begin{align}\label{eq:NWPVM}
            P^{\NW}_{U_{m,0}}:\Bor(\Rl^d)
            \to\CP(\Hil),\qquad
            A\mapsto
            \om^{1/2}\CF P_0(A)\CF^{-1}\om^{-1/2},
        \end{align}
        form the regular time-reflection invariant system of imprimitivity. We refer to \cite{Moretti:2023} for a recent discussion of various properties of these measures and their associated Newton-Wigner operators.
        }
\end{example}

\subsection{Translationally covariant causal localization observables}

We now consider causality properties of localization observables that are covariant under the spatial or spacetime translation group. For the time being, Lorentz invariance will play no role and we will effectively choose a reference frame given by a fixed time zero Cauchy surface $\{0\}\times\Rl^d\subset\Rl^{d+1}$ of Minkowski spacetime. The time vector $(1,\mathbf{0})$ in this reference frame will be denoted $e_0$.

The most elementary notion of causality, suitable for purely spatial formulations, is to require a dynamics with finite speed of propagation. As we are interested in massive particles moving at velocities smaller than the speed of light, we will work with the open forward lightcone $V^+$, and write
\begin{align}
    \bB:=\{\bx\in\Rl^d\col\|\bx\|<1\}
\end{align}
for the open unit ball in $\Rl^d$.

\begin{definition}[\bfseries causal spatial localization observables]\label{def:CausalLocalizationObservables}
    Let $\CQ$ be an involution lattice with automorphic $\Rl^d$-action $\sigma$, and $f:\Bor(\Rl^d)\to\CQ$ an $\Rl^d$-covariant spatial localization observable.
    \begin{enumerate}
        \item A {\em dynamics} is a one-parameter group $(\tau_t)_{t\in\Rl}\in\Aut\CQ$ that commutes with the $\Rl^d$-action $\sigma_{\bx}\in\Aut\CQ$, $\bx\in\Rl^d$, and thus extends it to an $\Rl^{d+1}$-action $\alpha_{(t,\bx)}=\tau_t\sigma_{\bx}$.

        \item A dynamics $\tau$ has {\em speed of propagation bounded by} $0<v<\infty$ if the {\em speed limit condition}
        \begin{align}\label{eq:BoundedSpeed}
            \tau_t(f(A)) \leq
            f([A+|t|v\bB]^\perp)^\perp,\qquad A\in\Bor(\Rl^d),
        \end{align}
        holds. The dynamics satisfies the \emph{strong speed limit condition} if even
        \begin{align}\label{eq:BoundedSpeedStrong}
            \tau_t(f(A)) \leq
            f(A+|t|v\bB),\qquad A\in\Bor(\Rl^d)
        \end{align}
        holds.
    \end{enumerate}
\end{definition}

Note that the speed limit condition \eqref{eq:BoundedSpeed} is weaker than the strong speed limit condition \eqref{eq:BoundedSpeedStrong} in a general involution lattice, but equivalent to it if $\CQ$ is a logic (Lemma~\ref{lemma:Observables}~\ref{item:fcomplement}).

Consider two Borel sets $A,B$ that are a distance $d(A,B)>0$ apart. Then $A+|t|v\bB\subset B^\perp$ for $|t|v\leq d(A,B)$, and the speed limit condition implies that $\tau_t(f(A))$ and $f(B)$ remain separated in $\CQ$ for such times.

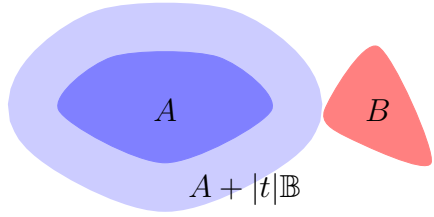
\begin{figure}[h]
    \centering
    \begin{tikzpicture}[scale=0.7]

        \filldraw[draw=blue!20, fill=blue!20, line width=13mm] plot [smooth cycle] coordinates {(0,0) (1,1) (3,1) (4,0) (2,-1)};

         \filldraw[draw=blue!50, fill=blue!50] plot [smooth cycle] coordinates {(0,0) (1,1) (3,1) (4,0) (2,-1)};

        \filldraw[draw=red!50, fill=red!50] plot [smooth cycle] coordinates {(5,0) (6,1.2) (7,-1) (5.4,-0.5)};

        \node at (2,0) {$A$};
        \node at (3.5,-1.5) {$A+|t|\bB$};
        \node at (6,0) {$B$};
    \end{tikzpicture}

    \caption{As the spatial regions $A+|t|\bB$ and $B$ remain separated for small times, $\tau_t(f(A))$ remains separated from $f(B)$ for a causal spatial localization observable $f$.}
\end{figure}

In case $\CQ$ is a logic, the speed limit condition equals the strong speed limit condition and becomes very similar to (a consequence of) Castrigiano's causality condition for projection-valued measures \cite{Castrigiano:2017}. To explain the idea, think of the elements $f(A)$ as representing the probability that a particle is localized in the region $A$. With $\tau_t$ representing the time evolution, a particle that is localized in $A$ at time $t=0$ and moves with a velocity less than the speed limit $v$, will at time $t>0$ be localized in the enlarged region $A+tv\bB$. As also particles that were localized outside of $A$ at $t=0$ may enter $A+tv\bB$ at $t>0$, we expect $ \tau_t(f(A))\leq f(A+|t|v\bB)$. Conversely, if a particle is localized in $A+|t|v\bB$ for $t<0$, it may, but does not have to, enter $A$ at $t=0$, leading again to the inequality $ \tau_t(f(A))\leq f(A+|t|v\bB)$.

\medskip

We also want to consider {\em spacetime} localization observables $F:\CC(\Rl^{d+1})\to\CQ$, which by definition satisfy the causality condition $\CO_1\subset\CO_2'\Rightarrow F(\CO_1)\leq F(\CO_2)^\perp$ (Definition~\ref{def:CovariantLocalizationObservables}~\ref{item:CovariantSpacetimeLocalizationObservable}). In order to connect this causal behaviour to {\em spatial} localization observables with finite speed of propagation, we first recall the definition of the {\em chronological future/past} $I^\pm$ and $I$ \eqref{eq:defI}, and agree to write $I^{(\pm)}(A):=I^{\pm}(\{0\}\times A)$ for $A\subset\Rl^{d}$.  Denoting by $\Sigma=\{0\}\times\Rl^d$ the time zero Cauchy surface, one easily checks
\begin{align}\label{eq:Ivstfat}
    (I(A)+x)\cap\Sigma
    =
    A+\bx+|x_0|\bB
    ,
    \qquad
    A\in\Bor(\Rl^d),\quad x\in\Rl^{d+1}.
\end{align}
In the following lemma, we will be interested in Borel sets $A,B\in\Bor(\Rl^d)$ and spacetime vectors $x,y\in\Rl^{d+1}$ such that $\cc(A)+x\subset(\cc(B)+y)'$. Taking into account the translation covariance of $\cc$ and the causal complement, as well as $I(\CO)'=\Rl^{d+1}\setminus I(\CO)$, and \eqref{eq:Ivstfat}, it becomes apparent that
\begin{align}\label{eq:geometry}
    \cc(A)+x\subset(\cc(B)+y)'
    \Longleftrightarrow
    B\subset [A+(\bx-\by)+|x_0-y_0|\bB]^\perp.
\end{align}

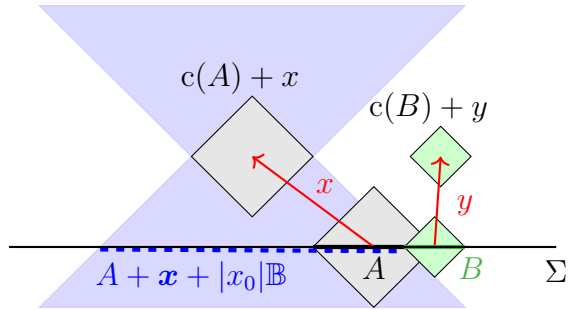
\begin{figure}[h]
    \centering
    \begin{tikzpicture}[scale=0.8]
        \draw[draw=blue!20,fill=blue!15!white] (-4,1.5) -- ++(-2.5,2.5) --  ++(7,0) -- (-2,1.5) -- ++(2.5,-2.5) -- ++(-7,0) -- cycle;

        \draw[draw=black,fill=black!10!white] (-2,0) -- (-1,1) -- (0,0) -- (-1,-1) -- cycle;
        \draw[draw=black,fill=black!10!white] (-4,1.5) -- (-3,2.5) -- (-2,1.5) -- (-3,0.5) -- cycle;

        \draw[ultra thick] (-2,0) to (0,0);
        \draw[->,red,thick] (-1,0) to (-3,1.5);

        \node at (-1.8,1) {\textcolor{red}{$x$}};
        \node at (-3.2,2.8) {$\cc(A)+x$};

        \definecolor{clr2}{RGB}{31,182,83};
        \draw[draw=black,fill=green!20!white] (-0.5,0) -- (0,0.5) -- (0.5,0) -- (0,-0.5) -- cycle;

        \begin{scope}[shift={(0.1,1.5)}]
            \draw[draw=black,fill=green!20!white] (-0.5,0) -- (0,0.5) -- (0.5,0) -- (0,-0.5) -- cycle;
        \end{scope}
        \draw[->,red,thick] (0,0) to ++(0.1,1.5);
        \node at (0.5,0.7) {\textcolor{red}{$y$}};
        \node at (-0.1,2.3) {$\cc(B)+y$};

        \draw[ultra thick,draw=DGr] (-.5,-0) to (0.5,-0);

        \node at (-1,-0.35) {$A$};
        \node at (0.6,-0.35) {\textcolor{DGr}{$B$}};

        \draw[ultra thick,dashed,draw=blue] (-5.5,-0.04) to (-0.5,-0.07);
        \node[blue] at (-4,-0.5) {$A+\bx+|x_0|\bB$};

        \draw[thick,draw=black] (-7,0) to (2,0);
        \node[black] at (2,-0.4) {$\Sigma$};
    \end{tikzpicture}
\caption{Illustration of the configurations in \eqref{eq:Ivstfat} and \eqref{eq:geometry}.}
\end{figure}

\begin{lemma}
    An $\Rl^d$-covariant spatial localization observable $f:\Bor(\Rl^d)\to\CQ$ with dynamics $\tau$ and induced $\Rl^{d+1}$-action $\alpha$ has speed of propagation bounded by $1$ if and only if it is {\em causal} in the sense that for all $A,B\in\Bor(\Rl^d)$ and all $x,y\in\Rl^{d+1}$, the implication
    \begin{align}\label{eq:LocalDynamics}
        \cc(A)+x \subset (\cc(B)+y)'
        \Rightarrow
        \alpha_x(f(A)) \leq \alpha_y(f(B))^\perp
    \end{align}
    holds.
\end{lemma}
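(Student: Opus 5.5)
The plan is to reduce both directions to the geometric equivalence \eqref{eq:geometry}, together with the translation covariance of $f$ and the group property of $\alpha$. In $\Bor(\Rl^d)$ the involution is the set complement, so $\perp$ means $^c$ there. Throughout I use that $\alpha_z$ is an automorphism of $\CQ$, so it preserves order and involution, and that $\alpha_{(0,\bx)}=\sigma_{\bx}$ satisfies $\sigma_{\bx}(f(A))=f(A+\bx)$.

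\emph{Speed limit $\Rightarrow$ causality.} Assume $\cc(A)+x\subset(\cc(B)+y)'$. By \eqref{eq:geometry}, with $z:=x-y$, this means $B\subset[A+\bz+|z_0|\bB]^\perp$, i.e.\ $A+\bz+|z_0|\bB\subset B^\perp$. Applying $\alpha_{-y}$ to the desired inequality, it suffices to show $\alpha_z(f(A))\leq f(B)^\perp$. Write $\alpha_z=\tau_{z_0}\sigma_{\bz}$ and use covariance to get $\alpha_z(f(A))=\tau_{z_0}(f(A+\bz))$. The speed limit condition \eqref{eq:BoundedSpeed} with $v=1$ gives
\begin{align*}
\tau_{z_0}(f(A+\bz))\leq f([A+\bz+|z_0|\bB]^\perp)^\perp .
\end{align*}
Since $B\subset[A+\bz+|z_0|\bB]^\perp$, monotonicity of $f$ (Lemma~\ref{lemma:Observables}~\ref{item:fmonotone}) and the order reversal of $\perp$ yield $f([A+\bz+|z_0|\bB]^\perp)^\perp\leq f(B)^\perp$, which is what we need.

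\emph{Causality $\Rightarrow$ speed limit.} Fix $A$ and $t$, and put $B:=[A+|t|\bB]^\perp$, $x:=(t,\mathbf 0)$, $y:=0$. Then $A+|t|\bB\subset B^\perp$ trivially, so \eqref{eq:geometry} gives $\cc(A)+x\subset\cc(B)'$. Causality \eqref{eq:LocalDynamics} then gives $\tau_t(f(A))=\alpha_x(f(A))\leq f(B)^\perp=f([A+|t|\bB]^\perp)^\perp$, which is \eqref{eq:BoundedSpeed} with $v=1$.

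The main point needing care is the measurability subtlety from the footnote: $A+|t|\bB$ is open since $\bB$ is open, hence Borel, so $B\in\Bor(\Rl^d)$ and $f(B)$ is defined. I also need to check that \eqref{eq:geometry} is stated for arbitrary Borel $A,B$ and translations, which the paper asserts from \eqref{eq:Ivstfat}. For $t=0$ the set $A+0\cdot\bB$ should be read as $A$, and then the claim reduces to $f(A)\leq f(A^{c})^\perp$, which holds by separation preservation. Beyond these points everything is bookkeeping with the automorphic action.
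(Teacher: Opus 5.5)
Your proposal is correct and follows the paper's proof essentially verbatim. In both arguments, both directions go through the geometric equivalence \eqref{eq:geometry}: the forward direction combines covariance, the speed limit condition and monotonicity of $f$, and the converse uses the choice $B=[A+|t|\bB]^\perp$, $x=te_0$, $y=0$. Your explicit reduction via $\alpha_{-y}$, the measurability remark and the $t=0$ comment only spell out details that the paper compresses into ``by translation covariance, it suffices to consider $y=0$''; the $t=0$ case is already covered by the general argument.
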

\begin{proof}
    By translation covariance, it suffices to consider $y=0$. By \eqref{eq:geometry}, the assumption is then equivalent to $B\subset[A+\bx+|x_0|\bB]^\perp$. If $f$ has speed of propagation bounded by~$1$, the speed limit condition \eqref{eq:BoundedSpeed} yields
    \begin{align}
        \alpha_x(f(A))
        =
        \tau_{x_0}(f(A+\bx))
        \leq
        f([A+\bx+|x_0|\bB]^\perp)^\perp
        \leq
        f(B)^\perp
        .
    \end{align}
    Thus $f$ is causal in the sense of \eqref{eq:LocalDynamics}.

    Conversely, if $f$ is causal, let $t\in\Rl$, $A\in\Bor(\Rl^d)$ and set $B:=(A+|t|\bB)^\perp$. According to \eqref{eq:geometry}, we then have $\cc(A)+te_0\subset\cc(B)'$, and \eqref{eq:LocalDynamics} yields
    \begin{align}
        \tau_t(f(A))
        \leq
        f(B)^\perp
        =
        f([A+|t|\bB]^\perp)^\perp,
    \end{align}
    which shows that\footnote{ In this last step, we can not conclude the sharper inequality $\tau_t(f(A)) \leq f(A+|t|\bB)$, which is why we base our analysis on \eqref{eq:BoundedSpeed}.} $f$ has propagation speed bounded by $1$.
\end{proof}

Thanks to this lemma, we call spatial localization observables with speed of propagation bounded by $1$ \emph{causal}. We can now can easily pass from spatial to spacetime localization observables and vice versa.

\begin{proposition}\label{prop:SpacetimeObservableInduceCausalSpatialObservables}
    Let $\CQ$ be a complete involution lattice.
    \begin{enumerate}
        \item\label{item:SpacetimeToSpatial} Let $F:\CC(\Rl^{d+1})\to\CQ$ be a translationally covariant spacetime localization observable (w.r.t. some automorphic $\Rl^{d+1}$-action $\alpha$ on $\CQ$). Then
        \begin{align}
            f := F\circ \cc : \Bor(\Rl^d) \to \CQ
        \end{align}
        is a translationally covariant spatial localization observable that is causal w.r.t. the dynamics $\tau_t:=\alpha_{(t,\boldsymbol{0})}$ (with $\cc$ defined in Definition~\ref{def:c}).

        \item\label{item:SpatialToSpacetime} Let $f:\Bor(\Rl^d)\to\CQ$ be an observable with causal dynamics $\tau$, and let $\CQ$ be complete. Then
        \begin{align}\label{eq:F}
            F:\CC(\Rl^{d+1})\to\CQ,
            \qquad
            F(\CO)
            :=
            \bigvee_{x\in\Rl^{d+1},A\in\Bor(\Rl^d)\atop \cc(A)+x\subset\CO} \alpha_x(f(A))
        \end{align}
        is a spacetime localization observable.
    \end{enumerate}
\end{proposition}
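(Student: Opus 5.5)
For part \ref{item:SpacetimeToSpatial}, I will check the axioms of Definition~\ref{def:Observable} for $f=F\circ\cc$ by composing properties of $\cc$ and $F$. Lemma~\ref{lemma:CausalEmbedding} gives that $\cc$ is a normalized, $\sigma$-additive logic homomorphism that preserves separation. Hence $f(\emptyset)=F(\emptyset)=0$ and $f(\Rl^d)=F(\Rl^{d+1})=1$. A separated countable family $(A_n)$ is mapped to a separated family $(\cc(A_n))$ with $\cc(\bigcup A_n)=\bigvee\cc(A_n)$, so the $\sigma$-additivity of $F$ gives that of $f$. Separation preservation follows in the same way. For translation covariance, note that for $\bx\in\Rl^d$ we have $\cc(A+\bx)=\cc(A)+(0,\bx)$, since the causal complement commutes with translations. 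I then set $\sigma_{\bx}:=\alpha_{(0,\bx)}$ and $\tau_t:=\alpha_{(t,\mathbf 0)}$; these commute because $\alpha$ is an action of the abelian group $\Rl^{d+1}$. For causality I use the preceding lemma. If $\cc(A)+x\subset(\cc(B)+y)'$, then covariance of $F$ gives $\alpha_x f(A)=F(\cc(A)+x)$ and $\alpha_y f(B)=F(\cc(B)+y)$. Separation preservation of $F$ then yields \eqref{eq:LocalDynamics}.

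For part \ref{item:SpatialToSpacetime}, the join in \eqref{eq:F} exists by completeness of $\CQ$. Covariance is not claimed, although it would follow from reindexing. I will verify the three axioms in the following order.

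\emph{Normalization.} $F(\emptyset)=0$, since $\cc(A)+x\subset\emptyset$ forces $A=\emptyset$ and $f(\emptyset)=0$. $F(\Rl^{d+1})\ge f(\Rl^d)=1$.

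\emph{Separation preservation.} If $\CO_1\subset\CO_2'$, then every pair of generators satisfies $\cc(A)+x\subset\CO_1\subset\CO_2'\subset(\cc(B)+y)'$. Causality \eqref{eq:LocalDynamics} then gives $\alpha_x f(A)\le\alpha_y f(B)^\perp$. Taking the join over $(B,y)$ and using de Morgan \eqref{eq:DeMorgansLaws}, the element $\alpha_xf(A)$ lies below $\bigwedge\alpha_yf(B)^\perp=F(\CO_2)^\perp$. Joining over $(A,x)$ gives $F(\CO_1)\le F(\CO_2)^\perp$. The same argument also shows that $F$ is monotone.

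\emph{$\sigma$-additivity.} This is the main obstacle. Let $(\CO_n)$ be separated. Monotonicity gives $\bigvee F(\CO_n)\le F(\bigvee\CO_n)$. For the converse I must show $\alpha_xf(A)\le\bigvee_nF(\CO_n)$ whenever $\cc(A)+x\subset\bigvee_n\CO_n=(\bigcup\CO_n)''$. The plan is to translate so that $x_0=0$, that is, to work on the Cauchy surface through $\cc(A)+x$. I would first try to show that $\Sigma_x\cap\CO_n$ are disjoint sets whose union covers $\{x_0\}\times(A+\bx)$ up to pieces that are harmless. For Borel/Lebesgue sets $A_n$ with $\cc(A_n)+x\subset\CO_n$, $\sigma$-additivity of $f$ would then give $\alpha_xf(A)=\bigvee\alpha_xf(A_n)\le\bigvee F(\CO_n)$.

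The geometric difficulty is that the join $(\bigcup\CO_n)''$ can contain $\cc(A)+x$ even when the slices of the $\CO_n$ do not cover the slice of $\cc(A)+x$. For example, the $\CO_n$ may sit at different times. I would handle this with the causal dynamics. Each $\CO_n$ can be replaced by the time-evolved generator, using the speed limit condition in the form $\alpha_{(t,\mathbf 0)}f(A')\le f([A'+|t|\bB]^\perp)^\perp$, in order to transport all generators to a common time slice. Measurability issues, such as $A+|t|\bB$ failing to be Borel, would be absorbed by using the open ball $\bB$ and approximating from inside. If this transport only produces $f(\cdot^\perp)^\perp$ rather than $f(\cdot)$, I would fall back on showing merely that $F$ is finitely/$\sigma$-additive on separated families whose elements are themselves of the form $\cc(A)+x$, and then pass to general $\CO_n$ through the definition of $F$ as a supremum.
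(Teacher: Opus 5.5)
Your part (a) is correct. You obtain causality from the criterion \eqref{eq:LocalDynamics} and separation preservation of $F$. The paper instead uses $\cc(A)+te_0\subset\cc(A+|t|\bB)$ and monotonicity, which even yields the strong speed limit; either suffices. Normalization and separation preservation in (b) are also fine.

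The genuine gap is the $\sigma$-additivity of $F$, i.e. the inequality $F(\bigvee_n\CO_n)\le\bigvee_nF(\CO_n)$. What you write there is a plan, not an argument, and the mechanism you propose points the wrong way. The speed limit condition only bounds time-translated elements from above, $\tau_tf(A')\le f([A'+|t|\bB]^\perp)^\perp$. It never produces elements lying below $\bigvee_nF(\CO_n)$. Moreover, in a non-orthocomplemented $\CQ$ an element $f(C^\perp)^\perp$ is only known to dominate $f(C)$, not to be dominated by any join of values of $f$. Your fallback does not work either. The generators $\cc(A)+x$ contained in a single $\CO_n$ are not mutually separated, so additivity on separated families of such generators cannot be transferred to general $\CO_n$ through the supremum in \eqref{eq:F}.

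The difficulty is also not only that the $\CO_n$ may sit at different times. Take $d=1$, $\CO_1=\cc([-1,0])$ and $\CO_2=\cc((0,1])$; these are separated and $\CO_1\vee\CO_2=\cc([-1,1])$. For $0<\eps<\tfrac12$ the generator $\cc([-\eps,\eps])+\tfrac12e_0$ lies in $\cc([-1,1])$. However, the time-$\tfrac12$ slice of $\CO_1\cup\CO_2$ is $\{\tfrac12\}\times\{-\tfrac12,\tfrac12\}$, which is disjoint from the base of that generator. Additivity therefore requires $\tau_{1/2}f([-\eps,\eps])\le F(\CO_1)\vee F(\CO_2)$. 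Causality gives only $\tau_{1/2}f([-\eps,\eps])\le f(\Rl\setminus(-\tfrac12-\eps,\tfrac12+\eps))^\perp$. The strong speed limit would settle this configuration via $f((-\tfrac12-\eps,\tfrac12+\eps))\le f([-1,1])=f([-1,0])\vee f((0,1])$, but it is not a hypothesis of (b).

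For comparison, the paper handles this step in one line: separated $\CO_n$ are disjoint, so each generator lies in at most one $\CO_n$. That observation separates the generators of the different $F(\CO_n)$. It does not by itself account for generators of $F(\bigvee_n\CO_n)$ lying in no single $\CO_n$, like the one above. You have located the delicate point correctly, but your proposal does not resolve it, and an additional argument (or hypothesis) is needed exactly there.
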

\begin{proof}
    \ref{item:SpacetimeToSpatial} As $f$ is the composition of an observable with a $\sigma$-additive homomorphism (Lemma~\ref{lemma:CausalEmbedding}), it is also an observable. Clearly $f$ is translationally invariant and $\tau_t:=\alpha_{(t,\boldsymbol{0})}$ is a dynamics. For every $t\in\Rl$ and every $A\in\Bor(\Rl^d)$, we have the geometric inclusion $\cc(A)+te_0 \subset \cc(A+|t|\bB)$. This implies
    \begin{align}
        \tau_{t}(f(A))
        =
        F(\cc(A)+te_0)
        \leq
        F(\cc(A+|t|\bB))
        =
        f(A+|t|\bB)
        \leq
        f([A+|t|\bB]^\perp)^\perp
        ,
    \end{align}
    i.e. $f$ is causal.

    \ref{item:SpatialToSpacetime} As $\CQ$ is complete, $F$ is well defined. It is clear from its definition that $F$ is normalized. In case $(\CO_n)_{n\in\Nl}\subset\CC(\Rl^{d+1})$ is separated, then in particular $\CO_n\cap\CO_m=\emptyset$ for $n\neq m$. Hence any region $\cc(A)+x$ (with $A\in\Bor(\Rl^d)$ and $x\in\Rl^{d+1}$) can lie in at most one of the $\CO_n$, and then \eqref{eq:F} implies the additivity of $F$.

    For checking that $F$ preserves separation, we use that $f$ is causal: For any $A,B\in\Bor(\Rl^d)$ and $x,y\in\Rl^{d+1}$ such that $\cc(A)+x\subset(\cc(B)+y)'$, we have $\alpha_x(f(A))\leq\alpha_y(f(B))^\perp$. In view of the definition of $F$, this implies $F(\CO_1)\leq F(\CO_2)^\perp$ for $\CO_1,\CO_2\in\CC(\Rl^{d+1})$ with $\CO_1\subset\CO_2'$.
\end{proof}

In quantum field theory, this simple idea of promoting a spatial localization observable to a spacetime localization observable is connected to time zero fields \cite{Schlingemann:1999-1,GrosseLechnerLudwigVerch:2013}. In our present formulation, it is entirely based on Cauchy surfaces that are parallel translates of the time zero surface $\Sigma=\{0\}\times\Rl^d$, and does not make any explicit reference to Lorentz symmetry. We refer to \cite{Castrigiano:2017,Moretti:2023,DeRosaMoretti:2024} for works that consider Cauchy surfaces that are arbitrary Poincaré transforms of $\Sigma$, or even general curved Cauchy surfaces.

\subsection{No localization observables in $\CP(\Hil)$}

Now that causal spatial and spacetime observables are defined, we can investigate their existence and uniqueness properties. These will of course depend on the target lattice $\CQ$, including its translational symmetries and dynamics. We will work in the usual setting of quantum theory and consider involution lattices contained in operators on a Hilbert space~$\Hil$, and strongly continuous unitary representations $U$ of $\Rl^{d+1}$ acting on these operators by $\alpha_x(P)=U(x)PU(x)^*$. We will furthermore restrict to the physically interesting case that $U$ has {\em positive energy}, namely that the energy-momentum spectrum of $U$ (defined to be the joint spectrum of the commuting selfadjoint generators of this representation) is contained in the closed forward light cone $\overline{V^+}$.

We begin with a No-Go theorem concerning the well-studied case $\CQ=\CP(\Hil)$. This should be seen in line with many well established No-Go results which go back to an article by Schlieder \cite{Schlieder:1971}. The version we present here is essentially identical to the one given by Jancewicz \cite{Jancewicz:1977}, and a somewhat stronger version has been given by Malament \cite{Malament:1996}.

\begin{theorem}\label{thm:NoGo}
   Let $U$ be a positive energy representation $\Rl^{d+1}$ on some Hilbert space $\Hil$ and $P:\Bor(\Rl^d)\to \CP(\Hil)$ a map with all properties of a translationally covariant causal spatial localization observable except $\sigma$-additivity (Definition~\ref{def:CausalLocalizationObservables}), where it is understood that the spatial translation symmetry and the dynamics are given by $U$. Then $P(A)=0$ for all bounded $A$.

   In particular, there exists no translationally covariant causal spatial localization observable $P:\Bor(\Rl^d)\to\CP(\Hil)$, and no translationally covariant spacetime localization observable $P:\CC(\Rl^{d+1})\to\CP(\Hil)$.
\end{theorem}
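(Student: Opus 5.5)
The plan is the classical Schlieder/Jancewicz argument: combine the commutation relations forced by causality with analyticity coming from positive energy. Fix a bounded Borel set $A$ and write $P:=P(A)$. The first step turns the causal structure into orthogonality relations in $\CP(\Hil)$. Since $\CP(\Hil)$ is a logic, the speed limit condition \eqref{eq:BoundedSpeed} is equivalent to its strong version (Lemma~\ref{lemma:Observables}~\ref{item:fcomplement}). Separation in $\CP(\Hil)$ means $PQ=0$ (Example~\ref{example:PH}). So the causality lemma gives
\begin{align}
    P\,U(x)PU(x)^*=0 \qquad\text{whenever } \cc(A)+x\subset\cc(A)'.
\end{align}
By \eqref{eq:geometry}, this holds whenever $A\cap(A+\bx+|x_0|\bB)=\emptyset$, for instance whenever $\|\bx\|>\operatorname{diam}(A)+|x_0|+1$. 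Here we use that $P(B)\le P(B')$ for $B\subset B'$, which follows from monotonicity (only finite additivity is needed for this, as in Lemma~\ref{lemma:Observables}~\ref{item:fmonotone}). Hence $PU(x)P=0$ for all $x$ in an open set $\mathcal{S}=\{x:\|\bx\|>c+|x_0|\}$, a region of spacelike translations. This set is invariant under $x\mapsto -x$, and it is also invariant under adding small timelike vectors to spatially far points.

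The second step uses positive energy. Fix $\phi,\psi\in\Hil$ and set $g(x):=\langle P\phi,U(x)P\psi\rangle$. Because the spectrum lies in $\overline{V^+}$, $g$ is the boundary value of a function analytic in the forward tube $\Rl^{d+1}+iV^+$. I would take a concrete one-parameter route. Choose a fixed spatial vector $\bm a$ with $\|\bm a\|>c$ and consider $t\mapsto g(t e_0+\bm a)$. By the first step, this function vanishes for $|t|<\|\bm a\|-c$. It is the boundary value of a function analytic in the upper half-plane, since $H\ge0$ and $e^{itH}$ extends to $\imag t>0$. By the edge-of-the-wedge / Schwarz reflection principle (or the Riesz uniqueness theorem for $H^\infty$ boundary values), it therefore vanishes for all $t$. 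Vanishing on all of $e_0\Rl+\bm a$ for every large $\bm a$, together with analyticity in the spatial directions, is not immediate. So I would instead use the full tube: $g$ is analytic in $\Rl^{d+1}+iV^+$ and its boundary value vanishes on the open real set $\mathcal{S}$. The edge-of-the-wedge theorem in its one-sided form (a tube function whose boundary value vanishes on an open real set is identically zero) then gives $g\equiv0$ on $\Rl^{d+1}$. This step is the main technical obstacle, because $g$ is only continuous and bounded, not analytic across the real set. I would handle it either by citing the standard several-variable result, or by reducing to one complex variable along each timelike direction $x+t e$, $e\in V^+$, through points $x\in\mathcal{S}$, and then using connectedness.

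The conclusion is then immediate. Taking $x=0$ gives $\langle P\phi,P\psi\rangle=0$ for all $\phi,\psi$, so $P=0$. For the final statements: a $\sigma$-additive observable with $P(A)=0$ for bounded $A$ has $P(\Rl^d)=\bigvee_n P(A_n)=0$ for a countable disjoint partition of $\Rl^d$ into bounded Borel sets. This contradicts normalization $P(\Rl^d)=1$. For spacetime observables, Proposition~\ref{prop:SpacetimeObservableInduceCausalSpatialObservables}~\ref{item:SpacetimeToSpatial} produces a translationally covariant causal spatial observable $P\circ\cc$, and this is excluded by the previous sentence.
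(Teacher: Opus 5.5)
Your proposal is correct and is essentially the paper's argument: causality forces $P(A)U(x)P(A)=0$ for $x$ in an open set of spacelike translations (the paper shifts $A$ by $\by$ so that this set becomes a neighbourhood of the origin), positive-energy analyticity in the forward tube extends the vanishing to all $x$, and evaluating at $x=0$ (the paper's $x=(0,-\by)$) gives $P(A)=0$, after which the nonexistence claims follow as you describe. The analytic step you flag is exactly what the paper isolates, without further detail, as Lemma~\ref{borcherseinfach}; your reduction to the one-variable functions $\zeta\mapsto g(x+\zeta e)$ with $x$ in the open set and $e\in V^+$, combined with the identity theorem on the connected tube, is a valid way to make it rigorous.
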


The proofs of these theorems are usually based on a non-trivial analyticity argument due to Borchers \cite{Borchers:1967}. Depending on the precise nature of the causality assumption\footnote{In both the arguments given by Jancewicz and Malament it is assumed that the orthogonal projections $P,Q$ corresponding to two disjoint spatial regions are orthogonal to each other. For time translates by sufficiently small parameter $|t|$, Jancewicz assumes $PU(t)QU(-t)=0$ (as also done here), whereas Malament only assumes $[P,U(t)QU(-t)]=0$. The conclusions are the same in both cases, but only under the weaker (commutator) assumption, one needs to invoke Borchers' theorem.}, the use of Borchers' theorem is however not necessary. For the sake of self-containedness, we state the simplified analyticity statement as a lemma and then provide a very short proof of Theorem~\ref{thm:NoGo}.

\begin{lemma}\label{borcherseinfach}
    Let $\Hil$ be a Hilbert space, $U$ a positive energy representation of $\Rl^{d+1}$ on $\Hil$, and $P,Q\in\CP(\Hil)$ such that $U(x)PU(x)^{-1}\leq Q$ for all $x$ in a neighbourhood of $0\in\Rl^{d+1}$. Then $U(x)PU(x)^{-1}\leq Q$ for all $x\in\Rl^{d+1}$.
\end{lemma}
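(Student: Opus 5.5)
Since $P,Q$ are orthogonal projections, the inequality $U(x)PU(x)^{-1}\leq Q$ is equivalent to $Q^\perp U(x)P=0$. Equivalently, $\langle \phi,U(x)\psi\rangle=0$ for all $\phi\in Q^\perp\Hil$ and $\psi\in P\Hil$. So the plan is to fix such vectors $\phi,\psi$ and study the continuous, bounded function
\begin{align}
    g(x):=\langle\phi,U(x)\psi\rangle,\qquad x\in\Rl^{d+1}.
\end{align}
By assumption $g$ vanishes on a neighbourhood $N$ of $0$. It then suffices to show $g\equiv0$, because $\phi,\psi$ are arbitrary.

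The key step is analyticity from positive energy. By the SNAG theorem, $U(x)=\int e^{ip\cdot x}\,dE(p)$ with spectral measure $E$ supported in $\overline{V^+}$ (sign conventions as in \eqref{eq:Hilm}). Hence $g(x)=\int_{\overline{V^+}}e^{ip\cdot x}\,d\mu(p)$ with $\mu(\cdot)=\langle\phi,E(\cdot)\psi\rangle$ a finite complex measure. For $z=x+iy$ with $y\in V^+$ we have $p\cdot y\geq0$ on $\overline{V^+}$, with exponential decay $e^{-p\cdot y}$ away from the origin. So $G(z):=\int e^{ip\cdot z}\,d\mu(p)$ is holomorphic on the tube $\Rl^{d+1}+iV^+$, bounded there, and has $g$ as continuous boundary value by dominated convergence.

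To avoid the full edge-of-the-wedge machinery, I would reduce to one variable. Fix any $a\in\Rl^{d+1}$ and a timelike $e\in V^+$. For real $s,t$ consider $h(s):=g(a+se)$. Its spectral measure in the $e$-direction lies in $p\cdot e\geq0$, so $h(s)=\int_{[0,\infty)}e^{i\la s}d\nu_a(\la)$ extends to a bounded holomorphic function on the upper half plane, continuous up to the real axis. This alone does not suffice, since $h$ need not vanish anywhere.

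Instead, I would use the Schwarz reflection principle on the full multivariable function with the real neighbourhood. Pick $x$ with $g=0$ on a real open set $N$. Then $G$ restricted to the complex line $z=x_0+\zeta e$, for $x_0\in N$ and $\zeta$ in the upper half plane, is a bounded holomorphic function of $\zeta$ whose boundary values vanish on a real interval around $0$. By the Schwarz reflection principle it extends across that interval, and by the identity theorem it vanishes identically. In particular $g(x_0+se)=0$ for all $s\in\Rl$ and all $x_0\in N$. Thus $g$ vanishes on $N+\Rl e$ for every $e\in V^+$.

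Iterating with different timelike directions $e$ enlarges the zero set. Since $N+\Rl e_1+\Rl e_2+\dots$ with $d+1$ linearly independent timelike vectors covers $\Rl^{d+1}$, we get $g\equiv0$. Each step applies the same argument with $N$ replaced by the previously obtained open zero set.

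The main obstacle is the justification that a bounded holomorphic function on the upper half plane, continuous up to the boundary, which vanishes on a real interval, is identically zero. I would handle this by noting that reflection $\zeta\mapsto\overline\zeta$ gives a continuous extension across the interval that is holomorphic off the real line. Morera's theorem then shows it is holomorphic across the interval. Since it vanishes on a set with accumulation points, it vanishes identically. Continuity up to the boundary follows from dominated convergence in the spectral representation, since the integrand is bounded by $1$ on the support of $\mu$ for $\imag\zeta\geq0$.
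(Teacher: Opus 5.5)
Your proof is correct and follows the paper's route. Positive energy makes $x\mapsto\langle\phi,U(x)\psi\rangle$ (with $\phi\in Q^\perp\Hil$, $\psi\in P\Hil$) the continuous boundary value of a bounded holomorphic function on the forward tube $\Rl^{d+1}+iV^+$, so its vanishing near $0$ forces it to vanish identically. The paper asserts this last uniqueness step without proof, whereas you justify it explicitly: you restrict to complex lines $x_0+\zeta e$ with $e\in V^+$, apply Schwarz reflection and the identity theorem, and then iterate over $d+1$ linearly independent timelike directions to cover $\Rl^{d+1}$, which is a clean, fully elementary way to fill that gap.
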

\begin{proof}
    Let $\xi,\psi\in\Hil$ and consider the function $\varphi:\Rl^{d+1}\to\Cl$, $\varphi(x):=\langle Q^\perp\xi,U(x)P\psi\rangle$. Our assumptions say that $\varphi(x)=0$ for all $x$ in a neighbourhood of the origin. But as a consequence of the positive energy assumption on the representation, $\varphi$ has an analytic continuation to the forward tube $\Rl^{d+1}+iV^+$ which is continuous on the closure of this domain. This implies that $\varphi$ vanishes identically and in particular on $\Rl^{d+1}$. As the vectors $\xi$ and $\psi$ were arbitrary, the claim follows.
\end{proof}

\begin{proof}(of Theorem~\ref{thm:NoGo})
    Let $A\in\Bor(\Rl^d)$ be bounded and pick $\by\in\Rl^d$ such that $A$ and $A+\by$ have non-zero spatial distance $d(A,A+\by)>0$. Let $P:\Bor(\Rl^d)\to\CP(\Hil)$ be a map as described in the statement of the theorem. Then, by causality of $P$
    \begin{align*}
        U(x)P(A+\by)U(x)^*
        &\leq
        P((A+\by+\bx+|x_0|\bB)^\perp)^\perp
        \leq
        P(A)^\perp,
    \end{align*}
    where the last estimate holds whenever $A+\by+\bx+|x_0|\bB\subset A^\perp$, which is the case whenever $\|\bx\|+|x_0|<d(A,A+\by)$. Hence the preceding lemma implies $U(x)P(A+\by)U(x)^*\leq P(A)^\perp$ for all $x\in\Rl^{d+1}$. For $x=(0,-\by)$, this results in $P(A)\leq P(A)^\perp$ and $P(A)=P(A)\wedge P(A)^\perp=0$ because $\CP(\Hil)$ is orthocomplemented.

    Now assume that $P$ also satisfies the additivity assumption in Definition~\ref{def:Observable}~\ref{item:ObservableAdditivity}, i.e. is a translationally covariant causal spatial localization observable $P:\Bor(\Rl^d)\to\CP(\Hil)$. Pick a covering $(A_n)_{n\in\Nl}$ of $\Rl^d$ by bounded Borel sets. Then $0=\bigvee_n P(A_n)=P(\bigvee_n A_n)=P(\Rl^d)=1$, a contradiction. Hence such $P$ does not exist. Since Proposition~\ref{prop:SpacetimeObservableInduceCausalSpatialObservables} shows that any spacetime localization observable induces a causal spatial localization observable, also the last statement follows.
\end{proof}

As is well known, this result implies in particular that the Newton-Wigner observable (Example~\ref{example:NewtonWigner}) with the dynamics coming from its defining Poincaré representation is not causal. It is therefore generally accepted that the framework of $\CP(\Hil)$ is too narrow to allow for satisfactory relativistic localization observables.

We will therefore extend our framework. In doing so, we will follow a different route than the one passing through POVMs and effects mentioned in the Introduction. The main idea for our extension is to consider the larger involution lattice $\CP_{\Rl}(\Hil)\supset\CP(\Hil)$ instead of $\CP(\Hil)$, with spatial translations and dynamics given by a positive energy representation $U$ of $\Rl^{d+1}$ (i.e., $\alpha_x(E)=U(x)EU(x)^{-1}$).

In this lattice, the situation considered in Lemma~\ref{borcherseinfach} now amounts to two real orthogonal projections $E,F\in\CP_{\Rl}(\Hil)$ such that $E\leq U(x)F'U(x)^*$ for all $x$ in a neighbourhood of zero. As the involution $\CP_{\Rl}(\Hil)\ni E\mapsto E'$ replacing the orthogonal complement is given by the symplectic complement of the range of $E$, this implies only that the imaginary part $\imag\varphi(x)$ of $\varphi(x):=\langle E\xi,U(x)F\psi\rangle$ vanishes for sufficiently small $x$. This is clearly a weaker condition on the analytic function $\varphi:\Rl^{d+1}+iV^+\to\Cl$ than in the complex linear case. In particular, it does not imply that $\imag\varphi$ has to vanish identically, i.e. $E\leq U(x)F'U(x)^*$ for all $x\in\Rl^d$ does {\em not} follow.

To give an explicit example, consider the irreducible positive energy Poincaré representation $U$ of mass $m>0$ and spin zero on the Hilbert space \eqref{eq:Hilm}, with translations
\begin{align}
    (U(x)\psi)(\bp)=e^{i(x_0\om_{\bp}-\bx\bp)}\psi(\bp),\qquad \bp\in\Rl^d.
\end{align}
As vectors in the ranges of two real projections $E_1,E_2\in\CP_{\Rl}(\Hil)$, we take the Fourier transforms $\tilde f_k$ of two real functions $f_k\in C_c^\infty(\Rl^{d})$, $k=1,2$, with disjoint supports. Then
\begin{align}
    \imag\langle \tilde f_1,U(x)\tilde f_2\rangle
    &=
    \imag\int\frac{d\bp}{\om_{\bp}}\overline{\tilde f_1(\bp)}e^{i(x_0\om_{\bp}-\bx\bp)}\tilde f_2(\bp)
    \\
    &=
    (2\pi)^{-d}
    \int d\by\,d\bz\,f_1(\by)f_2(\bz)
        \int\frac{d\bp}{\om_{\bp}} \sin(x_0\om_{\bp}+i\bp\cdot(\by-\bz-\bx)).
\end{align}
The distribution $x\mapsto\int\frac{d\bp}{\om_{\bp}}\ \sin(x_0\om_{\bp}+i\bp\cdot\bx)$ is the commutator function of the Klein Gordon quantum field. As it vanishes on all spacelike points, and $\|\by-\bz\|\geq d(\supp(f_1),\supp(f_2))>0$, we see $ \imag\langle\tilde f_1,U(x)\tilde f_2\rangle=0$ for all sufficiently small $x$. However, since the commutator distribution is not zero, there exist $f_1,f_2$ as described above such that $\imag\langle\tilde f_1,U(x)\tilde f_2\rangle\neq0$ for large $x$. This indicates that the drastic implications of causal behavior that we observed for $\CP(\Hil)$ do not occur for $\CP_{\Rl}(\Hil)$.

\subsection{Constraints on causal localization observables in $\CP_{\Rl}(\Hil)$}\label{subsection:CausalLocalizationObservablesPRH}

We now consider a positive energy representation $U$ of $\Rl^{d+1}$, endowing $\CP_{\Rl}(\Hil)$ with $\Rl^d$-translation symmetry and dynamics, and ask how Theorem~\ref{thm:NoGo} transfers to this situation.

We begin with some remarks comparing $\CP_{\Rl}(\Hil)$ and $\CP(\Hil)$. We had already observed that $\CP(\Hil)\subset\CP_{\Rl}(\Hil)$ is a sublogic (Proposition~\ref{lemma:CPintoCPR}), i.e. a subset such that the lattice operations and involution of $\CP_{\Rl}(\Hil)$ restrict to the lattice operations and orthocomplementation of $\CP(\Hil)$. Hence a study of the localization observables mapping into $\CP_{\Rl}(\Hil)$ strictly contains a study of those mapping into $\CP(\Hil)$. However, we will see that most of the subspaces appearing in the range of such a map are maximally non-complex in a certain sense, and introduce some terminology for that now.

Any real linear closed subspace $H\subset\Hil$ naturally defines two closed complex linear subspaces: The largest closed complex subspace contained in $H$, denoted $H_{\Cl}$, and the smallest closed complex subspace containing $H$, denoted $H^{\Cl}$. These are given by
\begin{align}
    H_{\Cl}=H\cap iH \subset H \subset \overline{H+iH}=H^{\Cl}
\end{align}
and satisfy
\begin{align}
    (H')^{\Cl}
    &=
    \overline{H'+iH'}
    =
    (H\cap iH)'
    =
    (H_{\Cl})'=(H_{\Cl})^{\perp_{\Cl}}
    ,
    \\
    (H')_{\Cl}&=(H^{\Cl})'=(H^{\Cl})^{\perp_{\Cl}}
    =
    H^{\perp_{\Cl}}.
\end{align}
Clearly $H$ is complex linear if and only if $H_{\Cl}=H=H^{\Cl}$. A few relevant definitions and facts in this context are the following.

\begin{definition}
    Let $H\in\CP_{\Rl}(\Hil)$.
    \begin{enumerate}
        \item $H$ is called {\em separating} if $H_{\Cl}$ is trivially small, namely $H_{\Cl}=\{0\}$ (meaning that $H$ is ``maximally non-complex''),
        \item $H$ is called {\em cyclic} if $H^{\Cl}$ is trivially large, namely $H^{\Cl}=\Hil$, meaning that its complex span is dense in $\Hil$,
        \item $H$ is called {\em standard} if it is cyclic and separating.
    \end{enumerate}
\end{definition}

We will use the elementary facts (see \cite{Longo:2008_2,CorreadaSilvaLechnerLongo:2026}) that $H$ is cyclic if and only if its symplectic complement $H'$ is separating, and that in an inclusion $K\subset H$ cyclicity of $K$ implies cyclicity of $H$, while $H$ being separating implies $K$ being separating. This terminology and notation will also be used for the associated projections.

We now show that these properties naturally arise in the context of causal localization observables by a Reeh-Schlieder type argument \cite{ReehSchlieder:1961}.

\begin{proposition}\label{prop:standardsubspacesarenecessary}
    Let $U$ be a positive energy representation of $\Rl^{d+1}$ on some Hilbert space $\Hil$, and $E:\Bor(\Rl^d)\to(\CP_{\Rl}(\Hil),U)$ a translationally covariant causal spatial localization observable. Then $E(A)$ is cyclic for every $A\in\Bor(\Rl^d)$ with non-empty interior, and separating for every $A\in\Bor(\Rl^d)$ such that $A^\perp\in\Bor(\Rl^d)$ has non-empty interior.
\end{proposition}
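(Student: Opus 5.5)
We prove cyclicity by a Reeh--Schlieder argument and then deduce the separating property by symplectic duality.

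\emph{Cyclicity.} Let $A$ have non-empty interior, so $A\supset \bx_0+r\bB$ for some $\bx_0\in\Rl^d$ and $r>0$. By translation covariance we may take $\bx_0=0$. We must show that $E(A)^{\Cl}=\Hil$. Suppose $\xi\perp_{\Cl} E(A)\Hil$. Then $\langle\xi,h\rangle=0$ for all $h$ in the range of $E(A)$, and we will show $\xi=0$.

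The key step is to show that small spacetime translates of a smaller ball stay inside $E(A)$. Set $A_0:=\tfrac r2\bB$. By Proposition~\ref{prop:SpacetimeObservableInduceCausalSpatialObservables}, the observable $E$ is at least governed by the speed limit condition \eqref{eq:BoundedSpeed}. For $\|\bx\|+|x_0|<r/2$ this gives
\begin{align*}
U(x)E(A_0)U(x)^*=\tau_{x_0}(E(A_0+\bx))\leq E([A_0+\bx+|x_0|\bB]^\perp)'\leq E(A^\perp)'.
\end{align*}
This is the main obstacle: we need the translates inside $E(A)$ itself, not merely inside $E(A^\perp)'$, and in $\CP_{\Rl}(\Hil)$ the inequality $E(A^\perp)'\geq E(A)$ goes the wrong way.

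To get around this, I would work with $E(A^\perp)'$ instead of $E(A)$. Its complex span is the orthogonal complement of $E(A^\perp)_{\Cl}$. For the conclusion we then need cyclicity of $E(A^\perp)'$, which is equivalent to $E(A^\perp)$ being separating. So it is more natural to prove both claims together, as follows.

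\emph{Separating property.} Let $K:=E(A)_{\Cl}$ be the complex part, where now $A^\perp$ contains a ball $\by+r\bB$. Take $h\in K$ and an arbitrary $\psi$ in the range of $E(\by+\tfrac r2\bB)$. By the same causality estimate, $U(x)\psi$ lies in the range of $E(A)'$ for small $x$. Since $K\subset E(A)$ is complex, both $h$ and $ih$ lie in $E(A)$. Hence both $\imag\langle h,U(x)\psi\rangle$ and $\real\langle h,U(x)\psi\rangle$ vanish, so
\begin{align*}
\langle h,U(x)\psi\rangle=0\qquad\text{for $x$ near $0$.}
\end{align*}
Positive energy and the edge-of-the-wedge argument of Lemma~\ref{borcherseinfach} then give $\langle h,U(x)\psi\rangle=0$ for all $x\in\Rl^{d+1}$. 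Taking $\psi$ in the range of $E(\bz+\tfrac r2\bB)$ translated by $U$, the vector $h$ is orthogonal to the complex closed span of all ranges $E(\bz+\tfrac r2\bB)$, $\bz\in\Rl^d$.

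Cover $\Rl^d$ by countably many such balls. By $\sigma$-additivity, and monotonicity applied to pairwise disjoint Borel pieces, the real closed span of these ranges is $E(\Rl^d)\Hil=\Hil$. Hence $h=0$, so $E(A)$ is separating.

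\emph{Cyclicity by duality.} If $A$ has non-empty interior, then $A^\perp$ need not have interior, so I argue directly. Let $\xi\perp_{\Cl}E(A)\Hil$, which means $\xi\in (E(A)^{\Cl})^{\perp_{\Cl}}=(E(A)')_{\Cl}$. The range of $E(A)'$ contains the complex vectors $\xi$ and $i\xi$. For $\psi$ in the range of $E(\tfrac r2\bB)$ and small $x$, strong causality via Proposition~\ref{prop:SpacetimeObservableInduceCausalSpatialObservables} places $U(x)\psi$ inside $E(A)$. Pairing against $\xi$ and $i\xi$ gives $\langle\xi,U(x)\psi\rangle=0$ for small $x$, hence for all $x$ by analyticity. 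The covering argument above then yields $\xi=0$.

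If only \eqref{eq:BoundedSpeed} is available, so that $U(x)\psi$ is known to lie only in $E(A^\perp)'$, I would instead exploit $E(A)\geq E(A_0+\bx)$ together with the $\Rl^d$-covariance alone. This gives $\langle\xi,U(\bx)\psi\rangle=0$ for small spatial $\bx$, with $\psi$ in the range of $E(A_0)$. I would then extend to time translations by writing $U(x_0)\psi$ as a limit of spatial data, using the inequality above and the fact that $\xi$ is complex. I expect this extension to be the delicate point.
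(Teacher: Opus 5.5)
Your proof of the separating property is correct, and it takes a different route from the paper. You pair $h,ih\in E(A)_{\Cl}$ with spacetime translates of $E(B_0)$, where $B_0$ is a ball deep inside $A^\perp$. The speed limit condition \eqref{eq:BoundedSpeed} together with monotonicity gives $U(x)E(B_0)U(x)^*\leq E(A)'$ for all $x$ in a neighbourhood of $0$ in $\Rl^{d+1}$. Hence $\langle h,U(x)\psi\rangle$ vanishes on an open set, so it vanishes everywhere by analyticity, and the disjointified covering with $\sigma$-additivity forces $h=0$. The paper instead gets the separating property by duality from cyclicity ($E(A)\leq E(A^\perp)'$ with $E(A^\perp)$ cyclic). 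It gives a direct argument, via $E_{\Cl}$ and Theorem~\ref{thm:NoGo}, only for bounded $A$. Your version uses only the weak speed limit that is actually assumed, does not depend on the cyclicity step, and covers every $A$ whose complement $A^\perp$ has interior points.

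The cyclicity half has a genuine gap under the stated hypothesis, as you suspected.

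\emph{The cited proposition does not apply.} Proposition~\ref{prop:SpacetimeObservableInduceCausalSpatialObservables} does not give the strong speed limit \eqref{eq:BoundedSpeedStrong} for $E$. Part~\ref{item:SpacetimeToSpatial} requires $E=F\circ\cc$. The $F$ constructed in part~\ref{item:SpatialToSpacetime} only satisfies $F\circ\cc\geq E$, which cannot transfer cyclicity down to $E(A)$.

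\emph{The weak speed limit gives no information here.} From \eqref{eq:BoundedSpeed} you only get $\tau_t(E(A_0+\bx))\leq E(A^\perp)'$. By your own separating result applied to $A^\perp$, the space $E(A^\perp)'$ is cyclic. So this inclusion says nothing about $\langle\xi,U(x)\psi\rangle$ for $\xi\perp_{\Cl}E(A)\Hil$.

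\emph{Your fallback fails.} Vanishing of $x\mapsto\langle\xi,U(x)\psi\rangle$ near $0$ in the hyperplane $\{0\}\times\Rl^d$ does not propagate. Positive energy yields analyticity only in the forward tube, and the spatial momentum is unrestricted. For example, in the massive scalar representation, $\bx\mapsto\langle\xi,U(0,\bx)\psi\rangle$ is the Fourier transform of the essentially arbitrary integrable function $\overline{\xi}\psi/\om$. Such a transform can vanish on a ball without vanishing identically.

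The paper's cyclicity proof tacitly relies on the same missing input. It only secures $A_0+\bx\subset A$ for spatial $\bx$ before invoking Lemma~\ref{borcherseinfach}, which needs a neighbourhood of $0$ in $\Rl^{d+1}$. In effect it uses $\tau_t(E(A_0+\bx))\leq E(A)$ for small $t$. So under \eqref{eq:BoundedSpeedStrong} your cyclicity argument is essentially the paper's and is complete; this covers $E=F\circ\cc$, the situation used in Section~\ref{section:SpacetimeLocObinPRH}. With \eqref{eq:BoundedSpeed} alone, neither argument establishes cyclicity.
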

\begin{proof}
    The idea of the proof is to consider the two maps
    \begin{align}
        E_{\Cl},E^{\Cl}:\Bor(\Rl^d)\to\CP(\Hil),\qquad E_{\Cl}(A):=E(A)_{\Cl},\quad E^{\Cl}(A):=E(A)^{\Cl},
    \end{align}
    mapping into complex linear projections onto the ``small'' and ``large'' complex subspaces generated by $E(A)$, respectively.

    Let $A\in\Bor(\Rl^d)$ have non-empty interior and let $P\in\CP(\Hil)$ satisfy $P\leq E(A)^{\perp_{\Cl}}=E^{\Cl}(A)^{\perp_{\Cl}}$. Then there exists open $A_0\subset A$ and $\eps>0$ such that $A_0+\bx\subset A$ for $\bx\in\Rl^d$ with $\|\bx\|<\eps$. Lemma~\ref{borcherseinfach} implies $P\leq E(A_0+x)^{\perp_{\Cl}}$ for all $x\in\Rl^{d+1}$. Picking vectors $(x_n)\in\Rl^{d+1}$ such that $\bigvee_n(A_0+x_n)=\Rl^{d+1}$, this gives
    \begin{align}
        P\leq\bigwedge_n E(A_0+x_n)^{\perp_{\Cl}}
        =
        \left(\bigvee_n E(A_0+x_n)\right)^{\perp_{\Cl}}
        =
        1^{\perp_{\Cl}}=0,
    \end{align}
    namely $P=0$. This implies $E(A)^{\Cl}=1$, i.e. $E(A)$ is cyclic.

    Since $E(A)\leq E(A^\perp)'$, we also see that in case $A^\perp$ has non-empty interior, then $E(A^\perp)'$ is cyclic, so $E(A^\perp)$ is separating, which implies that its subprojection $E(A)$ is separating as well.

    It is interesting to note that in our present context, one can also give a direct argument for the separating property in case $A$ is bounded: We claim that $E_{\Cl}$ has all properties of a translationally covariant causal spatial localization observable except additivity (Definition~\ref{def:Observable}~\ref{item:ObservableAdditivity}). Clearly
    \begin{align}
        E_{\Cl}(\emptyset)=0, \qquad E_{\Cl}(\Rl^d)=1,
    \end{align}
    and for $A,B\in\Bor(\Rl^d)$
    \begin{align}
        A\subset B^\perp \Rightarrow
        E_{\Cl}(A) \leq E(A) \leq E(B)' \leq E_{\Cl}(B)' = E_{\Cl}(B)^{\perp_{\Cl}}.
    \end{align}
    The covariance of $E$ passes down to $E_{\Cl}$ because the representation $U$ is complex linear. Similarly one checks that $E_{\Cl}$ also satisfies the speed limit condition in Definition~\ref{def:CausalLocalizationObservables} from $E$. We may therefore apply Theorem~\ref{thm:NoGo} and conclude $E_{\Cl}(A)=0$, which is equivalent to $E(A)$ being separating, for all bounded $A\in\Bor(\Rl^d)$.
\end{proof}

Proposition~\ref{prop:standardsubspacesarenecessary} shows that the spaces $H(A):=E(A)\Hil$ are standard if $A\in\Bor(\Rl^d)$ and~$A^\perp$ have interior points. Standard subspaces might appear to be exotic at first sight, but exist in abundance (very simple examples are given by the closed real span of an orthonormal basis of $\Hil$). We therefore conclude that the constraints on causal spatial localization observables in $\CP_{\Rl}(\Hil)$ are non-trivial but weaker than in the complex linear case.

Standard subspaces lie at the basis of modular theory. A crucial aspect of a standard subspace~$H$ is that it defines a unitary one-parameter group $\Delta_H^{it}$, $t\in\Rl$, its modular group, and an antiunitary involution $J_H$ commuting with it, in such a way that $H$ can be recovered from these data (see \cite{Longo:2008_2, CorreadaSilvaLechner:2025}, and the upcoming textbook \cite{CorreadaSilvaLechnerLongo:2026}, and Appendix~\ref{appendix:StandardSubspaces} for a minimal review of this structure).

\subsection{Lorentz symmetry and wedges} In addition to the constraints demanding cyclic/separating properties of the local subspaces $H(A)$, there exist further constraints on causal localization observables $E:\Bor(\Rl^d)\to\CP_{\Rl}(\Hil)$ that we will explain now. Although all symmetry assumptions made so far refer only to translations, the causal structure of $\CC(\Rl^{d+1})$ is implicitly linked to Poincaré symmetry, and it turns out that this results in a Lorentz invariance constraint.

To derive this constraint, we consider half spaces at time zero,
\begin{align}\label{eq:TimeZeroHalfSpace}
    A(\by)
    :=
    \{\bx\in\Rl^d \col \bx\by\geq0\},
\end{align}
where $\by\in\Rl^d$ is any unit vector (the inward pointing normal of the boundary of $A(\by)$) and $\bx\by$ denotes the Euclidean scalar product. The causal completion of this half space is
\begin{align}
    W(\by)
    :=
    \cc(A(\by))
    =
    \{x\in\Rl^{d+1} \col |x_0|\leq \bx\by\},
\end{align}
the (closed) Rindler wedge in direction of $\by$. We note that it has the invariance properties
\begin{align}
    \La_\by(t)W(\by)&=W(\by),\qquad t\in\Rl,
    \\
    W(\by)+sy_\pm &\subset W(\by),\qquad s\geq0,\qquad y_\pm:=(\pm1,\by)
\end{align}
where $\La_\by(t)$ is the Lorentz boost in $\by$-direction with rapidity $t$. The lightlike vectors $y_\pm$ in the boundary of $W(\by)$ satisfy (cf. the speed limit condition in \eqref{eq:BoundedSpeed} and \eqref{eq:geometry})
\begin{align}\label{eq:AyInvariance2}
    A(\by)+s\by+|s|\bB \subset A(\by),\qquad s\geq0.
\end{align}

\begin{figure}[h]
    \centering
    \begin{tikzpicture}[scale=0.67]
    \coordinate (center) at (0,0);

        \begin{scope}[rotate around={-90:(center)}]
            \draw[draw=green!50!white,fill=green!5!white,thick] (0.2,3.2) -- (3.4,0) -- (6.6,3.2); 
            \draw[draw=blue!50!white,fill=blue!5!white,thick] (0.2,-3.2) -- (3.4,0) -- (6.6,-3.2); 

            \node at (3.0,1.0) {$\bullet$};
            \node at (3.8,-1.0) {$\bullet$};

            \draw[black,thick,dashed,->] (3.0,1.0) -- (3.8,-1.0);

        \begin{axis}[
        axis line style={draw=none},
        ytick style={draw=none},
        xtick style={draw=none},
        yticklabel={\empty},
        xticklabel={\empty},
        domain=-3:3, 
        samples=100, 
        ymin=0, 
        ymax=6, 
    ]
        \addplot[black, mark=none] {sqrt(x^2+0.15)};
        \addplot[black, mark=none] {sqrt(x^2+0.5)};
        \addplot[black, mark=none] {sqrt(x^2+1)};
        \addplot[black, mark=none] {sqrt(x^2+2)};
        \addplot[black, mark=none] {sqrt(x^2+4)};
        \addplot[black, mark=none] {sqrt(x^2+7)};
    \end{axis}
    \draw[draw=white,fill=white] (0.2,3.2) -- (6.6,3.2) -- (6.6,3.8) -- (0.2,3.8) -- cycle;
    \end{scope}

    \node at (0.95,-5.2) {$W$};
    \node at (1.3,-2.7) {$x$};
    \node at (-1.1,-3.3) {$j_{e_1}(x)$};
    \end{tikzpicture}
    \caption{The right wedge $W=W(e_1)$, the orbits of its boost group $\La_{e_1}$, and the reflection $j_{e_1}$.}
\end{figure}
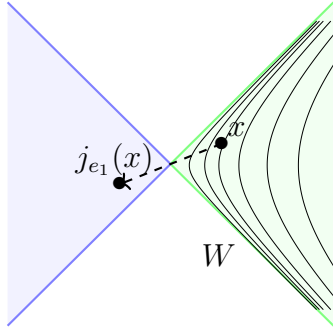

As one more geometric datum entering our discussion, we define $j_\by:\Rl^{d+1}\to\Rl^{d+1}$ to be the reflection at the edge of $W(\by)$, i.e.
\begin{align}
    j_\by(\by):=-\by,\quad j_\by(e_0):=-e_0,
    \qquad j_\by(\bx):=\bx,\quad \bx\by=0.
\end{align}
The formulation and proof of the following proposition relies on the strong speed limit condition \eqref{eq:BoundedSpeedStrong} and some facts about standard subspaces that we have collected in Appendix~\ref{appendix:StandardSubspaces} (in particular regarding the modular data $J_H$, $\Delta_H$ of a standard subspace $H$ and Borchers' Theorem), but part~\ref{item:LorentzInvariantSpectrum} can be read independently of this background.

\begin{proposition}\label{proposition:SpectrumMustBeLorentzInvariant}
    Let $E:\Bor(\Rl^d)\to\CP_{\Rl}(\Hil)$ be a spatial localization observable that transforms covariantly under a positive energy representation $U$ of $\Rl^{d+1}$ and obeys the strong speed limit condition \eqref{eq:BoundedSpeedStrong}.
    \begin{enumerate}
        \item\label{item:BorchersRelation} Let $\by\in\Rl^d$ be a unit vector. The modular data of the standard subspace $H(\by):=E(A(\by))\Hil$ satisfy for any $x\in\Rl^{d+1}$ and any $t\in\Rl$
        \begin{align}\label{eq:BoostsOnTranslations}
            \Delta_{H(\by)}^{it} U(x) \Delta_{H(\by)}^{-it}
            =
            U(\La_\by(-2\pi t)x),
            \qquad
            J_{H(\by)}U(x)J_{H(\by)}=U(j_\by x).
        \end{align}
        \item\label{item:LorentzInvariantSpectrum} The energy-momentum spectrum of~$U$ is Lorentz invariant.
    \end{enumerate}
\end{proposition}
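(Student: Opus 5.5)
Part (a) rests on Borchers' Theorem for standard subspaces (Appendix~\ref{appendix:StandardSubspaces}): if $H$ is standard and $V(s)$ is a positive-generator unitary one-parameter group with $V(s)H\subset H$ for $s\geq0$, then $\Delta_H^{it}V(s)\Delta_H^{-it}=V(e^{-2\pi t}s)$ and $J_HV(s)J_H=V(-s)$. The plan is to apply this to the two lightlike translation groups $s\mapsto U(sy_\pm)$ and to $H(\by)$. We then assemble the result by linearity in $x$.

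First, $H(\by)$ is standard. Both $A(\by)$ and $A(\by)^\perp$ have non-empty interior, so Proposition~\ref{prop:standardsubspacesarenecessary} gives cyclicity and the separating property. (The strong speed limit condition implies the weak one, so $E$ is causal.)

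Second, the half-sided inclusions hold. For $s\geq0$, write $U(sy_\pm)=\tau_{\pm s}\sigma_{s\by}$. Translation covariance gives $\sigma_{s\by}E(A(\by))=E(A(\by)+s\by)$. The strong speed limit condition with $v=1$ then yields
\begin{align}
U(sy_\pm)E(A(\by))U(sy_\pm)^*\leq E(A(\by)+s\by+|s|\bB)\leq E(A(\by)),
\end{align}
using \eqref{eq:AyInvariance2} and monotonicity (Lemma~\ref{lemma:Observables}). Hence $U(sy_\pm)H(\by)\subset H(\by)$ for $s\geq0$. Positivity of the generator of $s\mapsto U(sy_\pm)$ holds because the spectrum lies in $\overline{V^+}$ and $y_\pm\in\partial V^+$ is future-directed, so $p\cdot y_\pm\geq0$.

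Third, Borchers' Theorem now gives $\Delta^{it}U(sy_\pm)\Delta^{-it}=U(e^{\mp2\pi t}\cdot)$, with the sign to be checked against the boost convention, and $J U(sy_\pm)J=U(-sy_\pm)$. These are exactly $U(\La_\by(-2\pi t)sy_\pm)$ and $U(j_\by sy_\pm)$, since $y_\pm$ are eigenvectors of the boosts with eigenvalues $e^{\pm t}$, and $j_\by y_\pm=-y_\pm$.

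The remaining directions are $x$ orthogonal to $e_0$ and $\by$. Here $W(\by)+x=W(\by)$ in spacetime, and $A(\by)+\bx=A(\by)$, so $U(x)$ leaves $H(\by)$ invariant together with $U(-x)$. Hence $U(x)H(\by)=H(\by)$. By uniqueness of modular data, $U(x)$ commutes with $\Delta^{it}$ and $J$, which matches $\La_\by x=j_\by x=x$. Since $\{y_+,y_-\}\cup\by^\perp$ spans $\Rl^{d+1}$ and $U$ is a representation, \eqref{eq:BoostsOnTranslations} follows for all $x$. The step I expect to need most care is the sign/orientation bookkeeping in the Borchers relation; the invariance arguments themselves are routine.

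For part (b), conjugation of the generators $P_\mu$ by $\Delta^{it}_{H(\by)}$ implements the dual boost on the joint spectrum. Unitary conjugation preserves the spectrum, so the spectrum is invariant under all boosts $\La_\by(s)$, for every unit vector $\by$. Boosts in all directions generate the proper orthochronous Lorentz group, and $J$ additionally gives invariance under $j_\by$ composed with antilinearity. Either way, invariance under $\La_\by(s)$ for all $\by$ already gives Lorentz invariance.
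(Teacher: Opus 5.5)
Your route is the same as the paper's. You get the half-sided inclusions $U(sy_\pm)H(\by)\subset H(\by)$ for $s\geq0$ from the strong speed limit condition and \eqref{eq:AyInvariance2}, apply Borchers' theorem along the two null directions, use invariance of $H(\by)$ under the edge translations, and conclude spectral invariance under unitary conjugation. You also make the standardness of $H(\by)$ explicit via Proposition~\ref{prop:standardsubspacesarenecessary}, which the paper leaves implicit.

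There is, however, one false step, and it is exactly where the boost comes from. You assert that both groups $s\mapsto U(sy_\pm)$ have positive generator because $y_\pm$ is future-directed. This fails for $y_-=(-1,\by)$, which is past-directed: for $p\in\overline{V^+}$ one has $p\cdot y_-=-p_0-\bp\by\leq0$, so $U(sy_-)$ has a \emph{negative} generator. If both generators were positive, the version of Borchers' theorem you quote would give $\Delta^{it}U(sy_\pm)\Delta^{-it}=U(e^{-2\pi t}sy_\pm)$ for \emph{both} signs. That is a dilation of the $(e_0,\by)$-plane, not the boost $\La_\by(-2\pi t)$, and neither \eqref{eq:BoostsOnTranslations} nor part (b) would follow. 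The formula $U(e^{\mp2\pi t}sy_\pm)$ you write down is the correct one, but your justification does not support it; in fact it contradicts it.

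The repair is to use the negative-generator case of Borchers' theorem, as the paper does (Theorem~\ref{theorem:Std} with $-P\geq0$). This yields $\Delta^{it}U(sy_-)\Delta^{-it}=U(e^{2\pi t}sy_-)$. If you prefer to rely only on the positive case, take symplectic complements instead. From $U(sy_-)H(\by)\subset H(\by)$ for $s\geq0$ you get $U(-sy_-)H(\by)'\subset H(\by)'$. The group $s\mapsto U(-sy_-)$ has positive generator, since $-y_-$ is future-directed. Borchers' theorem for $H(\by)'$, together with $\Delta_{H'}=\Delta_H^{-1}$ and $J_{H'}=J_H$, then gives the claim. The opposite signs of the generators along $y_+$ and $y_-$ are precisely what turn $\Delta^{it}_{H(\by)}$ into a boost. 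With this correction, the rest of your argument (edge directions, spanning, and part (b)) goes through as in the paper.
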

\begin{proof}
    \ref{item:BorchersRelation} Consider the lightlike vector $y_\pm:=(\pm1,\by)$. The strong speed limit condition for $E$ and \eqref{eq:AyInvariance2} imply for any $s\geq0$
    \begin{align}
        U(sy_\pm)H(\by)
        =
        U(\pm se_0)H(A(\by)+s\by)
        \subset
        H(A(\by)+s\by+|s|\bB)
        \subset
        H(\by).
    \end{align}
    Since $U$ is a positive energy representation and $y_\pm$ lies in $\overline{V^+}$, the one-parameter group $U(sy_+)$, $s\in\Rl$, has positive generator, and $U(sy_-)$, $s\in\Rl$, has negative generator. Hence Theorem~\ref{theorem:Std}~\ref{item:Borchers} implies $\Delta_{H(\by)}^{it} U(y_\pm) \Delta_{H(\by)}^{-it}=U(e^{\pm 2\pi t}y_\pm)=U(\La_\by(-2\pi t)y_\pm)$, $t\in\Rl$, and $J_{H(\by)}U(y_\pm)J_{H(\by)}=U(-y_\pm)=U(j_\by(y_\pm))$.

    For directions $\bx\in\Rl^d$ that are orthogonal to $\by$, i.e. in the boundary of $A(\by)$, we clearly have $A(\by)+\bx=A(\by)$. In view of the translational covariance of $E$, this implies $U(\bx)H(\by)=H(\by)$, and, by Theorem~\ref{theorem:Std}~\ref{item:AUH}, $\Delta_{H(\by)}^{it} U(\bx) \Delta_{H(\by)}^{-it}=U(\bx)=U(\La_\by(-2\pi t)\bx)$, $t\in\Rl$, and $J_{H(\by)}U(\bx)J_{H(\by)}=U(\bx)$.

    Both results together show that \eqref{eq:BoostsOnTranslations} holds for all $x\in\Rl^{d+1}$.

    \ref{item:LorentzInvariantSpectrum} Since the spectrum of an operator is invariant under conjugation by unitaries, \eqref{eq:BoostsOnTranslations} implies that the energy-momentum spectrum $S$ is invariant under arbitrary Lorentz boosts $\La_{\by}(t)$, $t\in\Rl$, $\by\in\Rl^d$. As any rotation $R\in \SO(d)$ is a product of suitable Lorentz boosts, it follows that $S$ is also rotationally invariant, i.e. Lorentz invariant.
\end{proof}

A consequence of this result is that the strong speed limit condition requires Lorentz symmetry, at least on the level of the spectrum. As an example, consider a representation on $L^2(\Rl^d)$ of the form
\begin{align}
    U(x)=e^{ix_0 w(\Pi_1,\ldots,\Pi_d)}e^{ix_1\Pi_1}\cdots e^{ix_d\Pi_d},
\end{align}
with $\Pi_k$ the canonical momentum operators and $w:\Rl^d\to\Rl_+$ a fixed function (energy-momentum relation). The energy-momentum spectrum of $U$ is the graph of $w$. Typical non-relativistic examples, such as $w(\bp)=\frac{1}{2m}\|\bp\|^2$, are O$(d)$-invariant but not Lorentz invariant and hence do not admit causal spatial localization observables. On the other hand, the representation with $w(\bp)=(m^2+\|\bp\|^2)^{1/2}$, $m\geq0$, implementing a relativistic dispersion relation, has a Lorentz invariant spectrum, so the existence of corresponding causal localization observables is not ruled out by Proposition~\ref{proposition:SpectrumMustBeLorentzInvariant}.

In $d=1+1$ dimensions, one can even use the modular group of the half line $A(1)$ to extend a given translation representation $U$ of $\Rl^2$ to the Poincaré group $\Poi(2)$. For $d>1+1$, the information gathered in \eqref{eq:BoostsOnTranslations} is not enough to produce such an extension, but nonetheless a strong indication that the natural setting to consider is the case in which the translation representation restricts from a representation of $\Poi(d)$. We will therefore from now on restrict ourselves to consider the strongest and most constrained setting of Poincaré covariant spacetime localization observables.

\section{Spacetime localization observables in $\CP_{\Rl}(\Hil)$}\label{section:SpacetimeLocObinPRH}

As motivated above, we now consider Poincaré representations. More precisely, let~$U$ be a strongly continuous representation of the Poincaré group $\Poi(d+1)$ which has positive energy and which is (anti-)unitary in the sense that $U(L)$ is unitary for orthochronous $L$, and anti unitary for anti orthochronous $L$. As is well known, the irreducible representations of this form are classified by mass and spin (helicity) values and can be identified with elementary particles \cite{Wigner:1939}.

We focus on the involution lattice $\CP_{\Rl}(\Hil)$, i.e. we consider spacetime localization observables
\begin{align}
    E:\CC(\Rl^{d+1})\to\CP_{\Rl}(\Hil)
\end{align}
transforming covariantly under $U$. In the preceding section, we have derived some necessary properties of such maps, but it may not be obvious if they actually exist or how to construct examples. The aim is to establish existence and uniqueness results. As in the preceding section, we will make use of standard subspaces (see Appendix~\ref{appendix:StandardSubspaces}).

Let us fix some notation. We will write $H(\CO):=E(\CO)\Hil$ for the ranges of the real projections $E(\CO)$. Since we may restrict $E$ to the spatial translationally covariant (w.r.t. $U|_{\Rl^d}$) causal localization observable $E\circ\cc:\Bor(\Rl^d)\to\CP_{\Rl}(\Hil)$ (Proposition~\ref{prop:SpacetimeObservableInduceCausalSpatialObservables}~\ref{item:SpacetimeToSpatial}), and $E(\cc(A))$ is standard if $A$ and $A^\perp$ have interior points (Proposition~\ref{prop:standardsubspacesarenecessary}), the monotonicity and separation properties of $E$ imply that $E(\CO)$ is standard if $\CO$ and~$\CO'$ have interior points. In this case, we will write $\Delta_{\CO}$, $J_{\CO}$ for its modular data. As a standard subspace, $H(\CO)$ satisfies $H(\CO)=\ker(J_{\CO}\Delta_{\CO}^{1/2}-1)$ (Appendix~\ref{appendix:StandardSubspaces}), so specifying the modular data of $H(\CO)$ is equivalent to specifying $H(\CO)$ itself. This becomes also clear by recalling that the orthogonal projection onto $H(\CO)$ can be expressed as \cite{FiglioliniGuido:1994}
\begin{align}\label{eq:E}
    E(\CO)
    =
    (1+J_{\CO}\Delta_{\CO}^{1/2})(1+\Delta_{\CO})^{-1}.
\end{align}

\subsection{Existence}

In the proof of the Lorentz symmetry of the energy-momentum spectrum (Proposition~\ref{proposition:SpectrumMustBeLorentzInvariant}), we have already made use of the causal completions $\cc(A(\by))$ of the half spaces $A(\by)$ \eqref{eq:TimeZeroHalfSpace}. In our present more general spacetime picture, we define a {\em wedge} to be a subset $W\subset\Rl^{d+1}$ that is a Poincaré transform of $\cc(A(\by))$. It is then clear that for every wedge $W$, there exists a one parameter group $\La_W(t)$, $t\in\Rl$, in the proper orthochronous Lorentz group, and a reflection $j_W$ in the proper Lorentz group, such that
\begin{align}
    \La_W(t)W
    &=
    W,\quad t\in\Rl,\qquad
    j_W(W^\circ)=(W')^\circ,
    \\
    \La_{LW}(t)
    &=
    L\La_W(t)L^{-1},\qquad
    j_{LW}=Lj_W L^{-1},\qquad L\in\Poi(d+1).
\end{align}
In the formula $j_W(W^\circ)=(W')^\circ$ we need the interior because in the logic $\CC(\Rl^{d+1})$, the causally complete wedges are closed, and $j_W(W)\supsetneq W'$. The mismatch between $j_W(W)$ and $W'$ consists however only in the edge of $W$, which is of measure zero. This will play no role for the associated standard subspaces.

In Proposition~\ref{proposition:SpectrumMustBeLorentzInvariant}, we have seen that the modular unitaries $\Delta_W^{it}$ and modular conjugation $J_W$ act on the translations $U(x)$ like the represented Lorentz boosts $U(\La_W(-2\pi t))$ and reflections $U(j_W)$, respectively. This opens up the possibility to start the \emph{definition} of a candidate for a localization observable by requiring that the modular group and conjugation of $H(W)$ are given by $\Delta_{H(W)}^{it}=U(\La_W(-2\pi t))$ and $J_{H(W)}=U(j_W)$, respectively. For the corresponding standard subspace, this means
\begin{align}\label{eq:BGLwedgespace}
    H(W)
    :=
    H_U(W)
    :=
    \ker(U(j_W)e^{-\pi K_W}-1),
\end{align}
where $W$ is any wedge and $U(\La_W(t))=e^{itK_W}$, i.e. $K_W$ is the generator of the Lorentz boosts in $W$-direction in the representation $U$. This yields a closed real linear space with associated projection $E_U(W)\in\CP_{\Rl}(\Hil)$, and clearly $E_U(W)$ is defined entirely in terms of the representation~$U$.

This idea goes back to Brunetti, Guido and Longo under the name of \emph{modular localization} \cite{BrunettiGuidoLongo:2002}, and we here briefly summarize their main results.

\begin{theorem}\emph{\cite{BrunettiGuidoLongo:2002}}\label{theorem:BGLwedges}
    Let $U$ be a positive energy representation of $\Poi(d+1)$, and $W\subset\Rl^{d+1}$ a wedge.
    \begin{enumerate}
        \item\label{item:BGLwedgeduality} $H_U(W)$ is a standard subspace with $H_U(\overline{W'})=H_U(W)'$.
        \item\label{item:BGLfactoriality} If $U$ does not contain the trivial representation, then $H_U(W)\wedge H_U(W)'=0$.
        \item\label {item:BGLwedgeisotony} If $\tilde W$ is another wedge such that $\tilde W\subset W$, then $H_U(\tilde W)\subset H_U(W)$.
        \item\label{item:BGLcovariance} For any $L\in\Poi(d+1)$, we have $U(L)H_U(W)=H_U(LW)$.
    \end{enumerate}
\end{theorem}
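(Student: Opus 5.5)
We would prove the four items of Theorem~\ref{theorem:BGLwedges} by reducing everything to the standard wedge $W=W(e_1)$ via covariance, and by using the abstract characterization of standard subspaces through their modular data (Appendix~\ref{appendix:StandardSubspaces}). The basic tool is the correspondence: a pair $(\Delta,J)$ with $\Delta>0$, $J$ an antiunitary involution, and $J\Delta J=\Delta^{-1}$ determines a unique standard subspace $H=\ker(J\Delta^{1/2}-1)$ with exactly these modular data. With $\Delta:=e^{-2\pi K_W}$ and $J:=U(j_W)$, the relation $j_W\La_W(t)j_W=\La_W(t)$ together with antiunitarity of $U(j_W)$ gives $J e^{itK_W}J=e^{itK_W}$, hence $JK_WJ=-K_W$ and $J\Delta J=\Delta^{-1}$. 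Since $j_W^2=1$, $J$ is an involution, so $H_U(W)$ is standard.

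\textbf{Step 1 (covariance, item~\ref{item:BGLcovariance}).} For $L\in\Poi(d+1)$ we have $\La_{LW}(t)=L\La_W(t)L^{-1}$ and $j_{LW}=Lj_WL^{-1}$. Applying $U$ gives $K_{LW}=\pm U(L)K_WU(L)^*$, and $U(j_{LW})=U(L)U(j_W)U(L)^*$. The sign needs care for antiunitary $L$, where $U(L)iU(L)^*=-i$; in that case $L$ reverses the time orientation, so $LW$'s boost is the reversed one, and the two sign changes cancel. It follows that $U(L)\ker(U(j_W)e^{-\pi K_W}-1)=\ker(U(j_{LW})e^{-\pi K_{LW}}-1)$.

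\textbf{Step 2 (duality, item~\ref{item:BGLwedgeduality}).} For any standard $H$ one has $H'=\ker(J\Delta^{-1/2}-1)$, i.e.\ $H'$ has modular data $(\Delta^{-1},J)$. The wedge $\overline{W'}=j_W W$ satisfies $\La_{\overline{W'}}(t)=\La_W(-t)$ and $j_{\overline{W'}}=j_W$. So $H_U(\overline{W'})$ is defined by $(\Delta^{-1},J)$, which gives $H_U(\overline{W'})=H_U(W)'$. One could alternatively use Step~1 with $L=j_W$.

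\textbf{Step 3 (factoriality, item~\ref{item:BGLfactoriality}).} We have $H\cap H'=\ker(\Delta-1)\cap\{\text{fixed by }J\}$, since vectors in both spaces satisfy $J\Delta^{1/2}\xi=\xi=J\Delta^{-1/2}\xi$. Thus $\xi\in\ker K_W$, i.e.\ $\xi$ is boost invariant. We would then invoke a Mautner/Moore-type argument: since the boosts together with translations along lightlike directions $y_\pm$ generate the Poincaré group and $U(\La_W(t))U(sy_\pm)U(\La_W(-t))=U(e^{\pm t}sy_\pm)$, a boost-invariant vector is invariant under all translations and hence under $U$ entirely. Because $U$ contains no trivial subrepresentation, $\xi=0$.

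\textbf{Step 4 (isotony, item~\ref{item:BGLwedgeisotony}), the main obstacle.} Given $\tilde W\subset W$, we would write $\tilde W=LW$ for suitable $L$; in the essential case $\tilde W=W+x$ with $x\in W$. Then $x$ is a positive combination of $y_\pm$ plus edge directions. For edge translations the boosts commute, so $U(x)H(W)=H(W)$. For $y_+$, the commutation $\Delta^{it}U(sy_+)\Delta^{-it}=U(e^{2\pi t}sy_+)$ plus positivity of the generator give, by the converse of Borchers' theorem (the Longo/Wiesbrock half-sided modular inclusion direction in Theorem~\ref{theorem:Std}), $U(sy_+)H\subset H$ for $s\ge0$; similarly for $y_-$ with negative generator. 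The analytic step is genuinely nontrivial: one shows that for $\xi\in H$ the function $t\mapsto\Delta^{-it}U(s y_+)\xi$ continues analytically to the strip with the right boundary value, using positivity of energy. The general case of an inclusion of wedges with different edge directions requires a geometric reduction: such inclusions in $\Rl^{d+1}$ can only occur with $\tilde W$ a translate of $W$, because the boundaries of $\tilde W$ must be parallel. With that observation, isotony follows from the translation case.
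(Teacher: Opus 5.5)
The paper gives no proof of this theorem; it quotes \cite{BrunettiGuidoLongo:2002}. Your outline follows the BGL route, and your treatment of standardness, covariance and duality is fine. There are, however, two places where the argument as written does not go through.

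\textbf{Factoriality.} Your identification $H\cap H'=\{\xi\in\ker(\Delta-1)\col J\xi=\xi\}$ is correct, and Mautner's lemma does give $U(sy_\pm)\xi=\xi$. But for $d\geq2$ the boosts $\La_W$ and the translations along $y_\pm$ generate only the Poincar\'e group of the $x_0$--$x_1$ plane, not $\Poi(d+1)$. Nothing has yet been said about edge translations or the rest of the Lorentz group.

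Positivity of the energy repairs the first point: the translation spectral measure of $\xi$ lives on $\{p\in\overline{V^+}\col p\cdot y_+=p\cdot y_-=0\}=\{0\}$, so $\xi$ is translation invariant. Translation invariance alone is still not enough. A positive energy representation may have a zero-momentum part on which the Lorentz group acts by a nontrivial unitary representation without invariant vectors, and such a $U$ contains no trivial subrepresentation.

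You need one more ingredient. One option is Moore's ergodicity theorem for $\SO_0(1,d)$. Another is Mautner's lemma applied to the null rotations $n^\pm(s)$ fixing the null directions in $\partial W$: these satisfy $\La_W(t)n^\pm(s)\La_W(-t)=n^\pm(e^{\pm t}s)$, and together with $\La_W$ and the null translations they generate the proper orthochronous Poincar\'e group. Only then does $J\xi=\xi$ make $\Cl\xi$ a trivial subrepresentation, forcing $\xi=0$.

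\textbf{Isotony.} The geometric reduction is correct: an inclusion of wedges is a translation $\tilde W=W+x$ with $x\in W$. However, the converse of Borchers' theorem is not part of Theorem~\ref{theorem:Std}, which only gives inclusion $\Rightarrow$ commutation relations. You must prove or quote it, and it requires \emph{both} commutation relations, not just the modular-group one.

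For $\xi\in H$, analytic continuation of $z\mapsto\Delta^{-iz}U(sy_+)\xi$ only yields $U(sy_+)\xi\in D(\Delta^{1/2})$ and $\Delta^{1/2}U(sy_+)\xi=U(-sy_+)\Delta^{1/2}\xi$. To reach $J\Delta^{1/2}U(sy_+)\xi=U(sy_+)\xi$ you need $JU(-sy_+)J=U(sy_+)$. This holds because $J=U(j_W)$ and $j_Wy_\pm=-y_\pm$, but you should state and use it.

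The sign also matters. In the convention of Theorem~\ref{theorem:Std}~\ref{item:Borchers}, a positive generator goes with $\Delta^{it}U(sy_+)\Delta^{-it}=U(e^{-2\pi t}sy_+)$. With the relation you wrote, $U(e^{+2\pi t}sy_+)$, the same argument gives $U(-sy_+)H\subset H$ for $s\geq0$, i.e.\ the reversed inclusion. So the orientation of $\La_W$ in $\Delta^{it}=U(\La_W(-2\pi t))$ must be fixed such that $\La_W(-2\pi t)y_+=e^{-2\pi t}y_+$ for the future-pointing null vector $y_+\in\partial W$. This is exactly where positivity of energy produces isotony rather than its opposite.
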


\begin{definition}\label{def:BGLmap}
    Let $U$ be a positive energy representation of $\Poi(d+1)$. Then the {\em Brunetti-Guido-Longo (BGL)} map is
    \begin{align}
        H_U:\;&\CC(\Rl^{d+1})\to\CP_{\Rl}(\Hil),
        \\
        H_U(\CO)
        &:=
        \bigvee_{C\in\CC(\Rl^{d+1})\;\text{convex} \atop{C\subset\CO}}\bigwedge_{\CW\ni W\supset C}H_U(W),
        \label{eq:BGLgeneral}
    \end{align}
\end{definition}

To unpack this definition, we first consider a convex causally complete set $C\in\CC(\Rl^{d+1})$. In this case, Definition~\ref{def:BGLmap} simplifies to $H_U(C)=\bigwedge_{\CW\ni W\supset C}H_U(W)$ because $C\mapsto H_U(C)$ is clearly monotone. As a causally complete set $C\in\CC(\Rl^{d+1})$ is convex if and only if it coincides with the intersection of all wedges containing it, $C=\bigcap_{\CW\ni W\supset C}W$ \cite{ThomasWichmann:1997}, we in this case have the natural definition
\begin{align}
    H_U\left(\bigwedge_{\CW\ni W\supset C}W\right)
    =
    \bigwedge_{\CW\ni W\supset C}H_U(W).
\end{align}
As also $\CW\ni W\mapsto H_U(W)$ is monotone by Theorem~\ref{theorem:BGLwedges}~\ref{item:BGLwedgeisotony}, we note that this implies in particular that Definition~\ref{def:BGLmap} is consistent with \eqref{eq:BGLwedgespace}. For non-convex regions $\CO$, the BGL map defines $H_U(\CO)$ to be generated by all $H_U(C)$ with $C\in\CC(\Rl^{d+1})$ convex and contained in $\CO$.

It is now easy to verify that $H_U$ is a causal spacetime localization observable.

\begin{theorem}
    Let $U$ be a positive energy representation of $\Poi(d+1)$ that does not contain the trivial representation. The BGL map (Definition~\ref{def:BGLmap}) $H_U$ is a $U$-covariant spacetime localization observable.
\end{theorem}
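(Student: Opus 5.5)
We have to check the three conditions of Definition~\ref{def:Observable} (normalization, $\sigma$-additivity, separation preservation) together with covariance. Throughout, $H_U$ on wedges is controlled by Theorem~\ref{theorem:BGLwedges}, and on general regions by the formula \eqref{eq:BGLgeneral}.

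\emph{Covariance.} For $L\in\Poi(d+1)$, the map $C\mapsto LC$ bijects convex causally complete sets contained in $\CO$ with those contained in $L\CO$. It also bijects wedges containing $C$ with wedges containing $LC$. Since $U(L)$ is (anti)unitary, it commutes with arbitrary meets and joins of closed real subspaces. Theorem~\ref{theorem:BGLwedges}~\ref{item:BGLcovariance} then gives $U(L)H_U(\CO)=H_U(L\CO)$ directly.

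\emph{Normalization.} For $\CO=\emptyset$, the only convex $C\subset\emptyset$ is $\emptyset$ itself. It is contained in pairs of opposite wedges $W$ and $j_W W$, or rather in a wedge together with a translate of its causal complement, whose intersection is empty. Theorem~\ref{theorem:BGLwedges}~\ref{item:BGLwedgeduality},\ref{item:BGLfactoriality} give $H_U(W)\wedge H_U(W)'=0$, and we should check that $H_U(\overline{W'})=H_U(W)'$ makes such a pair available. Hence $H_U(\emptyset)=0$; this is where absence of the trivial representation enters. For $\CO=\Rl^{d+1}$, the set $C=\Rl^{d+1}$ is convex and causally complete, and no wedge contains it. The empty meet is $\Hil$, so $H_U(\Rl^{d+1})=1$. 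Alternatively, $\Rl^{d+1}$ contains wedges whose spaces are cyclic, but the empty-meet argument suffices.

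\emph{Separation preservation.} Let $\CO_1\subset\CO_2'$, and take convex $C_1\subset\CO_1$, $C_2\subset\CO_2$. Then $C_1\subset C_2'$. The key geometric input is the following: for disjoint causally separated convex causally complete sets, there is a wedge $W$ with $C_1\subset W$ and $C_2\subset W'$, or at least $C_2\subset\overline{W'}$. This is a separation theorem for convex causally complete sets. I would prove it via Hahn--Banach applied to $C_1$ and the chronological hull of $C_2$, producing a separating spacelike hyperplane. Note that $C_2'$ is not convex in general, so this needs care; the achronal relation $\sim$ may force using closed wedges. I expect this to be the main obstacle.

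Given such a wedge, Theorem~\ref{theorem:BGLwedges}~\ref{item:BGLwedgeduality} yields
\[
H_U(C_1)\subset H_U(W)=H_U(\overline{W'})'\subset H_U(C_2)'.
\]
Taking joins over $C_1$ and meets over $C_2$, and using de Morgan \eqref{eq:DeMorgansLaws}, gives $H_U(\CO_1)\subset H_U(\CO_2)'$.

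\emph{$\sigma$-additivity.} For a separated family $(\CO_n)$, monotonicity gives $\bigvee_n H_U(\CO_n)\subset H_U(\bigvee_n\CO_n)$. For the converse inclusion, note that a convex $C\subset\bigvee_n\CO_n$ need not lie in a single $\CO_n$. Since the join on the lattice side is generated by wedge intersections and $H_U$ is not a lattice homomorphism, I would instead restrict the defining join in \eqref{eq:BGLgeneral} to a generating subfamily. Specifically, I would show that $H_U(C)$ is generated by the $H_U(D)$ for small double cones $D\subset C$, which is an additivity property of the BGL net known from \cite{BrunettiGuidoLongo:2002}. I would then argue that a double cone inside $(\bigcup\CO_n)''$ must lie in the causal completion of a union of separated pieces. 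This also uses additivity at the level of double cones, and is a second delicate point; if it fails, I would settle for finite additivity by reducing to the geometric fact that a convex set in $\CO_1\vee\CO_2$ with $\CO_1\subset\CO_2'$ splits as $C\cap\CO_1$ and $C\cap\CO_2$.
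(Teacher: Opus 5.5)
Your treatment of covariance, normalization and separation preservation is essentially the paper's. For $H_U(\emptyset)=0$ you only need that $\emptyset$ lies in every wedge, so $H_U(\emptyset)\subset H_U(W_0)\wedge H_U(\overline{W_0'})=H_U(W_0)\wedge H_U(W_0)'=0$ by Theorem~\ref{theorem:BGLwedges}; wedges with empty intersection play no role. For separation, the existence of a wedge $W$ with $C_1\subset W$ and $C_2\subset\overline{W'}$ is the Thomas--Wichmann result, which the paper simply cites. So this is not where the difficulty lies, and your passage from convex sets to general $\CO_1,\CO_2$ via de Morgan is correct.

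The genuine gap is $\sigma$-additivity.

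\emph{Your observation is correct, and it undercuts the paper too.} A convex $C\subset\bigvee_n\CO_n$ need not lie in a single $\CO_n$. For $d=1$, $\cc([0,2])\subset\cc([0,1])\vee\cc((1,2])$ lies in neither piece. The paper's proof asserts exactly the opposite at this point, so its one-line argument does not settle this either.

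\emph{Your substitutes do not close the gap.} Saying that a double cone $D$ lies in the causal completion of a union of separated pieces just says $D\subset\bigvee_n\CO_n$. What is needed is $H_U(D)\subset\bigvee_nH_U(\CO_n)$ for double cones straddling several pieces. That is an analytic statement about $U$, which the geometry does not provide.

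Your fallback is false. A small double cone around the point with time coordinate $\tfrac12$ and spatial coordinate $1$ lies in $\cc([0,1])\vee\cc((1,2])=\cc([0,2])$ but meets neither piece. This is the non-distributivity of Figure~\ref{figure:CausalLogic}. Even when $C=(C\cap\CO_1)\vee(C\cap\CO_2)$ does hold, you would still need $H_U(C)\subset H_U(C\cap\CO_1)\vee H_U(C\cap\CO_2)$, which is the claim itself.

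\emph{Additivity appears to fail for arbitrary separated families.} The following sketch suggests no argument can work in this generality. Take $U=U_{m,0}$, $d=1$, and a compact $K\subset[0,2]$ with empty interior supporting a nonzero real $u\in H^{1/2}(\Rl)$. Such sets should exist: remove from $[0,2]$ a dense union of intervals with super-exponentially decaying lengths, and use the logarithmic $H^{1/2}$-capacity of short intervals.
\begin{itemize}
\item The set $\cc(K)=\{0\}\times K$ has only points as convex subsets. No nonzero $H^{\pm1/2}$ Cauchy data are supported at a point, so $H_U(\cc(K))=0$.
\item For every $\psi\in H_U(\cc([0,2]\setminus K))$, the component $\psi_-$ of \eqref{eq:KGcomponents}, read as an $H^{1/2}$-function, lies in the $H^{1/2}$-closure of functions supported in the closures of the complementary intervals. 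Hence it vanishes a.e.\ on $K$.
\item On the other hand, $\psi=i\om\tilde u$ has $\psi_+=0$ and $\psi_-=\tilde u$. By \eqref{eq:HUconcrete} it therefore lies in $H_U(\cc([0,2]))=H_U(\cc(K)\vee\cc([0,2]\setminus K))$.
\end{itemize}
If this is right, even finite additivity fails for this separated pair. A correct statement would then have to restrict the class of regions for which additivity is claimed, and supply the analytic additivity input for that class.
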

\begin{proof}
    We begin by checking the normalization of $H_U$. By Definition~\ref{def:BGLmap},
    \begin{align}
        H_U(\emptyset)=\bigwedge_{W\in\CW}H_U(W)\subset H_U(W_0)\wedge H_U(\overline{W_0'})
    \end{align}
    for any wedge $W_0$. According to Theorem~\ref{theorem:BGLwedges}~\ref{item:BGLwedgeduality} and \ref{item:BGLfactoriality}, $H_U(W_0)\wedge H_U(\overline{W_0'})=0$, so $H_U(\emptyset)=0$ follows. Since $\Rl^{d+1}$ is convex but not contained in any wedge, $H_U(\Rl^{d+1})=1$ holds by definition.

    Given two convex sets $C_1,C_2\in\CC(\Rl^{d+1})$ such that $C_1\subset C_2'$, there exists a wedge $W$ such that $C_1\subset W$ and $C_2\subset\overline{W'}$ \cite{ThomasWichmann:1997}. This implies $H_U(C_1)\subset H_U(W)=H_U(\overline{W'})'\subset H_U(C_2)'$, so $H_U$ preserves separation.

    Regarding $\sigma$-additivity, let $(\CO_n)\subset\CC(\Rl^{d+1})$ be a countable separated family. Then any convex $C\in\CC(\Rl^{d+1})$ with $C\subset\bigvee_n\CO_n$ lies in only one of the $\CO_n$'s. This implies the additivity of $H_U$ by \eqref{eq:BGLgeneral}, and we conclude that $H_U$ is an observable in the sense of Definition~\ref{def:Observable}.

    The $U$-covariance of $H_U$ follows from Theorem~\ref{theorem:BGLwedges}~\ref{item:BGLcovariance}.
\end{proof}

The spaces $H_U(\CO)$ are closely connected to free quantum field theories, which also gives a more direct understanding of the vectors in them. For simplicity, let us consider the irreducible representation $U$ of mass $m>0$ and spin $0$, as in Example~\ref{example:NewtonWigner} (see that example for the definition of the representation space and the dispersion relation $\om$). This is a representation of the full Poincaré group (including in particular the time reflection $T(x)=(-x_0,\bx)$).

To describe $H_U(\CO)$, we consider $\CO_R=\cc(B_R(0))$ to be the causal completion of a ball of radius $R>0$, and split $\psi\in\Hil$ into time reflection invariant components
\begin{align}\label{eq:KGcomponents}
    \psi_+=\frac12(\psi+U(T)\psi),\qquad
    \psi_-=\frac{1}{2i\om}(\psi-U(T)\psi);
\end{align}
these are related to the Cauchy data of the Klein-Gordon equation. Any $\psi\in\Hil$ is in particular a tempered distribution on $\Rl^d$, and we consider the inverse transforms $\EuScript{F}^{-1}(\psi_\pm)$. Then
\begin{align}\label{eq:HUconcrete}
    H_U(\cc(B_R(0)))
    &=
    \{\psi\in\Hil\col \supp\EuScript{F}^{-1}\psi_\pm\subset B_R(0)\}
    \\
    &=
    \overline{\{\bp\mapsto\tilde f(\om(\bp),\bp) \col f\in C_{c,\Rl}^\infty(\cc(B_R(0))\}}
    .
\end{align}
For a proof of this fact, see \cite[App.~A]{LechnerLongo:2014} (the arguments given there relate to $d=1$, but generalize to $d>1$).

The characterizations \eqref{eq:HUconcrete} provide us with many examples of vectors in $H_U(\CO_R))$ (and by covariance and additivity, in the spaces $H_U(\CO)$ for general $\CO\in\CC(\Rl^{d+1})$). However, explicit expressions for the projections \eqref{eq:E} onto these spaces are not known because the modular data $\Delta_{H_U(\CO)}$, $J_{H_U(\CO)}$ are not known explicitly unless $\CO$ is a wedge. In the case of wedges, closed formulas for the projections can be derived (see \cite[Proposition~6.1 a)]{CorreadaSilvaLechner:2025} for the two-dimensional case). A further discussion of these spaces will be given in Section~\ref{section:Probability}.

\subsection{Uniqueness}\label{section:uniqueness} Having established the existence of spacetime localization observables $E:\CC(\Rl^{d+1})\to\CP_{\Rl}(\Hil)$ with the example of the BGL maps $E=E_U$, we next consider uniqueness questions. The definition of $E_U$ (Definition~\ref{def:BGLmap}) is based on three assumptions:
\begin{enumerate}[(U1)]
    \item\label{eq:U1} For every wedge $W\in\CW$, the modular data of $H(W)$ are given by the boosts in $W$-direction and the reflection $j_W$ at the edge of $W$ as represented by $U$ ({\em Bisognano-Wichmann property}).

    \item\label{eq:U2} For a convex region $C=\bigcap_{\CW\ni W\supset C}W$, one has $H(C)=\bigwedge_{\CW\ni W\supset C}H(W)$.

    \item\label{eq:U3} For a general region $\CO\in\CC(\Rl^{d+1})$, additivity holds in the form that  $H(\CO)$ is the smallest closed real subspace containing all $H(C)$, where $C\in\CC(\Rl^{d+1})$ runs over all convex subsets of $\CO$.
\end{enumerate}
It is clear that $H_U$ is the unique localization observable satisfying these three requirements.

To discuss these three assumptions, recall that an observable is always monotone (Lemma~\ref{lemma:Observables}). Given a convex causally complete region $C=\bigcap_{\CW\ni W\supset C}W$, any observable $H:\CC(\Rl^{d+1})\to\CP_{\Rl}(\Hil)$ must therefore satisfy $H(C)\subset \bigwedge_{\CW\ni W\supset C}H(W)$. Hence assumption (U2) represents the maximal choice for $H(C)$ once the wedge spaces $H(W)$ are fixed. Although no classification of smaller choices for $H(C)$ exists, this maximality is a natural assumption which appears frequently in similar situations \cite{SummersWichmann:1987,Lechner:AQFT-book:2015,MorinelliNeeb:2023}.

Assumption \ref{eq:U3} expresses a form of additivity of $H$ that goes beyond the additivity required in Definition~\ref{def:Observable} because the convex subsets $C\subset\CO$ do not have to be separated. By monotonicity, any localization observable must satisfy $H(\CO)\supset H(C)$ for any convex $C\subset\CO$. Hence \ref{eq:U3} amounts to taking the minimal extension of an observable defined on convex causally complete region to general $\CO\in\CC(\Rl^{d+1})$. This again seems to be a reasonable assumption, which in addition holds automatically in restriction to the Boolean sublogic $\Bor(\Rl^d)$ (Lemma~\ref{lemma:Observables}~\ref{item:fpreservesjoins}).

Accepting \ref{eq:U2} and \ref{eq:U3}, uniqueness of spacetime localization observables is directly linked to the Bisognano-Wichmann assumption \ref{eq:U1} which might not appear natural from a quantum-mechanical perspective. It has its origin in quantum field theory and is known to hold in any Wightman QFT \cite{BisognanoWichmann:1976}. In our setting of localization observables, we may of course not argue via results in quantum field theory. Nonetheless, we have seen in Proposition~\ref{proposition:SpectrumMustBeLorentzInvariant} that thanks to Borchers' Theorem, \ref{eq:U1} partially follows from our assumptions. To explain this point further, we consider a simplified situation.

\begin{lemma}
    Let $E:\CC(\Rl^{d+1})\to\CP_{\Rl}(\Hil)$ be a $U$-covariant spacetime localization observable, and $W$ a wedge. Then $U(\La_W(2\pi t))\Delta_W^{it}$ and $U(j_W)J_W$ commute with $U(x,L)$ for all $x\in\Rl^{d+1}$ and all Lorentz transformations $L$ satisfying $LW=W$.

    If the restriction of $U$ to the proper Poincaré group is irreducible and the dimension is $1+1$, then
    \begin{align}
        \Delta_W^{it}=U(\La_W(-2\pi t)), \qquad J_W=\pm U(j_W),
        \\
        H(W)=H_U(W)\quad\text{or}\quad
        H(W)=iH_U(W).
    \end{align}
\end{lemma}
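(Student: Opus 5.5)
The plan is to first redo, for a general wedge $W$, the argument of Proposition~\ref{proposition:SpectrumMustBeLorentzInvariant}~\ref{item:BorchersRelation}, and then use covariance together with standard modular theory (Theorem~\ref{theorem:Std}). By Poincaré covariance we may take $W=W(\by)=\cc(A(\by))$, since conjugating with $U(L)$ transports modular data, $\Delta_{LW}^{it}=U(L)\Delta_W^{it}U(L)^*$ and $J_{LW}=U(L)J_WU(L)^*$, and likewise transports $\La_W$ and $j_W$. The restriction $E\circ\cc$ is causal, and for spacetime observables the strong speed limit condition holds by the chain of inequalities in the proof of Proposition~\ref{prop:SpacetimeObservableInduceCausalSpatialObservables}~\ref{item:SpacetimeToSpatial}. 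Hence Proposition~\ref{proposition:SpectrumMustBeLorentzInvariant}~\ref{item:BorchersRelation} applies and gives
\[
\Delta_W^{it}U(x)\Delta_W^{-it}=U(\La_W(-2\pi t)x),\qquad J_WU(x)J_W=U(j_Wx).
\]
On the other hand, $U(\La_W(s))U(x)U(\La_W(s))^*=U(\La_W(s)x)$ and $U(j_W)U(x)U(j_W)^*=U(j_Wx)$. Combining the two, $U(\La_W(2\pi t))\Delta_W^{it}$ and $U(j_W)J_W$ commute with all translations.

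For Lorentz transformations $L$ with $LW=W$, covariance gives $U(L)H(W)=H(LW)=H(W)$. By Theorem~\ref{theorem:Std}~\ref{item:AUH} (a unitary preserving a standard subspace commutes with its modular data), $U(L)$ commutes with $\Delta_W^{it}$ and $J_W$. Such $L$ commute with the boost $\La_W(s)$ and with $j_W$, because they preserve the wedge and hence its edge and the associated two-dimensional plane. Here one should check the orthochronous case, or also antiunitary $L$, but $j_W$ is central in the stabilizer geometrically. So $U(L)$ commutes with $U(\La_W(2\pi t))$ and $U(j_W)$, which gives the first claim for $U(x,L)$.

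In $1+1$ dimensions the proper Poincaré group is generated by translations, the boosts $\La_W$ (which stabilize $W$) and $j_W$. The operator $V(t):=U(\La_W(2\pi t))\Delta_W^{it}$ commutes with translations and boosts. To make it commute also with $U(j_W)$, I would use that $J_W$ commutes with $\Delta_W^{it}$ and $U(j_W)$ commutes with $U(\La_W(s))$. Then $U(j_W)V(t)U(j_W)^* = U(\La_W(2\pi t))\,U(j_W)\Delta_W^{it}U(j_W)^*$, and $U(j_W)\Delta_W^{it}U(j_W)^*=(U(j_W)J_W)\Delta_W^{it}(U(j_W)J_W)^*$, which equals $\Delta_W^{it}$ provided $U(j_W)J_W$ commutes with $\Delta_W^{it}$. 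This is the delicate point. I would address it by showing that $U(j_W)J_W$ (complex linear, commuting with translations and with boosts by the first part) lies in the commutant of the irreducible proper group, hence $U(j_W)J_W=\lambda\in\mathbb{T}$. Then $V(t)$ commutes with the whole irreducible representation and is a continuous one-parameter group of scalars, $V(t)=e^{ict}$.

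The main obstacle is killing the phase $c$. I would use $J_W\Delta_W^{it}J_W=\Delta_W^{it}$ and $U(j_W)U(\La_W(s))U(j_W)^*=U(\La_W(s))$. The antiunitary conjugation then gives $e^{-ict}=e^{ict}$ (antilinearity conjugates scalars), so $c=0$ and $\Delta_W^{it}=U(\La_W(-2\pi t))$. For $J_W=\bar\lambda U(j_W)$, involutivity gives no constraint on $\lambda$ directly. However, $J_W$ must satisfy $J_W\Delta_W^{1/2}$ fixing $H(W)$, and replacing $U(j_W)$ by $\mu U(j_W)$ with $\mu^2=\bar\lambda$ changes the space to $\mu H_U(W)$. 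So in general one gets $H(W)=\mu H_U(W)$ with $|\mu|=1$. To reduce to $\mu\in\{1,\pm i\}$ (i.e.\ $J_W=\pm U(j_W)$), I would invoke that $H(W)$ and $H(W')$ satisfy $H(W)\subset H(W')'$. With $H(W')=U(j_W)H(W)$ by covariance, one gets $\mu H_U(W)\subset(\bar\mu H_U(W'))' = \mu^{-1}\cdots$, and comparing via wedge duality in Theorem~\ref{theorem:BGLwedges}~\ref{item:BGLwedgeduality} forces $\mu^2$ real, hence $\mu\in\{\pm1,\pm i\}$. Since $-H=H$, this yields $H(W)=H_U(W)$ or $iH_U(W)$.
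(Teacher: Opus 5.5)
The overall approach matches the paper, but one step in the $1+1$ part fails as written. Part one follows the paper: the Borchers relations of Proposition~\ref{proposition:SpectrumMustBeLorentzInvariant}~\ref{item:BorchersRelation} apply because spacetime observables obey the strong speed limit condition, and Theorem~\ref{theorem:Std}~\ref{item:AUH} handles the stabilizer of $W$. In $1+1$ you then use irreducibility to make $V(t):=U(\La_W(2\pi t))\Delta_W^{it}$ and $Z:=U(j_W)J_W$ scalar, and afterwards fix the phases.

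The failing step is the claim that $Z$ lies in the commutant of $U$ restricted to the proper group. Part one only gives commutation of $Z$ with translations and boosts. For the remaining generator, $U(j_W)ZU(j_W)^*=J_WU(j_W)=Z^*$, so $Z$ commutes with $U(j_W)$ if and only if $Z^2=1$. That is essentially the conclusion $J_W=\pm U(j_W)$ itself. Your own text shows the inconsistency: a scalar commuting with the antiunitary $U(j_W)$ is automatically real, which would make your last paragraph unnecessary.

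What you actually need is Schur's lemma for the restriction of $U$ to the proper orthochronous group generated by translations and $\La_W$. In $1+1$ this restriction is still irreducible in the situation at hand (e.g.\ for massive $U$); the paper's proof passes over this point just as quickly. Schur then gives $Z=\lambda\in\mathbb{T}$ and $V(t)=e^{ict}$ directly from part one, so your detour through $[V(t),U(j_W)]$ is not needed.

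Relatedly, $L\La_W(s)L^{-1}=\La_W(s)$ holds only for orthochronous $L$ with $LW=W$. For an anti-orthochronous one such as the time reflection, $L\La_W(s)L^{-1}=\La_W(-s)$ and $U(L)\Delta_W^{it}U(L)^*=\Delta_W^{-it}$, hence $U(L)V(t)U(L)^*=V(-t)$. Only $Z$ commutes with such $U(L)$, by your centrality remark. So the first claim must be read for orthochronous $L$, which is all the $1+1$ part uses; the same caveat applies to the paper's formulation.

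Granting the scalars, your phase fixing differs from the paper's:
\begin{itemize}
\item For $V$, the paper uses that $V(t)$ maps $H(W)$ onto itself. A unimodular scalar doing so must be $\pm1$ because $H(W)$ is separating, and continuity gives $V\equiv1$. Your conjugation by $J_W=\bar\lambda U(j_W)$ works equally well.
\item For $J_W$, the paper asserts that $U(j_W)J_W$ maps $H(W)$ onto itself. This amounts to the duality $H(W)'=U(j_W)H(W)$, which is left implicit there. Your route writes $H(W)=\mu H_U(W)$ with $\mu^2=\bar\lambda$ and then gets $\mu^2H_U(W)\subset H_U(W)$ from separation plus BGL wedge duality. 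It needs only the inclusion $H(W)\subset H(j_WW)'$ that separation provides, which is a real advantage.
\end{itemize}

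Two details are missing from your version:
\begin{itemize}
\item An inclusion $\nu K\subset K$ with $|\nu|=1$ and $K$ closed, separating and non-zero forces $\nu=\pm1$. If $\nu$ is not a root of unity, closedness gives $e^{i\phi}K\subset K$ for all $\phi$, contradicting $K\cap iK=\{0\}$. If $\nu^N=1$, then $\nu^{-1}K=\nu^{N-1}K\subset K$ gives $\nu K=K$, and $\imag\nu\neq0$ would make $K$ complex.
\item $W$ and $j_WW$ share their edge, so $W\subset(j_WW)'$ is false. For $W=W(e_1)$, use instead $j_WW-\eps e_1\subset W'$, i.e.\ $U(-\eps e_1)H(j_WW)\subset H(W)'$, and let $\eps\to0$ using strong continuity of $U$.
\end{itemize}
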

\begin{proof}
    By the same proof as in Proposition~\ref{proposition:SpectrumMustBeLorentzInvariant}~\ref{item:BorchersRelation}, $U(\La_W(-2\pi t))$ and $\Delta_W^{it}$ act in the same manner on translations, and the same is true for $U(j_W)$ and $J_W$. As all operators entering here also commute with $U(0,L)$ for Lorentz transformations $L$ satisfying $LW=W$, the first claim follows.

    In $d=1+1$, the proper Poincaré group is generated by translations, boosts $\La_W(t)$ for the Rindler wedge $W$, and the reflection $j_W$. Hence irreducibility implies that $U(\La_W(t))\Delta_W^{-it}$ and $U(j_W)J_W$ are multiples of the identity, $U(\La_W(t))\Delta_W^{-it}=z(t)\cdot1$ and $U(j_W)J_W=z'\cdot1$ for some $z(t),z'\in\Cl$, $|z(t)|=|z'|=1$. Both these operators map $H(W)$ onto itself, and since this space is standard, we have $z(t),z'\in\{1,-1\}$. As $z(t)$ is a one-parameter group, we conclude $U(\La_W(t))\Delta_W^{-it}=1$ and $J_W=\pm U(j_W)$. As $J_{iH(W)}=-J_{H(W)}$, this implies the claimed form of~$H(W)$.
\end{proof}

This argument shows that at least in the simple situation of an irreducible representation in $1+1$ dimensions, \ref{eq:U1} almost follows (up to the freedom of a factor $i$). With more involved arguments, Mund has shown that in case $U$ is a finite direct sum of irreducible massive representations and the dimension is $1+3$, \ref{eq:U1} follows \cite[Theorem~5]{Mund:2001}. Another argument, due to Morinelli, shows that \ref{eq:U1} also follows if a $U$ is a general scalar representation in dimension $1+3$ \cite[Theorem~4.4]{Morinelli:2018}. On the other hand, also counterexamples to the Bisognano-Wichmann property are known \cite{Yngvason:1994,Borchers:1998_2,Morinelli:2018}, albeit only quite artificial ones. We therefore conclude that up to rather artificial counterexamples, the natural conditions \ref{eq:U2} and \ref{eq:U3} fix the BGL map as the unique spacetime localization observable transforming covariantly under a given representation $U$.

\medskip

To illustrate the properties of the localization observable $H_U$, we consider the example of the irreducible positive energy representation $U=U_{m,0}$ of mass $m>0$ and spin zero, and show that $E_{U_{m,0}}$ is incompatible with the Newton-Wigner localization observable (projection-valued measure) $P^{\NW}_{U_{m,0}}:\Bor(\Rl^d)\to\CP(\Hil)$ from Example~\ref{example:NewtonWigner}.

\begin{proposition}\label{prop:NWandEareIncompatible}
    Let $A,B\in\Bor(\Rl^d)$ be bounded. Then
    \begin{align}
        E_{U_{m,0}}(\cc(A)) \wedge P^{\NW}_{U_{m,0}}(B) = 0.
    \end{align}
\end{proposition}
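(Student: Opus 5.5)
Since $P^{\NW}(B)$ is complex linear, the meet $E_U(\cc(A))\wedge P^{\NW}(B)$ corresponds to $H\cap\CK$ with $H=H_U(\cc(A))$ and $\CK=P^{\NW}(B)\Hil$. The plan is to show that any $\psi\in H\cap\CK$ generates a complex subspace that violates the separating property of $H$, or equivalently to argue through analyticity.

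The concrete route uses the characterization \eqref{eq:HUconcrete}. Take $A\subset B_R(0)$ after translating, so that $H_U(\cc(A))\subset H_U(\cc(B_R(0)))$. Then for $\psi\in H$ both $\EuScript{F}^{-1}\psi_+$ and $\EuScript{F}^{-1}\psi_-$ have compact support. By Paley--Wiener, $\psi_\pm(\bp)$ extend to entire functions of $\bp$ of exponential type. Hence $\psi(\bp)=\psi_+(\bp)+i\om(\bp)\psi_-(\bp)$. On the Newton--Wigner side, $\psi\in\CK$ means $\om^{-1/2}\psi$ is the Fourier transform of an $L^2$ function supported in the bounded set $B$. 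So $\om^{-1/2}\psi$ is entire of exponential type as well, and in particular analytic in $\bp$ across all of $\Cl^d$.

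From this, $\om^{-1/2}\psi_+ + i\om^{1/2}\psi_-$ is entire, with $\psi_\pm$ entire. The branch functions $\om^{\pm1/2}$ have square-root branch points at $\bp^2=-m^2$. Continuing once around such a branch locus flips the signs of $\om^{1/2}$ and $\om^{-1/2}$ simultaneously, giving $-(\om^{-1/2}\psi_++i\om^{1/2}\psi_-)$. Since the expression is entire, it is single-valued, so it equals its own negative and therefore vanishes. Thus $\psi=0$. To make this rigorous, I would restrict to a complex line, for instance $\bp=(z,\mathbf{0})$ with $\om=(m^2+z^2)^{1/2}$, and loop $z$ around $im$. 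The key point to verify is that the function is genuinely analytic there, so that the monodromy argument is valid. A function $g$ that is entire and satisfies $g=\om^{-1/2}(a+i\om b)$ with $a,b$ entire forces $g^2\om=(a+i\om b)^2=a^2-\om^2b^2+2i\om ab$. Here $\om^2$ is a polynomial, so comparing the parts that are polynomial-entire with the parts proportional to $\om$ gives $\om g^2$ entire, which is odd under monodromy. The consequence is $g^2\om=2i\om ab$ and $a^2=\om^2b^2$. The latter forces $a=\pm\om b$, which is impossible for entire functions unless $a=b=0$, since $\om$ is not meromorphic-single-valued.

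The main obstacle is this parity/monodromy bookkeeping, specifically justifying the decomposition into entire $\times\{1,\om\}$ parts uniquely. This holds because $\om\notin$ the field of meromorphic functions on the line. The real-linearity does not matter here, since the intersection argument is vectorwise. Boundedness of $B$ is what gives the entire extension of the NW side, and boundedness of $A$ does the same for the BGL side.
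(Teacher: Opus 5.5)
Your proof is correct, but it takes a different route from the paper's. The paper uses the explicit time reflection $(U_{m,0}(T)\psi)(\bp)=\overline{\psi(-\bp)}$ to rewrite the components of $\psi=\om^{1/2}\tilde f$ as $\psi_+=\om^{1/2}\widetilde{\real f}$ and $\psi_-=\om^{-1/2}\widetilde{\imag f}$. It then quotes Murata's antilocality of $(m^2-D)^{\pm1/2}$ to conclude $\real f=0$ and $\imag f=0$ separately.

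You never split $f$ and never need the form of $U(T)$. You use only $\psi=\psi_++i\om\psi_-$, apply Paley--Wiener(--Schwartz) to $\psi_+$, $\psi_-$ and $\om^{-1/2}\psi$, and reduce everything to the elementary fact that $m^2+z^2$ has no meromorphic square root. That last step squares $\om^{1/2}g=a+i\om b$ and uses uniqueness of a decomposition $F_0+\om F_1$ with $F_0,F_1$ entire. In effect you prove from scratch exactly the compact-support case of antilocality that the paper imports.

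Your route buys self-containedness. The paper's buys brevity and relies on a genuinely stronger black box: antilocality concerns vanishing on arbitrary open sets, not just compact supports. Your argument, by contrast, is tied to boundedness of $A$ and $B$ through Paley--Wiener.

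Two details in your write-up need fixing.

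First, the heuristic monodromy claim is off. A small loop of $z$ around $im$ multiplies $\om^{1/2}=(m^2+z^2)^{1/4}$ by $i$ (and $\om$ by $-1$), so the expression does not turn into its own negative along that loop. The simultaneous sign flip of $\om^{\pm1/2}$ with $\om$ unchanged happens along a loop enclosing both $\pm im$; with that loop the heuristic does become a valid proof. Also, ``$\om g^2$ entire'' should read ``$g^2$ entire''. Your squaring argument does not depend on any of this and is fine as it stands.

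Second, a single line $\bp=(z,\mathbf 0)$ only gives vanishing on that line. You should run the argument on every line $z\mapsto z\mathbf e$ with $\mathbf e\in\Rl^d$ a unit vector, where still $\om^2=m^2+z^2$. Before doing so, note that $\om^{1/2}\tilde f=\psi_++i\om\psi_-$ holds pointwise on $\Rl^d$, because both sides are continuous and agree almost everywhere. This then yields $\psi_\pm=0$ on all of $\Rl^d$.
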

\begin{proof}
    Without loss of generality, we may assume $A=B$ to be the ball $B_R(0)$ of radius $R>0$ around the origin. Taking into account the characterizations \eqref{eq:HUconcrete} and \eqref{eq:NWPVM} of $E_{U_m}$ and $P^{\NW}_{U_{m,0}}$, respectively, a function $\psi$ in the range of $E_{U_{m,0}}(W) \wedge P^{\NW}_{U_{m,0}}(B)$ can be represented as
    \begin{align}
        \psi=\psi_++i\om\psi_-=\om^{1/2}\tilde f,
    \end{align}
    with $f\in L^2(\Rl,d\bx)$, $\supp f\subset B_R(0)$ and $\psi_+=\frac12(\psi+U_{m,0}(T)\psi)$, $\psi_-=\frac{1}{2i\om}(\psi-U_{m,0}(T)\psi)$ with $\supp\CF^{-1}\psi_\pm\subset B_R(0)$. As time reflection is represented as $(U_{m,0}(T)\psi)(\bp)=\overline{\psi(-\bp)}$, we see $\psi_+=(\om^{1/2}\tilde f)_+=\frac12(\psi+U_{m,0}(T)\psi)=\om^{1/2}\widetilde{\real(f)}$ and $\psi_-=\om^{-1/2}\widetilde{\imag(f)}$. On the Fourier transformed side, $\om^{\pm1/2}$ acts as the fractional power $(m^2-D)^{\pm 1/2}$ (with $D$ the Laplace operator), and this operator is known to be anti-local \cite{Murata:1973}. This implies that for a function~$g$ of compact support, $(m^2-D)^{\pm 1/2}g$ has compact support if and only if $g=0$. In view of the above representations for $\psi_\pm$, we arrive at $\psi=0$.
\end{proof}

\section{Probabilistic Interpretation}\label{section:Probability}

Having established the existence of covariant spacetime localization observables in~$\CP_{\Rl}(\Hil)$ in wide generality, we now turn to their probabilistic interpretation. A basic concept for this is a probability measure on an involution lattice.

\begin{definition}\label{def:ProbabilityMeasureOnInvolutionLattice}
    A {\em probability measure} on an involution lattice $\CQ$ is a map $\mu:\CQ\to[0,1]$ satisfying
    \begin{enumerate}
        \item $\mu(0)=0$, $\mu(1)=1$ (normalization),
        \item $A\leq B\Rightarrow \mu(A)\leq \mu(B)$ (monotonicity),
        \item\label{item:MeasureAdditivity} For any countable separated subset $\CQ_0\subset\CQ$,
        \begin{align}\label{eq:MeasureAdditivity}
            \mu\left(\bigvee_{A\in\CQ_0} A\right)
            =
            \sum_{A\in\CQ_0} \mu(A).
        \end{align}
    \end{enumerate}
    A {\em finitely additive probability measure} is defined in the same way, but with \eqref{eq:MeasureAdditivity} only required to hold for finite separated sets $\CQ_0$.

    The set of all finitely additive probability measures on $\CQ$ is denoted $\Prob_0(\CQ)$.
\end{definition}

When $\CQ$ is a logic and $\mu\in\Prob_0(\CQ)$, we have $\mu(A^\perp)+\mu(A)=1$ for all $A\in\CQ$ because $A\vee A^\perp=1$. Furthermore, for logics monotonicity follows from additivity and positivity because for $A\leq B$, we have $B=A\vee(A^\perp\wedge B)$ and hence $\mu(B)=\mu(A)+\mu(A^\perp\wedge B)\geq\mu(A)$. In general involution lattices, both these properties might fail.

The (finitely additive) probability measures on $\CQ=\CP(\Hil)$ are described by Gleason's Theorem. Writing $S(B(\Hil))$ for the state space of the $C^*$-algebra $B(\Hil)$, this is the following statement (see \cite[Ch.~3]{Hamhalter:2003},\cite[Theorem~4.2.9]{Landsman:2017}).

\begin{theorem}\label{theorem:GleasonOrthogonal}
    Let $\Hil$ be a Hilbert space of dimension $\dim\Hil>2$. Then
    \begin{align}\label{eq:muom}
        S(B(\Hil)) \to \Prob_0(\CP(\Hil)),\qquad
        \om\mapsto\mu:=\om|_{\CP(\Hil)}
    \end{align}
    is a bijection, and $\mu=\om|_{\CP(\Hil)}$ is a probability measure (i.e., $\sigma$-additive) if and only if~$\om$ is normal.
\end{theorem}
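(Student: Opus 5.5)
This is Gleason's theorem, which the paper cites from \cite{Hamhalter:2003,Landsman:2017}. My plan follows the standard route: first reduce everything to the frame-function problem, then carry out the analytic core in real dimension three, and finally assemble the state and treat normality.

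\textbf{Easy direction.} Given $\om\in S(B(\Hil))$, the restriction $\mu=\om|_{\CP(\Hil)}$ takes values in $[0,1]$ and satisfies $\mu(0)=0$, $\mu(1)=1$. If $P,Q$ are separated, i.e. $PQ=0$, then $P\vee Q=P+Q$. Linearity of $\om$ therefore gives finite additivity, and positivity gives monotonicity, so $\mu\in\Prob_0(\CP(\Hil))$. The map $\om\mapsto\mu$ is injective: by the spectral theorem, every selfadjoint operator is a norm limit of finite real combinations of projections, and $\om$ is norm continuous. If $\om$ is normal, it is $\sigma$-weakly continuous, and $\bigvee_n P_n=\sum_n P_n$ converges strongly for orthogonal families, which yields $\sigma$-additivity.

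\textbf{Surjectivity.} Given $\mu\in\Prob_0(\CP(\Hil))$, define $f(\xi):=\mu(P_{\Cl\xi})$ for unit vectors $\xi$. For every finite orthonormal family, $f$ is additive, and it sums to $\mu(P)$ over any orthonormal basis of the range of a finite rank projection $P$. The core is Gleason's lemma: a nonnegative frame function on the unit sphere of $\Rl^3$ is the restriction of a quadratic form. This can be reached either via Gleason's original continuity argument on great circles, or via the elementary Cooke--Keane--Moran proof. One applies it to every real three-dimensional subspace. A complex two- or three-dimensional subspace embeds into such real configurations, and this uses $\dim\Hil>2$; from this one gets that $f$ restricted to every finite-dimensional subspace is $\xi\mapsto\langle\xi,T_F\xi\rangle$ for a positive $T_F$. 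These operators are compatible, so they define a positive sesquilinear form $q$ on $\Hil$ with $q(\xi,\xi)=\mu(P_{\Cl\xi})\|\xi\|^2$, bounded by $1$. This gives a positive trace-class candidate only when $\sigma$-additivity holds. In general one instead defines $\om$ on finite rank projections through $q$, shows that $\om(P)=\mu(P)$ for all $P$ using finite additivity together with monotonicity arguments comparing $P$ with finite subprojections, and extends $\om$ linearly and by norm continuity to a state. The cited proofs handle this step, for infinite rank $P$, via a decomposition $\mu=\mu_{\mathrm{normal}}+\mu_{\mathrm{singular}}$, where the singular part vanishes on finite rank projections and is realized by a singular state, for instance through Hamhalter's extension of additive measures to linear functionals via quasi-linearity.

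\textbf{Normality.} If $\mu$ is $\sigma$-additive, then $\sum_n q(e_n,e_n)=1$ for an orthonormal basis, so $q$ is given by a density matrix $\rho$. The singular part vanishes, and $\om=\mathrm{Tr}(\rho\,\cdot\,)$ is normal.

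The main obstacle is the three-dimensional real lemma, namely proving that a bounded frame function is continuous and hence quadratic. The second hard point is extending from finite-dimensional subspaces to a bounded linear state in the non-separable or non-$\sigma$-additive case. Since the paper treats the theorem as known, the proof will cite \cite[Ch.~3]{Hamhalter:2003} and \cite[Theorem~4.2.9]{Landsman:2017} for both steps.
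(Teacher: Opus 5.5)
The paper gives no proof of this statement beyond citing \cite{Hamhalter:2003,Landsman:2017}, so your final appeal to those sources matches it. Your easy direction (finite additivity, monotonicity, injectivity, normal $\Rightarrow$ $\sigma$-additive) is fine, and so is the reduction to frame functions.

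\textbf{First problem: the surjectivity step.} The operator $T$ defined by $q$ is \emph{always} positive and trace class with $\mathrm{Tr}\,T\leq1$. Indeed, for every finite orthonormal family, $\sum_i q(e_i,e_i)=\mu(P)\leq1$, where $P$ is the projection onto its span. So $\sigma$-additivity is not what produces the trace-class candidate. For infinite-rank $P$, monotonicity only gives $\mu(P)\geq\sup_{F\leq P,\,F\text{ finite rank}}\mu(F)=\mathrm{Tr}(TP)$, and equality fails in general. If $\mu$ is the restriction of a singular state, it vanishes on all finite-rank projections, so $T=0$ while $\mu(1)=1$. Hence defining $\om$ through $q$ and extending by norm continuity from finite-rank operators can only ever give $\mathrm{Tr}(T\,\cdot\,)$. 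Your claim that monotonicity arguments yield $\om(P)=\mu(P)$ for all $P$ is therefore false.

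The decomposition you mention next is the right picture. However, the assertion that the remainder $\mu-\mathrm{Tr}(T\,\cdot\,)$ is the restriction of a bounded linear functional is exactly the Christensen--Yeadon solution of the Mackey--Gleason problem. That remainder is a positive finitely additive measure vanishing on finite-rank projections. This is the genuinely hard part of surjectivity onto $\Prob_0(\CP(\Hil))$; it has to be imported, not derived.

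\textbf{Second problem: the normality step.} Your argument uses a countable orthonormal basis, so it only covers separable $\Hil$. In that case it is even simpler than you suggest. The facts $\mathrm{Tr}\,\rho=1$, $\mu(P)\geq\mathrm{Tr}(\rho P)$, $\mu(1-P)\geq\mathrm{Tr}(\rho(1-P))$ and $\mu(P)+\mu(1-P)=1$ together force $\mu=\mathrm{Tr}(\rho\,\cdot\,)$ on all projections, with no singular-part machinery.

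For non-separable $\Hil$, $\sigma$-additivity does not give $\sum_\alpha q(e_\alpha,e_\alpha)=1$ over an uncountable basis; you need complete additivity. You can get it by applying Ulam's theorem to the measure $S\mapsto\mu(\sum_{i\in S}P_i)$ of each orthogonal family $(P_i)$. This works precisely when $\dim\Hil$ is smaller than every real-valued measurable cardinal (Eilers--Horst). For larger dimensions, $X\mapsto\int\langle e_\alpha,Xe_\alpha\rangle\,d\nu(\alpha)$, with $\nu$ a diffuse $\sigma$-additive probability measure on all subsets of the index set, is $\sigma$-additive on projections but not normal. This tacit set-theoretic restriction is also absent from the statement as formulated. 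It is harmless for the separable spaces the paper actually uses.
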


Given an observable $P:\Bor(\Rl^d)\to\CP(\Hil)$ taking values in complex projections, and a probability measure $\mu=\om|_{\CP(\Hil)}$, $\om\in\CB(\Hil)_*$, the map
\begin{align}\label{eq:Pom}
    \mu_P
    :=
    \om\circ P:\Bor(\Rl^d)\to[0,1],\qquad P\mapsto\om(P(A)),
\end{align}
is a probability measure in the usual sense of measure theory. In the context of localization observables, $\om(P(A))$ is interpreted as the probability of detecting a particle described by $\om$ in the Borel set~$A$.

In contrast to this familiar description of the Born rule, the situation for the involution lattice $\CP_{\Rl}(\Hil)$ looks very different.

\begin{theorem}\label{theorem:SymplecticGleasonTheoremHilbert}
    Let $\Hil$ be an infinite-dimensional Hilbert space.
    \begin{enumerate}
        \item\label{item:FiniteMeasureHilbert} Any finitely additive probability measure $\mu$ on $\CP_{\Rl}(\Hil)$ vanishes on all finite-dimensional real subspaces of $\Hil$.
        \item\label{item:NoInfiniteMeasureHilbert} If $\Hil$ is separable, there exists no probability measure $\mu$ on $\CP_{\Rl}(\Hil)$.
    \end{enumerate}
\end{theorem}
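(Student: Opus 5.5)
The plan is to exploit the real subspaces that are separated from themselves, i.e.\ $H\subset H'$, in $\CP_{\Rl}(\Hil)$. The basic building block is a \emph{Lagrangian-type} space. Take an orthonormal family $(e_n)$ in $\Hil$ and form real spans of these vectors. For real scalars, $\imag\langle e_n,e_m\rangle=0$. Hence any two closed real subspaces contained in $\overline{\Rl\text{-span}}\{e_n\}$ are separated from each other, including from themselves.

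For part \ref{item:FiniteMeasureHilbert}, let $K$ be a finite-dimensional real subspace, say $\dim_\Rl K=k$. First I would reduce to one-dimensional spaces. Choose a real orthonormal basis $v_1,\dots,v_k$ of $K$ with respect to $\real\langle\cdot,\cdot\rangle$. These are not necessarily separated, so instead I would use monotonicity and finite additivity directly.

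For a single unit vector $v$, consider the real line $\Rl v$. Extend $v$ to an infinite orthonormal family $v=e_1,e_2,\dots$, which is possible since $\Hil$ is infinite-dimensional. The lines $\Rl e_1,\dots,\Rl e_N$ are pairwise separated, and the same holds for $\Rl e_1,\dots,\Rl e_N$ together with $i\Rl e_1$? No: $\Rl e_1$ and $i\Rl e_1$ are not separated. So I stick to real spans.

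To get vanishing of $\mu(\Rl v)$, I need many mutually separated copies of comparable measure. The key trick is unitary invariance. There is no invariance assumption on $\mu$, though, so instead I use the following. The space $L_N:=\Rl\text{-span}\{e_1,\dots,e_N\}$ equals the join of separated lines, so $\mu(L_N)=\sum_{n\le N}\mu(\Rl e_n)\le 1$. This alone does not force $\mu(\Rl e_1)=0$.

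Better: exploit the freedom in choosing the other lines. For each $N$, the vectors $e_1$ and $\frac{1}{\sqrt2}(e_1\pm e_n)$ and similar rotations lie in the same Lagrangian space. The stronger approach is to find, for each $N$, $N$ distinct finite-dimensional spaces $K_1=K,\dots,K_N$ that are pairwise separated and all contain, or dominate, copies of $K$ with forced equal measure. Equal measure can be forced by finite additivity. Inside the real span $L$ of $\{e_1,\dots,e_{2}\}$, the lines $\Rl e_1$ and $\Rl e_2$ are separated, and so are $\Rl(e_1+e_2)/\sqrt2$ and $\Rl(e_1-e_2)/\sqrt2$. Their joins coincide and give $L$. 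This is the frame-function situation for the real symplectic-isotropic space, where every real orthonormal basis of a Lagrangian $L$ gives a separated decomposition.

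Thus $\mu$ restricted to lines of a real Hilbert space $L$ of dimension at least $3$ is a frame function. By real Gleason, it is given by a positive trace-class form $\rho_L$ on $L$ with $\operatorname{tr}\rho_L=\mu(L)$.

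Now the contradiction comes from the symplectic freedom. The line $\Rl v$ lies in many Lagrangian spaces. In particular, $v\in L$ and $v\in L^\theta:=\Rl\text{-span}\{v,e^{i\theta}e_2,\dots\}$. I would then compare the values on the $2$-dimensional space $\Rl v+\Rl w$ with $w=\cos\theta\, e_2+\sin\theta\, ie_2$, which is Lagrangian. Varying $\theta$, and using that $\Rl e_2$ and $\Rl ie_2$ are \emph{not} separated, one gets that $\mu$ on $\Rl e_2$ and $\Rl ie_2$ is unconstrained jointly. However, $\mu(\Rl v)=\mu(\Rl v+\Rl w)-\mu(\Rl w)$ for every $\theta$.

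The cleanest route, which I would pursue, is this. $K$ is contained in a finite-dimensional complex space $K^\Cl$. Choose infinitely many pairwise orthogonal complex subspaces $\CK_n$ unitarily equivalent to it, and Lagrangians mixing them, so that $\Rl v$ appears as one member of arbitrarily many separated decompositions of spaces of measure at most $1$ with all members forced, through frame-function continuity, to have measure at least $\mu(\Rl v)$. This forces $\mu(\Rl v)\le 1/N$ for every $N$. The monotonicity plus finite additivity over a real basis of a Lagrangian containing $K$ would then give $\mu(K)=0$ whenever $K$ is isotropic.

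For general $K$, I would write $K\subset K_1\vee K_2$ with $K_1,K_2$ finite-dimensional isotropic spaces. The join need not be separated, so this step, controlling non-isotropic finite-dimensional spaces, is the main obstacle. I would try showing $\mu(\Cl v)=0$ via the complex line being the join of separated spaces $\Rl v$ and $\Rl iv\cap\dots$. That fails, so instead I would embed $K$ into $L\oplus iL$-type decompositions with $L$ Lagrangian and argue via monotonicity and a separable-dimension count.

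For part \ref{item:NoInfiniteMeasureHilbert}, take a real orthonormal basis $(e_n)$ of the complex Hilbert space, so the closed real span $L$ is self-separated with $L=L'$. Then the spaces $L$ and $iL$ satisfy $iL=L^{\perp_\Rl}$, and each is the separated countable join of the lines $\Rl e_n$, respectively $\Rl ie_n$. By $\sigma$-additivity and part \ref{item:FiniteMeasureHilbert}, $\mu(L)=\sum_n\mu(\Rl e_n)=0$. Similarly, $\mu(iL)=0$. Finally, $\Hil=L\vee iL$, but this join is not separated. So I would instead write $\Hil$ as a countable separated join of finite-dimensional spaces.

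Concretely, the real spaces $\Rl e_n$ and $\Rl ie_m$ are separated for $n\neq m$. Grouping them as $\Cl e_1$, $\Rl e_2$, $\Rl ie_3$, and so on, does not cover $\Hil$. The approach I would use is a separable enumeration of lines $\Rl f_k$ with $f_k$ alternating between $e_n$ and $ie_{n'}$, $n\ne n'$, arranged via a pairing so that the family is separated with join $\Hil$. Then $1=\mu(\Hil)=\sum_k\mu(\Rl f_k)=0$, which is a contradiction.
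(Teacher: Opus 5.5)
\textbf{Part (a) has a genuine gap, and it is the one you flag yourself.} Nothing in your argument controls finite-dimensional subspaces that are not isotropic. The basic example is a complex line $\Cl v$, which is a symplectic plane since $\imag\langle v,iv\rangle=\|v\|^2\neq0$.

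Monotonicity cannot help, because every space containing $\Cl v$ is again non-isotropic. Writing $\Cl v=\Rl v\vee\Rl iv$ is not a separated join. The proposed ``$L\oplus iL$-type decomposition with a dimension count'' does not produce any separated decomposition either.

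Your isotropic step is also not established. A finitely additive measure on an infinite-dimensional $L$ need not come from a trace-class form. Without any invariance of $\mu$, nothing forces the other members of your decompositions to have measure at least $\mu(\Rl v)$.

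That step is elementary and needs no real Gleason theorem. In an isotropic plane $P$, any three distinct lines are pairwise separated, and any two of them already join to $P$. So the three lines have a common measure $a$, and $\mu(P)=2a=3a$, whence $a=0$. Consequently $\mu(H)=\mu(K)$ whenever $H=Z(H)\oplus_\sigma K$ with $Z(H)=H\cap H'$.

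The missing idea, which is the heart of the paper's proof, is to transport the value of $\mu$ between symplectic planes:
\begin{itemize}
\item If two symplectic planes $K,H$ meet in a line, then $L=K+H$ is three-dimensional with one-dimensional center. Moreover $L=Z(L)\oplus_\sigma K=Z(L)\oplus_\sigma H$, so $\mu(K)=\mu(H)$.
\item Any two symplectic planes are linked by a short chain of such planes, so $\mu$ takes a constant value $\mu_0$ on all symplectic planes.
\item For an orthonormal family, $\Cl e_1,\dots,\Cl e_n$ are pairwise separated symplectic planes. Hence $n\mu_0\le 1$ for all $n$, so $\mu_0=0$.
\item Every finite-dimensional $H=Z(H)\oplus_\sigma F_1\oplus_\sigma\cdots\oplus_\sigma F_n$ is therefore null.
\end{itemize}

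\textbf{Part (b) fails as planned, because no separated family of real lines can have join $\Hil$.} If the lines $\Rl f_k$ are pairwise separated, then $\imag\langle f_j,f_k\rangle=0$ for all $j,k$, including $j=k$. So their closed real span $M$ satisfies $M\subset M'$, whereas $\Hil'=\{0\}$.

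The separated decomposition that works uses two-dimensional pieces: $\Hil=\bigvee_n\Cl e_n$ for an orthonormal basis $(e_n)$. The $\Cl e_n$ are pairwise separated because they are complex orthogonal. Then $\sigma$-additivity gives $1=\sum_n\mu(\Cl e_n)$. The contradiction requires exactly $\mu(\Cl e_n)=0$, which is the symplectic-plane case of (a) that your proposal leaves open. This is how the paper argues.
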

\begin{proof}
    \ref{item:FiniteMeasureHilbert} is proven in the more general context of symplectic spaces in Appendix~\ref{section:SymplecticGleason} (Theorem~\ref{theorem:SymplecticGleasonTheorem}), where also the finite-dimensional case is treated for completeness.

    \ref{item:NoInfiniteMeasureHilbert} As $\Hil$ is separable, it has a countable orthonormal basis $(e_n)_{n\in\Nl}$. The closed subspaces $H_n:=\Cl e_n$ are pairwise orthogonal, so in particular $H_n\subset H_m'$ for $n\neq m$. Furthermore, $\bigvee_nH_n=\Hil$. As any probability measure $\mu$ on $\CP_{\Rl}(\Hil)$ satisfies $\mu(H_n)=0$ for all $n$ by part~\ref{item:FiniteMeasureHilbert}, $\sigma$-additivity and normalization of $\mu$ imply the contradiction
    \begin{align*}
        1=\mu(\Hil)=\mu(\bigvee_nH_n)=\sum_n\mu(H_n)=0.
    \end{align*}
    Hence $\mu$ does not exist.
\end{proof}

In comparison to the ``orthogonal'' Gleason Theorem~\ref{theorem:GleasonOrthogonal} for $\CP(\Hil)$, Theorem~\ref{theorem:SymplecticGleasonTheoremHilbert} has the form of a No-Go result: Disregarding the case of non-separable Hilbert spaces, or the case of finitely additive measures on separable ones, there exist no probability measures on $\CP_{\Rl}(\Hil)$. So there is no universal way to obtain probability measures on general observables $E:\CL\to\CP_{\Rl}(\Hil)$ taking values in the involution lattice~$\CP_{\Rl}(\Hil)$.

Our arguments so far do not rule out that for specific observables $E:\CL\to\CP_{\Rl}(\Hil)$, there might exist probability measures on the involution sublattice $E(\CL)\subset\CP_{\Rl}(\Hil)$. However, also probability measures on these smaller lattices are severely restricted once standard subspaces come into play. To explain this, we compare the supremum $E_1\vee E_2$ of two projections with their sum $E_1+E_2$, resembling the finite additivity property of probability measures. So we have reason to consider operators of the form
\begin{align}\label{eq:X}
    X(E_1,E_2)
    :=E_1+E_2-E_1\vee E_2,\qquad
    E_1,E_2\in\CP_{\Rl}(\Hil).
\end{align}
Note that in case $E\wedge E'=0$, the operator $X(E,E')=E'-E^\perpr$ measures the mismatch between the symplectic and real orthogonal complement.

\begin{proposition}\label{proposition:X}
    Let $H_1,H_2\in\CP_{\Rl}(\Hil)$ be closed real subspaces and $E_k:=E_{H_k}$, $k=1,2$, the projections onto them.
    \begin{enumerate}
        \item\label{item:Xnorm} $X(E_1,E_2)$ is a real selfadjoint contraction with norm $\|X(E_1,E_2)\|=\|E_1E_2\|$.

        \item\label{item:Xortho} $X(E_1,E_2)=0$ if and only if $E_1\leq E_2^\perpr$.

        \item\label{item:Xstandard} If $E_1$ is standard and $E_1\leq E_2'$, then $X(E_1,E_2)=0$ if and only if $E_2=0$.
    \end{enumerate}
\end{proposition}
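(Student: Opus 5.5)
The plan is to work entirely in the real Hilbert space $(\Hil,\langle\,\cdot\,,\,\cdot\,\rangle_{\Rl})$. In this space $E_1,E_2$ are ordinary orthogonal projections, and by \eqref{eq:JoinInPRH} the join $E_1\vee E_2$ is the real orthogonal projection onto $\overline{H_1+H_2}$, which is the usual join of two projections in the logic $\CP(\Hil_{\Rl})$. The symplectic structure plays no role in parts \ref{item:Xnorm} and \ref{item:Xortho}; it only enters in part \ref{item:Xstandard}.

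For \ref{item:Xnorm} I will use Halmos' two-projection theory in the real Hilbert space $\Hil_{\Rl}$. It decomposes $\Hil_{\Rl}$ into the four subspaces $H_1\cap H_2$, $H_1\cap H_2^\perpr$, $H_1^\perpr\cap H_2$, $H_1^\perpr\cap H_2^\perpr$, plus a generic part $\CK\oplus\CK$. On the generic part,
\begin{align*}
    E_1=\begin{pmatrix}1&0\\0&0\end{pmatrix},\qquad
    E_2=\begin{pmatrix}C^2&CS\\CS&S^2\end{pmatrix},
\end{align*}
with commuting positive contractions $C,S$ satisfying $C^2+S^2=1$ and $\ker C=\ker S=\{0\}$. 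The join $E_1\vee E_2$ is the identity on the generic part, and on the four special subspaces it equals $1,1,1,0$ respectively.

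From this, $X(E_1,E_2)$ equals $1$ on $H_1\cap H_2$ and $0$ on the other three special subspaces. On the generic part it equals $\begin{pmatrix}C^2&CS\\CS&-C^2\end{pmatrix}$. This block is selfadjoint, and its square is $\operatorname{diag}(C^2,C^2)$ because $C^4+C^2S^2=C^2$, so its norm is $\|C\|$. Hence $X$ is a real selfadjoint contraction. Its norm is $1$ if $H_1\cap H_2\neq0$, and $\|C\|$ otherwise (or $0$ if everything else vanishes). On the other hand, $\|E_1E_2\|$ equals $1$ on $H_1\cap H_2$, vanishes on the other special parts, and equals $\|(C^2\ \ CS)\|=\|C\|$ on the generic part. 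This gives $\|X\|=\|E_1E_2\|$. The main bookkeeping risk is keeping track of the degenerate summands, so I would treat each summand separately.

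Part \ref{item:Xortho} then follows directly. We have $X=0$ if and only if $\|E_1E_2\|=0$, that is $E_1E_2=0$, which means $H_1\perp_{\Rl}H_2$, i.e. $E_1\leq E_2^\perpr$.

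For \ref{item:Xstandard}, the converse direction is trivial: if $E_2=0$, then $X=E_1-E_1=0$. For the forward direction, the assumption $E_1\leq E_2'$ means $H_2\subset H_1'$. By \ref{item:Xortho}, $X=0$ gives $H_2\subset H_1^\perpr=-iH_1'=iH_1'$. Therefore
\begin{align*}
    H_2\subset H_1'\cap iH_1'=(H_1')_{\Cl}=(H_1^{\Cl})^{\perp_{\Cl}}=\{0\},
\end{align*}
using the identities for $H_{\Cl}$ and $H^{\Cl}$ stated before Proposition~\ref{prop:standardsubspacesarenecessary} and the cyclicity of $H_1$. Only cyclicity of $E_1$ is actually needed here. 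Hence $E_2=0$.
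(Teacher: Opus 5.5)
Your proof is correct, and parts (b) and (c) match the paper's argument. In (b) you read the claim off from $\|X\|=\|E_1E_2\|$. In (c) you turn real plus symplectic orthogonality of $H_2$ to $H_1$ into complex orthogonality to the dense space $H_1+iH_1$; your remark that only cyclicity of $E_1$ is used applies to the paper's argument as well.

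For (a), both arguments rest on the Halmos decomposition, but you handle the generic part differently. The paper splits off only the four special subspaces. On the remainder $\CK$, where $F_1\wedge F_2=0$ and $F_1\vee F_2=1$, it computes $X^2=E_1E_2E_1\oplus E_2E_1E_2$ with respect to the non-orthogonal (and in general non-closed) sum $H_1+H_2$. It then identifies $\|X\|^2$ with the spectral radius of this block-diagonal operator. That route needs only the coarse decomposition, but the spectral-radius step across a non-orthogonal sum requires a small additional justification that the paper leaves implicit.

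You instead use the full canonical form. There $X^2=\operatorname{diag}(C^2,C^2)$ and $(E_1E_2)(E_1E_2)^*=C^2\oplus 0$ are block-diagonal with respect to an orthogonal decomposition, so both norms equal $\|C\|$ by direct computation. Since $C^2$ is just $E_1E_2E_1$ on the generic part of $H_1$, the two computations express the same identity. Yours costs the heavier canonical-form theorem (which holds verbatim in real Hilbert spaces), but it makes the norm equality immediate and avoids the non-orthogonal bookkeeping.
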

\begin{proof}
    We abbreviate $X:=X(E_1,E_2)$ in this proof.

    \ref{item:Xnorm} We first consider the special case in which $E_1\wedge E_2=0$ and $E_1\vee E_2=1$ holds, i.e. $H_1$ and $H_2$ intersect trivially and span a dense subspace of $\Hil$. Then the norm of the bounded operator $X$ coincides with the norm of its restriction to $H_1+H_2$. For $h_1\in H_1$, $h_2\in H_2$, we find $X(h_1+h_2)=E_1h_2+E_2h_1$ and
    \begin{align}\label{eq:X2}
        X^2(h_1+h_2)=E_1E_2E_1h_1+E_2E_1E_2h_2,
    \end{align}
    i.e. $X^2$ can be identified with $\begin{pmatrix}E_1E_2E_1 & 0\\0&E_2E_1E_2\end{pmatrix}$ on the (non-orthogonal) direct sum $H_1+H_2$. As $X^2$ is selfadjoint, its norm coincides with its spectral radius $r(X^2)$, which is
    \begin{align}
        \|X\|^2
        &=
        r(X^2)=r\begin{pmatrix}E_1E_2E_1 & 0\\0&E_2E_1E_2\end{pmatrix}
        \\
        &=
        \max\{r(E_1E_2E_1),r(E_2E_1E_2)\}
        =
        \max\{\|E_1E_2E_1\|,\|E_2E_1E_2\|\}
        =
        \|E_1E_2\|^2.
    \end{align}
    Hence $\|X\|=\|E_1E_2\|$.

    For general position of $H_1,H_2$, we use the Halmos decomposition \cite{Halmos:1969} of $\Hil$ w.r.t. $E_1,E_2$,
    \begin{align*}
        \Hil
        =
        (H_1\wedge H_2)\oplus_{\Rl}
        (H_1\wedge H_2^\perpr)\oplus_{\Rl}
        (H_1^\perpr\wedge H_2)\oplus_{\Rl}
        (H_1\vee H_2)^\perpr\oplus_{\Rl}
        \CK,
    \end{align*}
    where $\CK$ is defined to be the real orthogonal complement of the direct sum of the first four terms. In this decomposition, we have
    \begin{align*}
        E_1
        &=
        1\oplus 1\oplus 0\oplus 0\oplus F_1
        ,\\
        E_2
        &=
        1\oplus 0\oplus 1\oplus 0\oplus F_2
        ,\\
        E_1\vee E_2
        &=
        1\oplus 1\oplus 1\oplus 0\oplus 1
        ,\\
        X
        &=
        1\oplus 0\oplus 0\oplus 0\oplus (F_1+F_2-1),
    \end{align*}
    with the real orthogonal projections $F_i:=E_i|_{\CK}\in\CP_{\Rl}(\CK)$. Since $F_1\vee F_2=1_{\CK}$ and $F_1\wedge F_2=0$, the first part of this proof implies $\|F_1+F_2-1\|=\|F_1F_2\|$. From the Halmos decomposition it is clear that $\|X\|=1$ in case $H_1\cap H_2\neq\{0\}$ and $\|X\|=\|F_1F_2\|$ otherwise, which coincides with $\|E_1E_2\|$ in both cases.

    \ref{item:Xortho} follows directly from \ref{item:Xnorm} and $E_1E_2=0\Leftrightarrow E_1\leq E_2^\perpr$.

    \ref{item:Xstandard} In \ref{item:Xortho} we saw that $X=0$ implies $E_2\leq E_1^\perpr$. If also $E_2\leq E_1'$, then the range of $E_2$ is complex orthogonal to the complex span of the range of $E_1$. In case that $E_1$ is standard, this complex span is dense, so $E_2=0$ follows. The other direction is obvious.
\end{proof}

Given any normal state $\om$ on $\CB(\Hil)$, the real linear analogue of the probability measure \eqref{eq:muom} from Gleason's Theorem,
\begin{align}
    \real\om:\CP_{\Rl}(\Hil)\to[0,1],\qquad E\mapsto\real\om(E),
\end{align}
is normalized and monotone but fails to be $\sigma$-additive (at least for separable $\Hil$) by Theorem~\ref{theorem:SymplecticGleasonTheoremHilbert}~\ref{item:NoInfiniteMeasureHilbert}. Given an observable $E:\CL\to\CP_{\Rl}(\Hil)$ mapping a logic $\CL$ into the involution lattice $\CP_{\Rl}(\Hil)$ such that there exist $A,B\in\CL$, $A\leq B^\perp$, with $E(A)$ standard and $E(B)\neq0$ (which is the case for our localization observables), we have $X(E(A),E(B))\neq0$ by Proposition~\ref{proposition:X}. Hence there exists $\om\in\CB(\Hil)_*$ such that
\begin{align}\label{eq:realom}
    \real\om(E(A\vee B)) \neq \real\om(E(A))+\real\om(E(B)),
\end{align}
i.e. $\real\om|_{E(\CL)}$ is not even finitely additive.

\bigskip

This discussion shows that no natural probability measures on observables mapping into $\CP_{\Rl}(\Hil)$ exist. However, despite this failure of strict additivity, it holds approximately in relevant situations. To show this, we will derive a cluster theorem for standard subspaces which can be proven in a similar manner as Fredenhagen's cluster theorem in quantum field theory. For a concise presentation, we extract the following fact from the proof of the theorem in \cite{Fredenhagen:1985}.

\begin{lemma}\label{lemma:KlausCluster}
    Let $m,\delta,C>0$ and $\Cl_\delta:=\Cl\setminus((-\infty,-\delta]\cup[\delta,\infty))$. Then any analytic function $f:\Cl_\delta\to\Cl$ that satisfies $|f(z)|\leq Ce^{-m|\imag z|}$ for all $z\in\Cl_\delta$ satisfies also $|f(0)|\leq C e^{-m\delta}$.
\end{lemma}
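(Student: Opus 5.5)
Since the bound $Ce^{-m|\imag z|}$ gives nothing near the real axis, I will conformally map $\Cl_\delta$ to a strip and apply a Phragmén–Lindelöf / three-lines type estimate in the imaginary direction. Concretely, $\Cl_\delta$ is the plane slit along $(-\infty,-\delta]\cup[\delta,\infty)$. The map $z\mapsto w=\arcsin(z/\delta)$, or equivalently $z=\delta\sin w$, sends the vertical strip $S:=\{w\col|\real w|<\pi/2\}$ biholomorphically onto $\Cl_\delta$, with $w=0\mapsto z=0$. Writing $w=u+iv$, we have $\imag z=\delta\cos u\sinh v$. Hence $g(w):=f(\delta\sin w)$ is analytic on $S$ and satisfies
\begin{align}
|g(u+iv)|\leq C\exp(-m\delta\cos u\,|\sinh v|).
\end{align}
The goal becomes the bound $|g(0)|\leq Ce^{-m\delta}$.

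To get it, I consider the auxiliary function $h(w):=g(w)\exp(-im\delta\sin w)$ on the half-strip $v>0$ (and symmetrically $\exp(+im\delta\sin w)$ for $v<0$). We have $|\exp(-im\delta\sin w)|=\exp(m\delta\cos u\sinh v)$, which cancels the decay exactly, so $|h|\leq C$ on the upper half-strip. The same holds for the analogous function on the lower half-strip. Evaluating at $w=0$ only gives $|g(0)|\leq C$, however, so this naive cancellation must be refined. I will instead use $\exp(\pm im\delta(\sin w -1))$-type factors, or more cleanly work directly in the $z$-plane. On the upper half plane, $F_+(z):=f(z)e^{-imz}$ satisfies $|F_+(z)|=|f(z)|e^{m\imag z}\leq C$ there, and it is analytic on $\Cl_\delta\cap\{\imag z>0\}$ and across the gap $(-\delta,\delta)$ into the lower half plane. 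The idea, following Fredenhagen, is that the bound $|f(x)|$ on the gap cannot exceed $Ce^{-m\delta}$ because of the interplay of $F_+$ and $F_-(z):=f(z)e^{imz}$.

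The cleanest route I would try is harmonic measure. The function $\log|f|$ is subharmonic on $\Cl_\delta$ and bounded above by $\log C-m|\imag z|$. Let $\Phi$ be the harmonic function $\Phi(z):=-m\,\imag\sqrt{\delta^2-z^2}$ on $\Cl_\delta$, with the branch of $\sqrt{\delta^2-z^2}$ chosen analytic on $\Cl_\delta$, equal to $\delta$ at $0$, and with $\real\sqrt{\delta^2-z^2}\ge0$ (i.e. $\imag$ of $\sqrt{z^2-\delta^2}$ with sign matching $\imag z$). Then $\Phi(0)$ should be replaced by the harmonic function $\real(-m\sqrt{\ldots})$ chosen so that $\Phi(0)=-m\delta$. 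Two properties need checking: on the slits $\Phi=0$ (the boundary values of $\sqrt{\delta^2-x^2}$ are purely imaginary for $|x|>\delta$), and $\Phi(z)\geq -m|\imag z|$ wherever it matters. Concretely, $\Phi(z)=-m\,\real\sqrt{\delta^2-z^2}$ and $\real\sqrt{\delta^2-z^2}\le$ ... I would establish the comparison $\log|f|-\log C\le -m\real\sqrt{\delta^2-z^2}\cdot$ via Phragmén–Lindelöf on $\Cl_\delta$, using that $u:=\log|f|-\log C+m\real\sqrt{\delta^2-z^2}$ is subharmonic and that $\real\sqrt{\delta^2-z^2}-|\imag z|$ is bounded: for large $|z|$, $\sqrt{\delta^2-z^2}\approx \mp iz$, so the real part is about $|\imag z|$. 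This makes $u$ bounded above on $\Cl_\delta$ by $m\cdot O(1)$, and $\limsup u\leq 0$ at the slits, since there $\real\sqrt{\cdot}=0$ and $|f|\le C$ near real points by continuity of the bound. Phragmén–Lindelöf for bounded-above subharmonic functions on this domain, whose boundary has positive capacity, then gives $u\leq0$. At $z=0$ this yields $|f(0)|\leq Ce^{-m\delta}$.

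The main obstacle is the boundary and infinity behaviour. I must verify carefully that $\real\sqrt{\delta^2-z^2}-|\imag z|$ is bounded above, which follows from $|\sqrt{a}-\sqrt{b}|$-type estimates, and that $f$ need not extend to the slits. This is handled by applying the maximum principle on the exhaustions $\{z\col d(z,\text{slits})>\eps,\ |z|<R\}$ with an extra $\eps'\log|z|$-free argument, since boundedness from above suffices for the extended maximum principle on a domain with nonpolar boundary.
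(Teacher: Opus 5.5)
Your proposal is correct once you delete the exploratory dead ends: the half-strip multipliers $e^{\mp im\delta\sin w}$ and the first, $\imag$-based definition of $\Phi$. The paper gives no proof of its own here; it extracts the statement from Fredenhagen's cluster-theorem argument. Your argument compares the subharmonic function $\log|f|$ with the harmonic majorant $\log C-m\real\sqrt{\delta^2-z^2}$ and closes with the extended maximum principle, and that gives a complete, self-contained proof.

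In a write-up, fix the following points:
\begin{enumerate}
\item Say explicitly that the principal branch of $\sqrt{\delta^2-z^2}$ is analytic on $\Cl_\delta$, because $\delta^2-z^2\in(-\infty,0]$ exactly on the slits.
\item State the estimate you actually need:
\[
|\imag z|\leq\real\sqrt{\delta^2-z^2}\leq|\imag z|+\delta,\qquad z\in\Cl_\delta .
\]
As written, your requirement $\Phi\geq-m|\imag z|$ points the wrong way; in fact $\Phi<-m|\imag z|$ everywhere. What you use, and what holds, is $\Phi\geq-m|\imag z|-m\delta$.
\item The exhaustion sketched in your last paragraph does not close the argument by itself, since on $|z|=R$ you only know $u\leq m\delta$. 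Either cite the extended maximum principle with the polar exceptional set $\{\infty\}$, which is legitimate, or argue in the strip as below.
\end{enumerate}

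The strip version is the most elementary, and it is your first idea with a different multiplier. Under $z=\delta\sin w$ one has $\sqrt{\delta^2-z^2}=\delta\cos w$ on $S=\{|\real w|<\pi/2\}$. So the function $G(w):=f(\delta\sin w)\,e^{m\delta\cos w}$ satisfies
\[
|G(u+iv)|\leq C\exp\bigl(m\delta\cos u\,(\cosh v-|\sinh v|)\bigr)=C\exp\bigl(m\delta\cos u\,e^{-|v|}\bigr).
\]
Hence $G$ is bounded by $Ce^{m\delta}$, and its boundary $\limsup$ is at most $C$ because $\cos u\to0$ at $\partial S$. The standard Phragm\'en--Lindel\"of argument with $G(w)e^{\eps w^2}$ on large rectangles then gives $|G|\leq C$, that is, $|f(0)|e^{m\delta}=|G(0)|\leq C$. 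Replacing $\sinh v$ by $\cosh v$ is exactly what produces the gain $e^{m\delta}$ at $w=0$, which your $e^{\mp im\delta\sin w}$ could not.
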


\begin{theorem}\label{thm:MassGapClustering}
    Let $E,F\in\CP_{\Rl}(\Hil)$ and let $V(t)=e^{itQ}$ be a unitary one-parameter group with generator $Q$ satisfying $Q\geq m\cdot1$ for some $m>0$. If there exists $\delta>0$ such that $E\leq V(t)F'V(t)^*$ for all $t\in[-\delta,\delta]$, then
    \begin{enumerate}
        \item\label{item:NormBound1} $\|EF\| \leq e^{-m\delta}$,
        \item\label{item:TrivialIntersection} $E\wedge F=0$,
        \item\label{item:NormBound2} $\|E+F-E\vee F\|\leq e^{-m\delta}$.
    \end{enumerate}
\end{theorem}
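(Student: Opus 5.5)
For unit vectors $\xi,\psi\in\Hil$ I would consider the matrix-element function
\begin{align}
    g(t):=\langle E\xi,V(t)F\psi\rangle,\qquad t\in\Rl.
\end{align}
The hypothesis $E\leq V(t)F'V(t)^*$ says that $E\Hil\subset V(t)(F\Hil)'$. Equivalently, $\imag\langle E\xi,V(t)F\psi\rangle=0$ for $|t|\leq\delta$. So $g$ is real on $[-\delta,\delta]$.

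Since $Q\geq m$, the function $g$ extends analytically to the upper half plane as $g(z)=\langle E\xi,e^{izQ}F\psi\rangle$, with $|g(z)|\leq e^{-m\imag z}$ there. The plan is to use Schwarz reflection, applied to $h(z):=g(z)-\overline{g(\bar z)}$. This is analytic in $\Cl_+\cup\Cl_-$ and continuous up to the real axis. By reality of $g$ on $(-\delta,\delta)$ it vanishes there, so it glues to an analytic function on $\Cl_\delta$.

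However, $h$ is identically zero, and that is not what we need. Instead I would reflect directly. Define $G(z):=g(z)$ for $\imag z\geq0$ and $G(z):=\overline{g(\bar z)}$ for $\imag z<0$. Because $g$ is real on $(-\delta,\delta)$, Schwarz reflection (via Morera/edge-of-the-wedge in one variable) makes $G$ analytic on $\Cl_\delta$, with $|G(z)|\leq e^{-m|\imag z|}$ everywhere. Lemma~\ref{lemma:KlausCluster} with $C=1$ then gives $|g(0)|=|\langle E\xi,F\psi\rangle|\leq e^{-m\delta}$. Taking the supremum over unit $\xi,\psi$ yields $\|EF\|\leq e^{-m\delta}$, which is \ref{item:NormBound1}.

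One point needs care: $E\xi$ and $F\psi$ are real-projected vectors, but the bound is for the operator norm of the real-linear operator $EF$. This is fine because $\|EF\|=\sup\real\langle E\xi,F\psi\rangle\leq\sup|\langle E\xi,F\psi\rangle|$.

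For \ref{item:TrivialIntersection}, I would take $h\in E\Hil\cap F\Hil$. Then $\|h\|=\|EFh\|\leq e^{-m\delta}\|h\|$ with $e^{-m\delta}<1$, so $h=0$.

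For \ref{item:NormBound2}, Proposition~\ref{proposition:X}~\ref{item:Xnorm} gives $\|E+F-E\vee F\|=\|X(E,F)\|=\|EF\|$, and \ref{item:NormBound1} finishes.

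The main obstacle is the justification of the reflection step. One must check that $G$ is continuous across $(-\delta,\delta)$ and analytic there. The continuity of $g$ on the closed upper half plane follows from strong continuity and the spectral theorem, since $e^{izQ}$ is bounded and strongly continuous for $\imag z\geq0$. Analyticity across the interval then follows from the standard Schwarz reflection principle. The bound $|G|\leq e^{-m|\imag z|}$ follows from $\|e^{izQ}\|\leq e^{-m\imag z}$.
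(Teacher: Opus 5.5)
Your proof is correct and follows essentially the same route as the paper: analytic continuation of $\langle E\xi,V(z)F\psi\rangle$ to the upper half plane, Schwarz reflection across $(-\delta,\delta)$ using the reality coming from $E\leq V(t)F'V(t)^*$, Lemma~\ref{lemma:KlausCluster}, and Proposition~\ref{proposition:X} for the last item. The differences are cosmetic. You prove $E\wedge F=0$ directly from $\|h\|=\|EFh\|\leq e^{-m\delta}\|h\|$ rather than via $\slim_n(EF)^n=0$. You use the inequality $\real\langle E\xi,F\psi\rangle\leq|\langle E\xi,F\psi\rangle|$ rather than the reality of $\langle E\xi,F\psi\rangle$. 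Finally, the discarded detour through $g(z)-\overline{g(\bar z)}$ should simply be deleted, since $g$ is not even defined in the lower half plane.
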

\begin{proof}
    \ref{item:NormBound1} Let $\psi,\xi\in\Hil$ and consider the function
    \begin{align*}
        f:\Rl\to\Cl,\qquad
        f(t)
        &:=
        \langle E\psi,e^{itQ}F\xi\rangle
        .
    \end{align*}
    Since $Q$ is positive, $f$ has a bounded analytic continuation to the upper half plane. Furthermore, our assumption $E\leq e^{itQ}F'e^{-itQ}$ for $|t|\leq\delta$ implies $e^{itQ}F\xi\in(E\Hil)'$, namely $f(t)$ is real for $t\in[-\delta,\delta]$. Hence the Schwarz reflection principle can be used to conclude that~$f$ has an analytic continuation to the cut plane $\Cl_\delta:=\Cl\setminus((-\infty,-\delta]\cup[\delta,\infty))$, and the spectral assumption on $Q$ implies the bound $|f(z)|\leq C e^{-m|\imag z|}$, $z\in\Cl_\delta$, where $C:=\|E\psi\|\|F\xi\|$. We are therefore in the situation of Lemma~\ref{lemma:KlausCluster}, which yields
    \begin{align*}
        |f(0)|
        =
        |\langle E\psi,F\xi\rangle|
        \leq
        e^{-m\delta}\|E\psi\| \|F\xi\|.
    \end{align*}
    To derive the claimed norm bound from this, we estimate
    \begin{align*}
        \|EF\|
        &=
        \sup_{\|\psi\|,\|\xi\|\leq1}|\real\langle \psi,EF\xi\rangle|
        \\
        &=
        \sup_{\|\psi\|,\|\xi\|\leq1}|\real\langle E\psi,F\xi\rangle|
        \\
        &=
        \sup_{\|\psi\|,\|\xi\|\leq1}|\langle E\psi,F\xi\rangle|\qquad\qquad\text{(because\;} E\leq F'\text{)}
        \\
        &\leq
        e^{-m\delta}\sup_{\|\psi\|,\|\xi\|\leq1}\|E\psi\| \|F\xi\|
        \\
        &=
        e^{-m\delta}.
    \end{align*}

    \ref{item:TrivialIntersection} In part \ref{item:NormBound1}, we have in particular shown $\|EF\|<1$, so that $(EF)^n\to0$ in operator norm as $n\to\infty$. This implies $E\wedge F=\slim_n(EF)^n=0$.

    \ref{item:NormBound2} follows immediately from \ref{item:NormBound1} and Proposition~\ref{proposition:X}~\ref{item:Xnorm}.
\end{proof}

Note that for $F=E'$, this theorem says that under the stated assumptions, $E'$ almost coincides with $E^\perpr$ in the sense that $X(E,E')=E'-E^\perpr$ has small norm.

We summarize the properties of the maps \eqref{eq:realom} as fuzzy measures (see, for example, \cite{Denneberg:1994}) in the case of a vector state. The generalization to a normal state $\om\in B(\Hil)_*$ is straightforward.

\begin{corollary}
    Let $U$ be a positive energy representation of the Poincaré group on a Hilbert space $\Hil$ with mass gap\footnote{That is, the Hamiltonian $Q$ satisfies $\sigma(Q)\geq m$.} $m>0$, and $E_U\circ\cc:\Bor(\Rl^d)\to\CP_{\Rl}(\Hil)$ the corresponding spatial localization observable. Then, for any a unit vector $\psi\in\Hil$, the map
    \begin{align*}
        \mu_{U,\psi} : \Bor(\Rl^d)\to\Rl,\qquad A\mapsto\|E_U(\cc(A))\psi\|^2
    \end{align*}
    is a fuzzy approximately additive measure: It maps into $[0,1]$, is normalized as $\mu_{U,\psi}(\emptyset)=0$, $\mu_{U,\psi}(\Rl^d)=1$, is monotone, and approximately additive in the sense
    \begin{align}
        |\mu_{U,\psi}(A\vee B)-\mu_{U,\psi}(A)-\mu_{U,\psi}(B)|
        \leq
        e^{-md(A,B)}.
    \end{align}
\end{corollary}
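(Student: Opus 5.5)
Write $E(A):=E_U(\cc(A))$; by Proposition~\ref{prop:SpacetimeObservableInduceCausalSpatialObservables} this is a translationally covariant causal spatial localization observable. The elementary properties come first. Since $E(A)$ is a real orthogonal projection and $\|\psi\|=1$, we have $\mu_{U,\psi}(A)=\|E(A)\psi\|^2=\langle\psi,E(A)\psi\rangle_{\Rl}\in[0,1]$. Normalization follows from $E(\emptyset)=0$ and $E(\Rl^d)=1$. Monotonicity follows from Lemma~\ref{lemma:Observables}~\ref{item:fmonotone}: $E(A)\le E(B)$ means $E(A)=E(A)E(B)E(A)$ in operator order, so $\|E(A)\psi\|\le\|E(B)\psi\|$.

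For approximate additivity, note that $\Bor(\Rl^d)$ is Boolean, so $E(A\vee B)=E(A)\vee E(B)$ by Lemma~\ref{lemma:Observables}~\ref{item:fpreservesjoins}. The quantity to bound is therefore $|\langle\psi,X\psi\rangle_{\Rl}|$ with $X=X(E(A),E(B))$, which is at most $\|X\|$. It thus suffices to show $\|E(A)+E(B)-E(A)\vee E(B)\|\le e^{-m\,d(A,B)}$, and this is exactly Theorem~\ref{thm:MassGapClustering}~\ref{item:NormBound2} once its hypothesis is verified with $\delta=d(A,B)$. If $d(A,B)=0$ the bound $\|X\|\le1$ from Proposition~\ref{proposition:X}~\ref{item:Xnorm} suffices, so we may assume $d(A,B)>0$.

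Verifying the hypothesis is the main step. Take $V(t)=U(te_0)=e^{itQ}$ with $Q\ge m$ by the mass gap assumption. We need $E(A)\le V(t)E(B)'V(t)^*$ for $|t|\le\delta$, i.e.\ $V(-t)E(A)V(t)\le E(B)'$. By covariance, $V(-t)E(A)V(-t)^*=E_U(\cc(A)-te_0)$. By \eqref{eq:geometry}, $\cc(A)-te_0\subset\cc(B)'$ holds precisely when $B\subset[A+|t|\bB]^\perp$. Since $\bB$ is the open unit ball, $A+|t|\bB$ misses $B$ whenever $|t|\le d(A,B)$. Separation preservation of the spacetime observable $E_U$ then gives $E_U(\cc(A)-te_0)\le E_U(\cc(B))'$, as required.

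The main obstacle is matching conventions in this step. One must check that the sign of $t$ and the placement of $V$ versus $V^*$ agree with the hypothesis of Theorem~\ref{thm:MassGapClustering}; this is harmless because $|t|$ ranges symmetrically over $[-\delta,\delta]$. One must also check that the endpoint $|t|=\delta$ is admissible, which is where the openness of $\bB$ is used; otherwise a limiting argument in $\delta$ would be needed, since the bound is continuous in $\delta$. If the mass gap is only assumed for the full energy-momentum spectrum in some other form, it should be translated into $Q\ge m$ for the Hamiltonian, as indicated in the footnote to the statement.
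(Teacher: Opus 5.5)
Your proof is correct and follows the paper's argument. You identify $E(A\vee B)=E(A)\vee E(B)$ via the Boolean structure, verify the hypothesis of Theorem~\ref{thm:MassGapClustering} for the time translations using covariance, \eqref{eq:geometry} and separation preservation, and then evaluate the norm bound in $\real\langle\psi,\cdot\,\psi\rangle$. You also add two small refinements the paper glosses over: the openness of $\bB$ makes the endpoint $|t|=d(A,B)$ admissible, whereas the paper states the separation only for $|t|<d(A,B)$ and strictly needs a limit $\delta\nearrow d(A,B)$; and the case $d(A,B)=0$ is covered by $\|X\|\le1$.
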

\begin{proof}
    Normalization and monotonicity follow because $E_U$ is an observable, and the range property is a direct consequence. For showing approximate additivity, we note that $\cc(A)+te_0$ lies non-timelike to $\cc(B)$ for $|t|<d(A,B)$. Hence covariance and separation preservation imply $E_U(\cc(A))\leq U(t)E_U(\cc(B))'\,U(-t)$ for $|t|<d(A,B)$, where $U(t)$ denotes the time translations, and Theorem~\ref{thm:MassGapClustering} implies $\|E_U(\cc(A\vee B))-E_U(\cc(A))-E_U(\cc(B))\|\leq e^{-md(A,B)}$. Evaluation in $\real\langle\psi,\,\cdot\,\psi\rangle$ gives the claimed result.
\end{proof}

For spatial localization regions $A,B$ that are a distance $d>0$ apart such that $md\gg1$ (i.e. $d$ is large compared to the Compton wavelength corresponding to mass $m$), this result says that $\mu_{U,\psi}$ is additive up to a negligible error. On scales relevant for quantum field theoretic effects, additivity however fails.

It is also meaningful to compare the modular localization observable with the Newton-Wigner localization observable (Example~\ref{example:NewtonWigner}). As such a comparison is essentially known, we only sketch the ideas: Given $\psi=\psi_++i\om\psi_-\in E_{U_{m,0}}(B)\Hil$, where $B$ is a ball of radius $R>0$, we know $P^{\NW}_{U_{m,0}}(B)^\perp\psi\neq0$ by the incompatibility of the two localization observables (Proposition~\ref{prop:NWandEareIncompatible}). However, if the ball $B$ entering the Newton-Wigner projection is enlarged to a ball $B_\delta$ of radius $R+\delta>R$, then $\|P^{\NW}_{U_{m,0}}(B_\delta)^\perp\psi\|$ can be shown to decay exponentially as $\delta\to\infty$. Hence $\psi$ is not sharply Newton-Wigner localized in $B$, but up to an exponentially small error in $\delta$, it is Newton-Wigner localized in $B_\delta$. The proof of this can be carried out by estimating the spread of compactly supported functions induced by the  operators $(m^2-D)^{\pm1/2}$ (with $D$ the Laplace operator), see Proposition~4.3 and Theorem~4.5 of \cite{LechnerSanders:2016}.

These observations show that the modular localization observables $E_U(B)$ differs from the Newton-Wigner localization observable $P^{\NW}_U(B_\delta)$ only by errors that can be made arbitrily small by suitably extending the localization region.

\medskip

As the first moment of the POVM localization observable introduced by Moretti \cite{Moretti:2023} and its generalizations \cite{DeRosaMoretti:2024} have the Newton-Wigner operators as their first moment, this observation suggests also an at least approximate relation between the approach presented here and localization observables based on POVMs. In this respect, it is interesting to note that fuzzy measures like $\mu_{U,\psi}$ define a Choquet integral and are frequently used in quantum theory (see, for example, \cite{Sorkin:1994,Cerreia-VioglioMaccheroniMarinacciMontrucchio:2018,CarcassiAidalaThrien:2026}). An investigation of these possible connections is left for future work.

\appendix

\section{Standard subspaces}\label{appendix:StandardSubspaces}

In this appendix we gather some background material on standard subspaces that we use in Sections~\ref{subsection:CausalLocalizationObservablesPRH} and Section~\ref{section:SpacetimeLocObinPRH}. For more detailed information, we refer to \cite{Longo:2008_2} and \cite{CorreadaSilvaLechnerLongo:2026}.

By definition, a standard subspace of a complex Hilbert space $\Hil$ is a closed real subspace $H\subset\Hil$ that is separating in the sense $H\cap iH=\{0\}$ and cyclic in the sense that $H+iH$ is dense in $\Hil$. The symplectic complement $H'$ of $H$ (see Example~\ref{example:PRH}) is standard if and only if $H$ is standard. The set $\Std(\Hil)$ of all standard subspaces of $\Hil$ is not a lattice because the intersection of two standard subspaces can easily fail to be standard, e.g. the intersection of $H$ and $iH$ is $\{0\}$.

The {\em Tomita operator} of a standard subspace $H$ is the antilinear involution
\begin{align}
    S_H:H+iH\to H+iH,\qquad h_1+ih_2\mapsto h_1-ih_2,
\end{align}
which is closed because $H$ is closed. $H$ can be recovered from $S_H$ as its fixed point space $H=\ker(S_H-1)$. The polar decomposition of $S_H$ is denoted $S_H=J_H\Delta_H^{1/2}$, where $J_H$ is an antiunitary involution and $\Delta_H$ a positive operator with $\ker\Delta_H=\{0\}$.

We now list a few facts about standard subspaces that are used in the body of the article.

\begin{theorem}\label{theorem:Std}
    Let $H\in\Std(\Hil)$.
    \begin{enumerate}
        \item\label{item:Tomita} {\bfseries\em (Tomita-Takesaki Theorem for standard subspaces)}
        \begin{align}
        J_HH=H',\qquad \Delta_H^{it}H=H,\quad t\in\Rl.
        \end{align}

        \item\label{item:AUH} {\bfseries\em ((Anti-)unitary images)} Let $U\in\AU(\Hil)$ be a unitary or antiunitary operator. Then $UH\in\Std(\Hil)$, and
        \begin{align}
            J_{UH}=UJ_HU^*,\qquad
            \Delta_{UH}=U\Delta_HU^*.
        \end{align}

        \item\label{item:Borchers} {\bfseries\em (Borchers' Theorem)} Let $(V(s))_{s\in\Rl}$ be a unitary one-parameter group with generator $P$ such that $\pm P\geq0$ and $V(s)H\subset H$, $s\geq0$. Then
        \begin{align}
            J_HV(s)J_H=V(-s),\qquad \Delta_H^{it}V(s)\Delta_H^{-it}=V(e^{\mp 2\pi t}s),\quad t,s\in\Rl.
        \end{align}
    \end{enumerate}
\end{theorem}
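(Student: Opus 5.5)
Item~\ref{item:Tomita} will be reduced to the identity $S_{H'}=S_H^*$, where $S_H^*$ is the adjoint with respect to the complex inner product. For $\xi\in\Hil$ and $h_1,h_2\in H$ one has $\langle\xi,S_H(h_1+ih_2)\rangle=\langle\xi,h_1\rangle-i\langle\xi,h_2\rangle$. This equals $\overline{\langle \xi,h_1+ih_2\rangle}=\langle h_1,\xi\rangle-i\langle h_2,\xi\rangle$ for all $h_1,h_2$ precisely when $\langle\xi,h\rangle\in\Rl$ for all $h\in H$, i.e.\ when $\xi\in H'$. So $H'\subset\ker(S_H^*-1)$, and the reverse inclusion follows by the same computation. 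Hence $H'=\ker(S_H^*-1)$, and since $H'$ is standard, $S_{H'}=S_H^*$.

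Next I use $S_H^2=1$ on its domain. This gives $S_H=S_H^{-1}$, i.e.\ $J_H\Delta_H^{1/2}=\Delta_H^{-1/2}J_H^*$. Uniqueness of the polar decomposition then yields $J_H^2=1$ and $J_H\Delta_HJ_H=\Delta_H^{-1}$. Because $J_H$ is antilinear, the latter implies $J_H\Delta_H^{it}J_H=\Delta_H^{it}$.

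Both claims now follow:
\begin{itemize}
\item If $h=J_H\Delta_H^{1/2}h$, then $S_H^*J_Hh=\Delta_H^{1/2}J_HJ_Hh=\Delta_H^{1/2}h=J_Hh$. Thus $J_HH\subset H'$, and equality follows by symmetry ($J_{H'}=J_H$).
\item $\Delta_H^{it}$ commutes with both $J_H$ and $\Delta_H^{1/2}$, so it commutes with $S_H$ and therefore preserves $H=\ker(S_H-1)$.
\end{itemize}

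For item~\ref{item:AUH}, I first check directly that $UH$ is cyclic and separating whenever $U\in\AU(\Hil)$. From the definition, $S_{UH}=US_HU^*$. Writing $US_HU^*=(UJ_HU^*)(U\Delta_H^{1/2}U^*)$, the operator $U\Delta_H^{1/2}U^*$ is positive and $UJ_HU^*$ is antiunitary, also when $U$ is antiunitary. Uniqueness of the polar decomposition then gives the claimed formulas.

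Item~\ref{item:Borchers} is the main obstacle, and I would follow Borchers' original analyticity argument; first take $P\geq0$.
\begin{enumerate}
\item For $h\in H$ and $s\geq0$ we have $V(s)h\in H\subset\mathrm{dom}\,\Delta_H^{1/2}$. Hence $t\mapsto\Delta_H^{-it}V(s)h$ continues analytically to the strip $-\tfrac12<\imag t<0$, with boundary value $J_HV(s)h$ at $\imag t=-\tfrac12$.
\item Positivity of $P$ gives an analytic continuation of $s\mapsto V(s)$ to the upper half plane.
\item For $\xi$, $h$ I form the matrix elements of $\Delta_H^{-it}V(e^{2\pi t}s)\Delta_H^{it}$ and of $V(s)$. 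Combining the two analyticity domains, I view them as one analytic function of $z=e^{2\pi t}s$ on a cut plane. Its boundary values on $\Rl_+$ and $\Rl_-$ match via the $J_H$-relation from step~1.
\item Boundedness then forces, by a Liouville-type argument, $\Delta_H^{-it}V(s)\Delta_H^{it}=V(e^{2\pi t}s)$. Evaluating the analytic continuation at $t=-i/2$ yields $J_HV(s)J_H=V(-s)$.
\end{enumerate}

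The delicate points are controlling domains and justifying the continuation in the joint variable $(t,s)$. The negative-generator case follows by applying the result to $V(-s)$ and the standard subspace $H'$: by item~\ref{item:Tomita}, $V(-s)H'\subset H'$ for $s\geq0$, and $\Delta_{H'}=\Delta_H^{-1}$, which produces the sign flip $e^{\mp2\pi t}$.
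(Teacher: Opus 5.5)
The paper does not prove this theorem. It quotes (a)–(c) as standard facts from \cite{Longo:2008_2} and \cite{CorreadaSilvaLechnerLongo:2026}, so your proposal has to stand on its own.

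Items (a) and (b) are the standard arguments and are fine, up to one domain point in (a). The identity $\ker(S_H^*-1)=H'$ alone does not give $S_{H'}=S_H^*$; you also need $\mathrm{dom}(S_H^*)=H'+iH'$. This holds because $S_H=S_H^{-1}$ implies $S_H^*=(S_H^*)^{-1}$. Hence every $\xi\in\mathrm{dom}(S_H^*)$ splits as $\frac12(\xi+S_H^*\xi)+i\cdot\frac1{2i}(\xi-S_H^*\xi)$, a sum of two fixed vectors of $S_H^*$.

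Item (c) has a genuine gap. Steps 3–4 never specify a single analytic function. The mechanism you invoke, boundary values ``matching'' through $\Delta_H^{1/2}V(s)h=J_HV(s)h$, does not force anything to be constant. There are also sign and domain problems:
\begin{itemize}
\item For $\psi\in H$, the function $t\mapsto\Delta_H^{-it}\psi$ continues to $0<\imag t<\frac12$, not to the lower strip.
\item The sandwiched operator must be $V(e^{-2\pi t}s)$, not $V(e^{2\pi t}s)$. Otherwise its argument runs through the lower half plane, where $P\geq0$ gives no bound.
\item The left vector cannot be arbitrary; it must lie in $\mathrm{dom}\,\Delta_H^{-1/2}$.
\end{itemize}

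The missing idea is that the relevant function is \emph{real on both edges of the strip}. For $h\in H$, $k\in H'$ and $s>0$, set $\Phi(z):=\langle\Delta_H^{i\bar z}k,\,V(e^{-2\pi z}s)\,\Delta_H^{iz}h\rangle$ for $-\frac12\leq\imag z\leq0$. This $\Phi$ is bounded, continuous, and analytic in the interior, because:
\begin{itemize}
\item $h\in\mathrm{dom}\,\Delta_H^{1/2}$;
\item $k\in\mathrm{dom}\,\Delta_H^{-1/2}$, since $\Delta_{H'}=\Delta_H^{-1}$;
\item $e^{-2\pi z}s$ stays in the closed upper half plane.
\end{itemize}
In the variable $w=e^{-2\pi z}s$, the strip is the upper half plane with edges $\Rl_\pm$, which is presumably the picture you had in mind.

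Check the two edges. For real $t$, $\Phi(t)\in\Rl$ because $\Delta_H^{-it}V(r)\Delta_H^{it}H\subset H$ for $r\geq0$. At $z=t-\frac i2$, using $\Delta_H^{1/2}h=J_Hh$ and $\Delta_H^{-1/2}k=J_Hk$, you get $\Phi=\langle J_H\Delta_H^{it}k,\,V(-r)J_H\Delta_H^{it}h\rangle$ with $r=e^{-2\pi t}s$. This is real because $J_H\Delta_H^{it}k\in H$ and $V(-r)H'\subset H'$. That dual inclusion follows from $V(r)H\subset H$ and $V(r)^*=V(-r)$, and it is the ingredient your sketch never uses.

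Schwarz reflection across both edges turns $\Phi$ into a bounded entire $i$-periodic function, so $\Phi$ is constant by Liouville. Then:
\begin{itemize}
\item $\Phi(t)=\Phi(0)$ gives $\Delta_H^{-it}V(e^{-2\pi t}s)\Delta_H^{it}=V(s)$;
\item $\Phi(-\frac i2)=\Phi(0)$ gives $J_HV(-s)J_H=V(s)$.
\end{itemize}
Both follow after extending by (anti)linearity from $H$, $H'$ to the dense spaces $H+iH$, $H'+iH'$.

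With this in place, your reduction of the case $P\leq0$ to the pair $(H',V(-\,\cdot\,))$ works. Note that $V(-s)H'\subset H'$ follows directly from the definition of $H'$, not from item (a).
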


\section{A Gleason type theorem for symplectic spaces}\label{section:SymplecticGleason}

In this section, we consider a real vector space $V$ with symplectic form $\sigma$ (i.e. a bilinear antisymmetric non-degenerate map $\sigma:V\times V\to\Rl$), and its lattice~$\CL(V)$ of subspaces with symplectic complementation $H\mapsto H'$. As a special case, this also covers the case where $V=\Hil$ is a complex Hilbert space, considered as a real Hilbert space, and symplectic form $\sigma=\imag\langle\,\cdot\,,\,\cdot\,\rangle$. However, none of the additional structure present in this case (complex structure, scalar product, norm) will be needed here. Note that in this purely algebraic setting, $K\vee H=K+H$ (no closure), the symplectic complement needs not be an involution, but $K\subset K''$ holds for any $K\in\CL(V)$, and de Morgan's Laws take the form $(K\vee H)'=K'\cap H'$ and $(K\cap H)'\supset K'\vee H'$.

The aim is to describe the (finitely additive) probability measures on $\CL(V)$ in the sense of Definition~\ref{def:ProbabilityMeasureOnInvolutionLattice}. In preparation, we recall the following facts \cite{deGosson:2006} and introduce some terminology.

\begin{enumerate}[(S1)]
    \item\label{item:Symplectic1} A subspace $H\subset V$ is called {\em symplectic} if $H\cap H'=0$ (this is called {\em factorial} in the standard subspace setting). $H\subset V$ is symplectic if and only if $\sigma|_{H\times H}$ is non-degenerate. Any symplectic space has infinite or finite even dimension. A {\em symplectic plane} is a two-dimensional symplectic subspace $H\in\CL(V)$.

    \item\label{item:Symplectic2} If $\dim H=2n$ for some $n\in\Nl$, then there exist symplectic planes $H_1,\ldots,H_n$ such that
    \begin{align}
        H=H_1\oplus_\sigma\ldots\oplus_\sigma H_n,
    \end{align}
    where $\oplus_\sigma$ denotes an algebraic direct sum that is symplectic, i.e. $H_i\subset H_j'$ for $i\neq j$.

    \item\label{item:Symplectic3} For a general subspace $H\subset V$, the radical of $\sigma|_{H\times H}$ is $Z(H):=H\cap H'$, the center of $H$. There exist (typically non-unique) subspaces $K\subset H$ such that
    \begin{align}
        H
        =
        Z(H)\oplus_\sigma K,
    \end{align}
    such spaces $K$ are necessarily symplectic.

    For $\dim H<\infty$, the {\em symplectic rank} of $H$ is defined as
    \begin{align}
        \rank_\sigma(H)
        :=
        \dim(H)-\dim Z(H)
        =
        \dim(K)
        .
    \end{align}
    As $K$ is symplectic, the symplectic rank of $H$ is always even. Furthermore, $\rank_\sigma$ is additive over symplectic orthogonal direct sums, i.e. $\rank_\sigma(H+K)=\rank_\sigma(H)+\rank_\sigma(K)$ if $H,K$ are finite-dimensional and $H\subset K'$. Furthermore, $K\subset H\Rightarrow\rank_\sigma(K)\leq\rank_\sigma(H)$, i.e. the symplectic rank is monotone.

    In case $V=\Hil$ is a complex Hilbert space with induced symplectic form $\sigma=\imag\langle\,\cdot\,,\,\cdot\,\rangle$, a distinguished subspace of $H$ is the {\em factorial part of} $H$, defined as
    \begin{align}\label{eq:FactorialPart}
        F(H):=Z(H)^\perpr\cap H,
    \end{align}
    and $\rank_\sigma(H)=\dim_{\Rl} F(H)$ holds.
\end{enumerate}
The facts \ref{item:Symplectic1}--\ref{item:Symplectic3} will be used without further mentioning in the following.

\begin{lemma}\label{lemma:Gleason1}
    Let $V$ be a symplectic space, $\mu:\CL(V)\to[0,1]$ a finitely additive probability measure, and $H\in\CL(V)$.
    \begin{enumerate}
        \item\label{item:muOnlySeesFactorialPart} For any symplectic decomposition $H=Z(H)\oplus_\sigma K$, we have $\mu(H)=\mu(K)$.
        \item\label{item:FactorDecomposition} If $H$ is symplectic and decomposed as $H=H_1\oplus_\sigma\ldots\oplus_\sigma H_n$ into symplectic planes $H_k$, then $\mu(H)=\sum_{k=1}^n\mu(H_k)$.
    \end{enumerate}
\end{lemma}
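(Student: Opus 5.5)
Both parts follow from the additivity property of $\mu$ (Definition~\ref{def:ProbabilityMeasureOnInvolutionLattice}~\ref{item:MeasureAdditivity}) applied to finite separated families, once we check that the decompositions involved are separated and that the join in $\CL(V)$ is the algebraic sum.

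For part~\ref{item:muOnlySeesFactorialPart}, write $H=Z(H)\oplus_\sigma K$. The summands are separated: $K\subset H$ and $Z(H)\subset H'$ give $Z(H)\subset K'$. In $\CL(V)$ the join is the algebraic sum, so $H=Z(H)\vee K$. Finite additivity then yields $\mu(H)=\mu(Z(H))+\mu(K)$, and it remains to show $\mu(Z(H))=0$.

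The key observation is that $Z:=Z(H)$ is isotropic, i.e. $Z\subset Z'$, so $Z$ is separated from itself. I would exploit this as follows. Assume first $Z\neq 0$ and split $Z=Z_1\oplus Z_2$ into complementary subspaces. Each $Z_i\subset Z\subset Z'\subset Z_j'$, so $Z_1$ and $Z_2$ are separated, and $\mu(Z)=\mu(Z_1)+\mu(Z_2)$. This alone does not force $\mu(Z)=0$, so I need a sharper trick. The separated set $\CQ_0=\{Z,\{0\}\}$ only gives trivialities, since sets do not carry multiplicity. I would therefore use a second isotropic space of the same kind. Choose a line $L\subset Z$. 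For distinct lines $L_1,L_2\subset Z$ with $L_1+L_2=P$ a plane, additivity gives $\mu(P)=\mu(L_1)+\mu(L_2)$. Taking three distinct lines $L_1,L_2,L_3$ in the isotropic plane $P$, all pairwise separated and pairwise spanning $P$, we obtain
\begin{align*}
\mu(L_1)+\mu(L_2)=\mu(L_1)+\mu(L_3)=\mu(L_2)+\mu(L_3)=\mu(P).
\end{align*}
This forces all $\mu(L_i)$ to be equal, say $a$, with $\mu(P)=2a$. Applying additivity to the separated three-element set $\{L_1,L_2,L_3\}$, whose join is $P$, gives $\mu(P)=3a$. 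Hence $a=0$ and $\mu(P)=0$.

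If $\dim Z=1$ I cannot find a plane inside $Z$, so I would argue differently. For a line $L$, the subspace $L'$ is a hyperplane containing $L$. When $\dim V\geq 4$ I expect to find an isotropic plane $P\supset L$ inside $L'$, and monotonicity then gives $\mu(L)\leq\mu(P)=0$. For a general $Z$, monotonicity together with the vanishing on lines and planes gives $\mu(Z)=0$ when $\dim Z<\infty$, by decomposing $Z$ into a finite separated sum of lines. The infinite-dimensional case is harder: an isotropic $Z$ of infinite dimension need not be a finite join of lines. Here I would split $Z$ into three infinite-dimensional, pairwise complementary, pairwise-spanning subspaces $Z_1,Z_2,Z_3$, for instance $Z_1=A\oplus 0$, $Z_2=0\oplus A$ and $Z_3$ the diagonal in $Z\cong A\oplus A$. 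The same three-way argument then gives $\mu(Z)=0$ directly, and in fact this argument handles every $Z\neq 0$ of even or infinite dimension. Odd dimension reduces to a line plus an even-dimensional part.

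The main obstacle is exactly this isolated-line case, $\dim Z=1$, in small $V$. If $\dim V=2$, a line is isotropic but there is no room for an isotropic plane. I would handle it by noting that $V$ itself is a plane, $V=L_1\vee L_2$ for any two distinct lines, and all lines are mutually separated because $\sigma$ is nonzero only on pairs that span the plane. I need to recheck that last claim, since in $\dim 2$ two independent lines are \emph{not} separated, so this case genuinely needs separate care or may be excluded by the appendix's dimension assumptions.

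Part~\ref{item:FactorDecomposition} is immediate: the planes $H_k$ are pairwise separated by the definition of $\oplus_\sigma$, and their join is the algebraic sum $H$, so finite additivity gives $\mu(H)=\sum_k\mu(H_k)$.
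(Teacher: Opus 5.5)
There is a genuine gap in part (a). You leave open the case $\dim V=2$ with $H$ a line, where $Z(H)=H$, $K=\{0\}$ and the claim is $\mu(H)=0$. Your fallback there fails, as you noticed yourself: two distinct lines $L_1\neq L_2$ in a symplectic plane satisfy $L_1\not\subset L_2'=L_2$, so they are not separated.

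The paper proves $\mu(Z(H))=0$ in one line, for every $Z(H)$. Since $Z(H)\subset Z(H)'$, the element $Z(H)$ is separated from itself. Applying additivity to the pair $(Z(H),Z(H))$ gives $\mu(Z(H))=\mu(Z(H)\vee Z(H))=2\mu(Z(H))$. This reads the additivity axiom as applying to separated families, with repetitions allowed.

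Your objection that a set carries no multiplicity is correct for the literal wording. Under that reading, however, the statement is false for $\dim V=2$. There the only separated pairs of distinct elements involve $\{0\}$. Hence every monotone map with $\mu(\{0\})=0$, $\mu(V)=1$ and arbitrary values in $[0,1]$ on lines is a finitely additive probability measure, and a positive value on a line $H$ contradicts $\mu(H)=\mu(\{0\})$. So no argument can close the missing case under your reading. You must adopt the family reading, and then the self-separation identity makes your three-lines construction unnecessary.

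For $\dim V\geq 4$, which covers the infinite-dimensional case behind the Hilbert-space Gleason theorem, your argument is correct. It is also more robust than the paper's, since it proves (a) even under the strict subset reading:

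\begin{enumerate}
\item The three-lines argument in an isotropic plane forces $\mu$ to vanish on that plane and its lines, with no multiplicity needed.
\item Every line $L$ lies in the isotropic plane $L+\Rl w$ for any $w\in L'\setminus L$. Such a $w$ exists because $\dim L'\geq 3$.
\item A finite-dimensional isotropic $Z$ is then a finite separated join of lines.
\item An infinite-dimensional $Z\cong A\oplus A$ is handled by your three-way split.
\end{enumerate}

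Part (b) is the same as in the paper.
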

\begin{proof}
    \ref{item:muOnlySeesFactorialPart} The decomposition $H=Z(H)\oplus_\sigma K$ of $H\in\CL(V)$ is symplectic, so $Z(H)\subset K'$. Hence additivity of $\mu$ implies $\mu(H)=\mu(Z(H))+\mu(K)$. Since $Z(H)\subset Z(H)'$, we have $\mu(Z(H))=\mu(Z(H)\vee Z(H))=\mu(Z(H))+\mu(Z(H))$, and therefore $\mu(Z(H))=0$.

    \ref{item:FactorDecomposition} follows immediately from additivity of $\mu$.
\end{proof}

One consequence of this lemma is that any probability measure on $\CL(V)$ is fixed by its restriction to the subset of all symplectic planes in $V$. We next show that this restriction is constant.

\begin{lemma}\label{lemma:Gleason2}
    Let $V$ be a symplectic space, $\mu$ a finitely additive probability measure on~$\CL(V)$, and $K,H\in\CL(V)$ symplectic planes. Then $\mu(K)=\mu(H)$.
\end{lemma}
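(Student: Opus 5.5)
The plan is to compare the two symplectic planes through an auxiliary plane that is symplectically orthogonal to both. All of this takes place inside a finite-dimensional symplectic subspace, where the probability measure can be computed.

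\emph{Step 1: a finite-dimensional symplectic subspace containing $K$ and $H$.}
Let $S_0:=K+H$, which has dimension at most $4$. Decompose $S_0=Z(S_0)\oplus_\sigma K_0$ as in (S3). Each vector $z$ of a basis of $Z(S_0)$ pairs trivially with $S_0$. By non-degeneracy of $\sigma$ on $V$, we can pick $w\in V$ with $\sigma(z,w)=1$ and $w$ vanishing on the other basis vectors of $Z(S_0)$ and on $K_0$; this is possible by linear algebra, since the functionals $\sigma(v,\cdot)$ for $v$ in a finite-dimensional subspace are linearly independent. Adjoining these $w$'s gives a finite-dimensional symplectic subspace $S\supset K+H$, of dimension $2n\leq 8$.

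\emph{Step 2: reduction to two planes that are not symplectically orthogonal.}
If $\dim V>2n$, or more simply if $V\neq S$, I enlarge $S$ by one further symplectic plane. Then $S=S\cap$ (everything), and $S'$ is symplectic and nonzero when $V$ is larger; the degenerate case $V=S$ will be handled separately.

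The key step I will aim for is to find a symplectic plane $P\subset S$ with $P\subset K'\cap H'$. Since $S$ is symplectic, $K'\cap S$ is a symplectic complement of $K$ in $S$, of dimension $2n-2$. Likewise $K'\cap H'\cap S$ has dimension at least $2n-4$. If $2n\geq 6$, this leaves room, and I need $P$ to be a symplectic plane inside $K'\cap H'\cap S$. That space may be degenerate, but its symplectic rank is at least $\dim-2\cdot\dim Z$. I expect to control this by choosing $S$ large enough, e.g.\ by adding two more symplectic planes orthogonal to $K+H$ so that $\dim S\geq 12$. The rank estimate then gives a nonzero symplectic part, which contains a symplectic plane $P$. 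Making this dimension count airtight is the main obstacle.

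\emph{Step 3: swapping planes.}
Given such a $P$, the sums $K\oplus_\sigma P$ and $H\oplus_\sigma P$ are 4-dimensional symplectic subspaces. The goal is to show $\mu(K)=\mu(P)$, and by symmetry $\mu(H)=\mu(P)$. It therefore suffices to prove the lemma for two symplectically orthogonal planes $K\subset P'$, working inside the 4-dimensional symplectic space $T=K\oplus_\sigma P$.

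For this I use a \emph{rotation} trick. Take symplectic bases $(k_1,k_2)$ of $K$ and $(p_1,p_2)$ of $P$. Define the planes
\[
L_1=\mathrm{span}\{k_1+p_1,\;k_2+p_2\},\qquad L_2=\mathrm{span}\{k_1-p_1,\;k_2-p_2\}.
\]
They satisfy $\sigma(k_1+p_1,k_2-p_2)=1-1=0$ and $\sigma(k_1+p_1,k_1-p_1)=0$, so $L_1\subset L_2'$, and $T=L_1\oplus_\sigma L_2$. Both $L_i$ are planes with $\sigma$-value $2$ on their bases, hence symplectic.

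Now consider the symplectic map $\Phi$ swapping $k_i\leftrightarrow p_i$. It fixes $L_1$ and maps $L_2$ onto itself. It does not transfer $\mu$ directly, because $\mu$ is not assumed invariant. So instead I will use additivity combinatorially. Take the one-parameter family
\[
L_\theta=\mathrm{span}\{\cos\theta\, k_1+\sin\theta\, p_1,\;\cos\theta\, k_2+\sin\theta\, p_2\}.
\]
Its complement inside $T$ is $L_{\theta+\pi/2}$. Hence $\mu(L_\theta)+\mu(L_{\theta+\pi/2})=\mu(T)$ for every $\theta$, and this holds in every choice of symplectic bases.

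Combining these additivity relations over enough different decompositions of $T$, mixing the pairing of vectors between planes, e.g.\ $\mathrm{span}\{k_1,p_2\}$-type planes after a symplectic change of basis, should force the function $\theta\mapsto\mu(L_\theta)$ to be constant. This is the Gleason-type core. I expect it to be the hardest part, and may need a finite combinatorial identity rather than continuity, since $\mu$ has no regularity.

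Once $\mu(K)=\mu(P)=\mu(H)$, the lemma follows.
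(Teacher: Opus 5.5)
Your proposal has a genuine gap in Step~3. That step carries the entire content of the lemma, and it is left open. After reducing to symplectically orthogonal planes $K\subset P'$ inside $T=K\oplus_\sigma P$, the only relations you obtain are $\mu(L_\theta)+\mu(L_{\theta+\pi/2})=\mu(T)$. No further \emph{decompositions of $T$} can add anything. If $T=A\vee B$ with $A\subset B'$, any $a\in A\cap A'$ is orthogonal to $A+B=T$ and hence zero, so $A,B$ are symplectic and $B=A'\cap T$. Every additivity relation for $T$ itself is therefore of the form $\mu(L)+\mu(L'\cap T)=\mu(T)$, for the fixed-point-free involution $L\mapsto L'\cap T$ on symplectic planes of $T$. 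Assigning an arbitrary value $a\in[0,\mu(T)]$ to one member of each pair and $\mu(T)-a$ to the other satisfies all of them. So the hoped-for combinatorial identity forcing $\theta\mapsto\mu(L_\theta)$ to be constant cannot come from this source.

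Step~2 also fails in low dimensions. For $\dim V=4$ and $K\neq H$, the space $K'\cap H'=(K+H)'$ has dimension at most one; for $\dim V=6$ it can be an isotropic plane. In both cases no auxiliary $P$ exists. The lemma is needed for finite-dimensional $V$ in Theorem~\ref{theorem:SymplecticGleasonTheorem}~\ref{item:MeasureFiniteDimension}, and the separate treatment you promise is never given. (In infinite dimensions Step~2 is easily repaired by taking any symplectic plane in $S'$.)

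The missing idea is to use \emph{degenerate} subspaces, where additivity has real force. By Lemma~\ref{lemma:Gleason1}~\ref{item:muOnlySeesFactorialPart}, whose proof shows that $\mu$ vanishes on every center, $\mu(L)=\mu(M)$ whenever $L=Z(L)\oplus_\sigma M$. If two symplectic planes $K,K_1$ meet in a line, then $L:=K+K_1$ is three-dimensional with one-dimensional center. Hence $L=Z(L)\oplus_\sigma K=Z(L)\oplus_\sigma K_1$, and so $\mu(K)=\mu(L)=\mu(K_1)$.

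What remains is pure linear algebra: connect $K$ and $H$ by a chain of symplectic planes with consecutive one-dimensional intersections.
If $K\cap H=\{0\}$ and $\sigma(h,k)\neq0$ for some $h\in H$, $k\in K$, the single plane $\newspan\{h,k\}$ does this.
If $K\subset H'$, take symplectic bases $\{e_1,f_1\}$ of $K$ and $\{e_2,f_2\}$ of $H$, and use $\newspan\{e_1+e_2,f_1\}$ and $\newspan\{e_1+e_2,f_2\}$.

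This is the paper's argument. It works in every dimension and makes your Steps~1 and~2 unnecessary. In particular, it settles your orthogonal case $K\subset P'$ directly, without any Gleason-type analysis of $\theta\mapsto\mu(L_\theta)$.
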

\begin{proof}
    As $K$ and $H$ are both two-dimensional, we have $\dim(K\cap H)\in\{0,1,2\}$. For $\dim(K\cap H)=2$, we have $K=H$ and the statement is trivial.

    \noindent{\bfseries Case $\mathbf{\dim(K\cap H)=1}$.} In this case, $L:=K\vee H$ has dimension $\dim L=3$ with symplectic rank (even) $\rank_\sigma(L)=3-\dim(L\cap L')\in\{0,2\}$. As $K$ is a symplectic space, $L\neq L'$, so we have $\dim(L\cap L')=1$. Since $\dim K=2$ and
    \begin{align*}
        L\cap L'\cap K=(K+H)\cap K'\cap H'\cap K=0,
    \end{align*}
    we have a decomposition $L=Z(L)\oplus_\sigma K$. By symmetry in $H$ and $K$, we also have $L=Z(L)\oplus_\sigma H$. Now Lemma~\ref{lemma:Gleason1}~\ref{item:muOnlySeesFactorialPart} yields $\mu(K)=\mu(L)=\mu(H)$.

    \noindent{\bfseries Case $\mathbf{\dim(K\cap H)=0}$.} In this case, we will construct symplectic planes $K_1,\ldots,K_n$ (actually $n\in\{1,2\}$ will be sufficient) such that all the intersections $K\cap K_1$, $K_j\cap K_{j+1}$, and $K_n\cap H$ are one-dimensional. Then the proof for the case of a one-dimensional intersection shows that $\mu$ is constant along this chain, and in particular $\mu(K)=\mu(H)$.

    To construct these interpolating planes, we consider two subcases, $K\not\subset H'$ and $K\subset H'$. In case $K\not\subset H'$, there exists $h\in H$ and $k\in K$ such that $\sigma(h,k)\neq0$. We consider $K_1:=\Rl\text{-}\newspan\{h,k\}$. By construction, $K_1$ is a symplectic plane. We have $k\in K_1\cap K$ and $K_1\neq K$ (otherwise $h\in K$, a contradiction to $K\cap H=\{0\}$), so $K$ and $K_1$ are symplectic planes with one-dimensional intersection. In complete analogy, $K_1\cap H$ is one-dimensional, so $\mu(K)=\mu(H)$ follows.

    The other subcase is $K\subset H'$. In this case, the four-dimensional space $L:=K\vee H$ is symplectic: Let $l=k+h\in K+H=L$ (with $k\in K$, $h\in H$) also lie in $L'\subset H'$, i.e. $k+h=h'$ for some $h'\in H'$. As $K\subset H'$ and $H\cap H'=0$, this implies $h=0$. So we have shown $L\cap L'=K\cap L'$, which implies $L\cap L'=K\cap K'\cap H'=0$, as claimed.

    We are thus concerned with a four-dimensional symplectic space and two symplectic planes $K,H\subset L$. These subspaces satisfy $K=H'\cap L$: The inclusion $K\subset H'\cap L$ is clear from $K\subset H'$, and $H'\cap L=H'\cap(K+H)\subset K$ follows from $H\cap H'=0$. Working entirely in $L$, we therefore consider a symplectic plane $K$ and its symplectic complement $H=K'$ (in $L$). Then there exists a symplectic basis $\{e_1,e_2,f_1,f_2\}$ of $L$ (i.e. $\sigma(e_i,f_j)=\delta_{ij}$ and $\sigma(e_i,e_j)=\sigma(f_i,f_j)=0$) such that
    \begin{align}
        K=\newspan\{e_1,f_1\},\qquad H=\newspan\{e_2,f_2\}.
    \end{align}
    We now define $K_1:=\newspan\{e_1+e_2,f_1\}$ and $K_2:=\newspan\{e_1+e_2,f_2\}$. These are clearly both two-dimensional, and $\sigma$ is non-degenerate on them, so $K_1,K_2$ are also symplectic planes. To check the one-dimensional intersections, we note that $f_1\in K\cap K_1$, $e_1+e_2\in K_1\cap K_2$, and $f_2\in K_2\cap H$, so all intersections have dimension at least one. It is easy to check that $K\neq K_1\neq K_2\neq H$. For instance, $K=K_1$ would imply $e_2\in K$, in contradiction to $K\cap H=0$, and similarly for the other cases. This completes the construction of the interpolating chain, and the proof of the lemma.
\end{proof}

\begin{theorem}\label{theorem:SymplecticGleasonTheorem}
    Let $V$ be a real symplectic space.
    \begin{enumerate}
        \item\label{item:MeasureInfiniteDimension} If $V$ is infinite-dimensional, and $\mu$ a finitely additive probability measure on~$\CL(V)$, then $\mu(H)=0$ for all finite-dimensional subspaces $H\subset V$.
        \item\label{item:MeasureFiniteDimension} If $V$ is finite-dimensional, there exists a unique probability measure $\mu$ on $\CL(V)$, namely
        \begin{align}
            \mu(H)
            =
            \frac{\rank_\sigma(H)}{\dim(V)},
            \qquad
            H\in\CL(V).
        \end{align}
    \end{enumerate}
\end{theorem}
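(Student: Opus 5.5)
The strategy is to reduce everything to the common value of $\mu$ on symplectic planes. By Lemma~\ref{lemma:Gleason2}, any finitely additive probability measure $\mu$ on $\CL(V)$ takes a constant value $c\in[0,1]$ on all symplectic planes. For a finite-dimensional subspace $H\in\CL(V)$, choose a symplectic decomposition $H=Z(H)\oplus_\sigma K$ as in \ref{item:Symplectic3}. Then $K$ is symplectic of dimension $\rank_\sigma(H)$, so by \ref{item:Symplectic2} it splits into $\rank_\sigma(H)/2$ symplectically orthogonal planes. Lemma~\ref{lemma:Gleason1} then gives
\begin{align}
    \mu(H)=\mu(K)=\tfrac{c}{2}\,\rank_\sigma(H).
\end{align}
Both parts of the theorem follow by determining $c$.

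\textbf{Part (a).} Let $V$ be infinite-dimensional. I would show that $V$ contains symplectic subspaces of arbitrarily large finite dimension $2n$. To see this, pick $v_1$ and $w_1$ with $\sigma(v_1,w_1)\neq0$ (non-degeneracy), giving a symplectic plane $P_1$. Then use that $V=P_1\oplus_\sigma P_1'$ with $P_1'$ again symplectic and infinite-dimensional. The decomposition holds because for a finite-dimensional symplectic $P$ the map $v\mapsto\sigma(\cdot,v)|_P$ onto $P^*$ is surjective with kernel $P'$, and $P\cap P'=0$. Non-degeneracy of $\sigma$ on $P'$ follows from this decomposition. Iterating gives $n$ mutually separated planes $P_1,\dots,P_n$. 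By monotonicity and additivity, $1\geq\mu(P_1\vee\dots\vee P_n)=nc$ for all $n$, hence $c=0$. The formula above then yields $\mu(H)=0$ for every finite-dimensional $H$.

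\textbf{Part (b).} Let $\dim V=2N<\infty$. Here $V$ itself is symplectic with $\rank_\sigma(V)=2N$, so normalization $\mu(V)=1$ forces $cN=1$, i.e. $c=1/N$. Substituting gives $\mu(H)=\rank_\sigma(H)/\dim V$, which proves uniqueness.

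For existence, I would verify that this formula defines a probability measure.
\begin{itemize}
    \item Normalization is immediate: $\rank_\sigma(0)=0$ and $\rank_\sigma(V)=\dim V$.
    \item Monotonicity and finite additivity over separated families follow from the monotonicity and additivity of $\rank_\sigma$ recorded in \ref{item:Symplectic3}.
    \item $\sigma$-additivity reduces to finite additivity, since a separated family in a finite-dimensional space has only finitely many nonzero members. Indeed, nonzero members of positive rank are at most $N$ in number, and members of rank zero contribute $0$.
    \item One subtlety: separated families may contain repeated or self-separated (isotropic) elements, but these have rank zero and do not affect the count.
\end{itemize}

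The main obstacle I expect is in this existence step. Specifically, the additivity $\rank_\sigma(H+K)=\rank_\sigma(H)+\rank_\sigma(K)$ for $H\subset K'$ must hold even when $H\cap K\neq0$. Such overlap can only lie in the centers: for $x\in H\cap K$ we have $x\in H\cap K\subset H'\cap K'$, hence $x\in Z(H)\cap Z(K)$. Given this, I would decompose $H=Z(H)\oplus_\sigma F_H$ and $K=Z(K)\oplus_\sigma F_K$. The space $F_H\oplus_\sigma F_K$ is then symplectic, and $Z(H)+Z(K)$ is the center of $H+K$. This gives the additivity formula, and I would take it from \ref{item:Symplectic3} as already stated.
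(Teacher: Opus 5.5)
Your proposal follows the paper's proof essentially step for step. You use the constant value $c$ on symplectic planes and the formula $\mu(H)=\tfrac{c}{2}\,\mathrm{rank}_\sigma(H)$ for finite-dimensional $H$. You then get $c=0$ from arbitrarily many mutually separated planes, or $c=2/\dim V$ from normalization. Along the way you supply details the paper only asserts: the splitting $V=P\oplus_\sigma P'$, the additivity of the symplectic rank, and $\sigma$-additivity in finite dimension.

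One small correction concerns the $\sigma$-additivity step. A separated family in finite dimension can have infinitely many nonzero members, for example distinct lines inside a Lagrangian subspace. The right justification is therefore that the join is already attained by a finite subfamily containing all members of positive rank, and finite additivity applies to that subfamily.
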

\begin{proof}
    Let $\mu_0\in[0,1]$ denote the value that $\mu$ takes on all two-dimensional symplectic subspaces (Lemma~\ref{lemma:Gleason2}), and let $H\in\CL(V)$ be finite-dimensional. Then there exists $n\in\Nl_0$ and symplectic planes $F_1,\ldots,F_n$ such that
    \begin{align*}
        H
        =
        Z(H)\oplus_\sigma F(H)
        =
        Z(H)\oplus_\sigma F_1\oplus_\sigma\ldots\oplus_\sigma F_n.
    \end{align*}
    Lemma~\ref{lemma:Gleason1} yields $\mu(H)=\sum_{k=1}^n\mu(F_k)=n\mu_0=\mu_0\frac{\rank_\sigma(H)}{2}$. In case $V$ is finite-dimensional (with $\dim(V)$ necessarily even), we may apply this to $H=V$ and get $\mu_0=\frac{2}{\dim(V)}$ from $\rank_\sigma(V)=\dim(V)$ and the normalization condition $\mu(V)=1$. This results in the claimed unique map $\mu:H\mapsto\frac{\rank_\sigma(H)}{\dim(V)}$. Clearly $\mu$ maps $\CL(V)$ into $[0,1]$ and is normalized. Its additivity and monotonicity are known properties of the symplectic rank, as recalled before. This proves~\ref{item:MeasureFiniteDimension}.

    In case $V$ is infinite-dimensional, there exist infinitely many pairwise symplectic orthogonal symplectic planes $F_k$, $k\in\Nl$. Then the symplectic spaces $H_n=F_1\oplus_\sigma\ldots\oplus_\sigma F_n$ satisfy $\mu(H_n)=n\mu_0$. As this number must lie in $[0,1]$ for all~$n$, we obtain $\mu_0=0$, i.e. $\mu$ vanishes on all finite-dimensional subspaces of $V$. This proves~\ref{item:MeasureInfiniteDimension}.
\end{proof}

\subsection*{Acknowledgements}

GL would like to thank Carmine De Rosa for discussions about the POVM approach to causal localization observables. Financial support by the Deutsche Forschungsgemeinschaft DFG through the Heisenberg project ``Quantum Fields and Operator Algebras'' (LE 2222/3-1) (GL) and the DAAD project 57588368 for a visit to FAU (IR) is gratefully acknowledged.

\setlength{\biblabelsep}{0.4em}
\AtNextBibliography{\footnotesize}
\newrefcontext[sorting=anyt]
\setlength\bibitemsep{0.25\itemsep}
\printbibliography

\end{document}